\title{ON THE EXPRESSIVENESS OF PARIKH AUTOMATA AND RELATED MODELS}
\author{Micha\"el Cadilhac$^1$,
  Alain Finkel$^2$, and
  Pierre McKenzie$^1$}
\institute{$^1$~~DIRO,
  Universit\'e de Montr\'eal\\
  \texttt{\{cadilhac, mckenzie\}@iro.umontreal.ca}\\
  $^2$~~ LSV, CNRS \& \'Ecole Normale Sup\'erieure de Cachan\\
  \texttt{finkel@lsv.ens-cachan.fr}}
\def\revisionset! #1 !{\gdef\therevision{#1}}
\def\revision$LastChangedRevision:#1${\revisionset!#1!}
\begin{document}
\maketitle

\begin{abstract}
  \ifdraft
    \begin{tikzpicture}[remember picture,text=black!40,below,text
      centered,overlay]%
      \node (WD) [yshift=-1.5cm] at (current page.north)
      {\huge \bf\textsf{Working draft \today{}
          (rev. \therevision)}};
    \end{tikzpicture}%
    \fi The Parikh finite word automaton (PA) was introduced and studied by
    \KaR~\cite{klaedtke-ruess03}.  Natural variants of the PA arise from
    viewing a PA equivalently as an automaton that keeps a count of its
    transitions and semilinearly constrains their numbers.  Here we adopt
    this view and define the affine PA (APA), that extends the PA by having
    each transition induce an affine transformation on the PA registers, and
    the PA on letters (LPA), that restricts the PA by forcing any two
    transitions on same letter to affect the registers equally.  Then we
    report on the expressiveness, closure, and decidability properties of
    such PA variants.  We note that deterministic PA are strictly weaker than
    deterministic reversal-bounded counter machines.  We develop
    pumping-style lemmas and identify an explicit PA language recognized by
    no deterministic PA.  Our findings and the resulting overall picture are
    tabulated in our concluding section.
\end{abstract}
\thispagestyle{empty}


\section{Introduction}

Adding features to finite automata in order to capture situations beyond
regularity has been fruitful to many areas of research, in particular model
checking and complexity theory below NC$^2$ (e.g.,~\cite{klarlund89,str94}).
One such finite automaton extension is the \emph{Parikh automaton} (PA): A
PA~\cite{klaedtke-ruess03} is a pair $(A, C)$ where $C$ is a semilinear
subset of $\bbn^d$ and $A$ is a finite automaton over $(\Sigma \times D)$ for
$\Sigma$ a finite alphabet and $D$ a finite subset of $\bbn^d$. The PA
accepts the word $w_1\cdots w_n \in \Sigma^*$ if $A$ accepts a word $(w_1,
\bar{v_1})\cdots(w_n, \bar{v_n})$ such that $\sum \bar{v_i} \in C$.  \KaR
used PA to characterize an extension of (existential) monadic second-order
logic in which the cardinality of sets expressed by second-order variables is
available.

Here we carry the study of Parikh automata a little further.  First we
introduce related models of independent interest, each involving a finite
automaton $A$ and a \emph{constraint set} $C$ of vectors. (The main text has
formal definitions.)
\begin{inparaenum}[(1)]
\item \emph{Constrained automata} (CA) are defined to
  accept a word $w\in\Sigma^*$ iff the Parikh image of some accepting run of
  $A$ on $w$ 
  (i.e., the vector recording the number of occurrences of each transition
  along the run) belongs to $C$.
\item \emph{Affine Parikh automata} (APA) generalize PA by allowing each
  transition to perform a linear transformation on the $d$-tuple of PA
  registers prior to adding a new vector; an APA accepts a word $w$ iff some
  accepting run of $A$ on $w$ maps the all-zero vector to a vector in $C$.
\item \emph{Parikh automata on letters} (LPA) restrict PA by imposing the
  condition that any transition on $(a,\bar{u})\in(\Sigma\times D)$ and any
  transition on $(b,\bar{v})\in(\Sigma\times D)$ must satisfy
  $\bar{u} = \bar{v}$ when $a=b$.
\end{inparaenum}

Then our main observations are the following:
\begin{compactitem}
\item CA and deterministic CA respectively capture the class $\LPA$ of PA
  languages and the class $\LDPA$ of deterministic PA languages.
\item The language $\{a,b\}^*\cdot \{a^n\#a^n \st n \in \bbn\}$ belongs to
  $\LPA \setminus \LDPA$;
  these two classes were only proved different in~\cite{klaedtke-ruess03}.
\item APA and deterministic APA over $\bbq$ are no more powerful than the
  same models over $\bbn$.
\item APA express more languages than PA, and only context-sensitive
  languages; moreover the emptiness problem for deterministic APA is already
  undecidable.
\item Languages of LPA are equivalent to regular languages with a constraint
  on the Parikh image of their words.
\item Refining~\cite{klaedtke-ruess03} slightly, we compare our models with
  the reversal-bounded counter machines (RBCM) defined by
  Ibarra~\cite{ibarra78}, and show that $\LDPA$ is a strict subset of the
  languages expressed by deterministic RBCM.
\item Further expressiveness properties, closure properties, decidability
  properties and comparisons between the above models are derived.
  The overall resulting picture is summarized in
  tabular form in Section~\ref{sec:conclusion}.
\end{compactitem}


\section{Preliminaries}\label{sec:prelim}

We write $\bbz$ for the integers, $\bbn$ for the nonnegative integers,
$\bbn^+$ for $\bbn \setminus \{0\}$, $\bbq$ for the rational numbers, and
$\bbq^+$ for the strictly positive rational numbers.  We use $\bbk$ to denote
either $\bbn$ or $\bbq$.  Let $d, d' \in \bbn^+$.  Vectors in $\bbk^d$ are
noted with a bar on top, e.g., $\bar{v}$ whose elements are $v_1, \ldots,
v_d$.  For $C \subseteq \bbk^d$ and $D \subseteq \bbk^{d'}$, we write $C.D$
for the set of vectors in $\bbk^{d+d'}$ which are the concatenation of a
vector of $C$ and a vector of $D$.  We write $\bar{0} \in \{0\}^d$ for the
all-zero vector, and $\bar{e_i} \in \{0, 1\}^d$ for the vector having a $1$
only in position $i$.  We view $\bbk^d$ as the additive monoid $(\bbk^d, +)$.
For a monoid $(M, \cdot)$ and $S \subseteq M$, we write $S^*$ for the monoid
generated by $S$, i.e., the smallest submonoid of $(M, \cdot)$
containing~$S$.  A subset $E$ of $\bbk^d$ is \emph{$\bbk$-definable} if it is
expressible as a first order formula which uses the function symbols $+$,
$\lambda_e$ with $e \in \bbk$ corresponding to the scalar multiplication, and
the order $<$.  More precisely, a subset $E$ of $\bbk^d$ is $\bbk$-definable
iff there is such a formula with $d$ free variables, with $(x_1, \ldots, x_d)
\in E \Leftrightarrow \bbk \models \phi(x_1, \ldots, x_d)$.  Let us remark
that $\bbn$-definable sets are the Presburger-definable sets and they
coincide with the \emph{semilinear sets}~\cite{ginsburg-spanier66b}, i.e.,
finite unions of sets of the form $\{\bar{a_0} + k_1\bar{a_1} + \cdots +
k_n\bar{a_n} \st (\forall i)[k_i \in \bbn]\}$ for some $\bar{a_i}$'s in
$\bbn^d$.  Moreover, $\bbq$-definable sets are the semialgebraic sets defined
using affine functions\footnote{Semialgrebraic sets defined using affine
  functions are sometimes also called semilinear (e.g.,
  \cite{vandendries98}).  In this paper, we use ``semilinear'' only for
  $\bbn$-definable sets.}~\cite[Corollary I.7.8]{vandendries98}.

Let $\Sigma = \{a_1, \ldots, a_n\}$ be an (ordered) alphabet, and write
$\eps$ for the empty word.  The \emph{Parikh image} is the morphism
$\pkh\colon \Sigma^* \to \bbn^n$ defined by $\pkh(a_i) = \bar{e_i}$, for $1
\leq i \leq n$.  A language $L \subseteq \Sigma^*$ is said to be
\emph{semilinear} if $\pkh(L) = \{\pkh(w) \st w \in L\}$ is semilinear.  The
\emph{commutative closure} of a language $L$ is defined as the language $c(L)
= \{w \st \pkh(w) \in \pkh(L)\}$.  A language $L \subseteq \Sigma^*$ is said
to be \emph{bounded} if there exist $n > 0$ and $w_1, \ldots, w_n \in
\Sigma^+$ such that $L \subseteq w_1^*\cdots w_n^*$.  Two words $u, v \in
\Sigma^*$ are \emph{equivalent by the Nerode relation} (w.r.t.\ $L$), if for
all $w \in \Sigma^*$, $uw \in L \Leftrightarrow vw \in L$.  We then write $u
\equiv_L v$ (or $u \equiv v$ when $L$ is understood), and write $[u]_L$ for
the equivalence class of $u$ w.r.t.\ the Nerode relation.

We then fix our notation about automata.  An automaton is a quintuple $A =
(Q, \Sigma, \delta, q_0, F)$ where $Q$ is the finite set of states, $\Sigma$
is an alphabet, $\delta \subseteq Q \times \Sigma \times Q$ is the set of
transitions, $q_0 \in Q$ is the initial state and $F \subseteq Q$ are the
final states.  For a transition $t \in \delta$, where $t = (q, a, q')$, we
define $\orig(t) = q$ and $\dest(t) = q'$.  Moreover, we define $\mu_A\colon
\delta^* \to \Sigma^*$ to be the morphism defined by $\mu_A(t) = a$, and we
write $\mu$ when $A$ is clear from the context.  A \emph{path} on $A$ is a
word $\pi = t_1\cdots t_n \in \delta^*$ such that $\dest(t_i) =
\orig(t_{i+1})$ for $1 \leq i < n$; we extend $\orig$ and $\dest$ to paths,
letting $\orig(\pi) = \orig(t_1)$ and $\dest(\pi) = \dest(t_n)$.  We say that
$\mu(\pi)$ is the \emph{label} of $\pi$.  A path $\pi$ is said to be
\emph{accepting} if $\orig(\pi) = q_0$ and $\dest(\pi) \in F$; we let
$\apaths(A)$ be the language over $\delta$ of accepting paths on $A$.  We
then define $L(A)$, the \emph{language} of $A$, as the labels of the
accepting paths.



\section{Parikh automata}\label{sec:pa}

The following notations will be used in defining Parikh finite word automata
(PA) formally.  Let $\Sigma$ be an alphabet, $d \in \bbn^+$, and $D$ a finite
subset of $\bbn^d$.  Following~\cite{klaedtke-ruess03}, the monoid morphism
from $(\Sigma \times D)^*$ to $\Sigma^*$ defined by $(a, \bar{v}) \mapsto a$
is called the \emph{projection on $\Sigma$} and the monoid morphism from
$(\Sigma \times D)^*$ to $\bbn^d$ defined by $(a, \bar{v}) \mapsto \bar{v}$ is
called the \emph{extended Parikh image}.

\Remark Let $\Sigma = \{a_1, \ldots, a_n\}$ and $D \subseteq \bbn^n$.  If a
word $\omega \in (\Sigma \times D)^*$ is in $\{(a_i, \bar{e_i}) \st 1 \leq i
\leq n\}^*$, then the extended Parikh image of $\omega$ is the Parikh image
its projection on $\Sigma$.

\begin{definition}[Parikh automaton~\cite{klaedtke-ruess03}]
  Let $\Sigma$ be an alphabet, $d \in \bbn^+$, and $D$ a finite subset of
  $\bbn^d$.  A \emph{Parikh automaton (PA)} of dimension $d$ over $\Sigma
  \times D$ is a pair $(A, C)$ where $A$ is a finite automaton over $\Sigma
  \times D$, and $C \subseteq \bbn^d$ is a semilinear set.  The PA language,
  written $L(A, C)$, is the projection on $\Sigma$ of the words of $L(A)$
  whose extended Parikh image is in $C$.
  The PA is said to be \emph{deterministic (DetPA)} if for every state $q$ of
  $A$ and every $a \in \Sigma$, there exists at most one pair $(q', \bar{v})$
  with $q'$ a state and $\bar{v} \in D$ such that $(q, (a, \bar{v}), q')$ is
  a transition of $A$.  We write $\LPA$ (resp. $\LDPA$) for the class of
  languages recognized by PA (resp. DetPA).
\end{definition}

An alternative view of the PA will prove very useful.  Indeed we note that a
PA can be viewed equivalently as an automaton that applies a semilinear
constraint on the counts of the individual transitions occurring along its
accepting runs.  To explain this, let $(A, C)$ be a PA of dimension $d$, and
let $\delta = \{t_1, \ldots, t_n\}$ be the transitions of $A$.
Consider the automaton $B$ which is a copy of $A$ except that the vector part
of the transitions is dropped, and suppose there is a natural bijection
between the transitions of the two automata.  Let $\pi$ be a path in $A$; the
contribution to the extended Parikh image of $\mu(\pi)$ of the transition
$t_i = (p, (a, \bar{v_i}), q)$ is $\bar{v_i}$; thus, knowing how many times
$t_i$ appears in the path traced by $\pi$ in $B$ is enough to retrieve the
value of the extended Parikh image of $\mu(\pi)$.
Now note that the bijection exists if no two distinct transitions $t_i, t_j$
are such that $t_i = (p, (a, \bar{v_i}), q)$ and $t_j = (p, (a, \bar{v_j}),
q)$.  However, if such $t_i$ and $t_j$ exist, we can replace them by $t = (p,
(a, \bar{e_{d+1}}), q)$, incrementing in the process the dimension of PA, and
change $C$ to $C'$ defined by $(\bar{v}, c) \in C' \Leftrightarrow (\exists
c_i)(\exists c_j)[c = c_i + c_j \land \bar{v} + c_i.\bar{v_i} + c_j.\bar{v_j}
\in C]$ without changing the language of the PA.  It is thus readily seen
that the following defines models equivalent to the PA%
\footnote{Another equivalent view of PA languages suggested by one referee is as sets
  $R^{-1}(X)$ where $R$ is a rational relation over $\Sigma^*\times \bbn^d$
  and $X$ is a rational subset of $\bbn^d$.  An artificial further restriction to this viewpoint would serve to
  capture DetPA languages.}  and the DetPA:\squeezetopsep
\begin{definition}[Constrained automaton]
  A \emph{constrained automaton (CA)} over an alphabet~$\Sigma$ is a pair
  $(A, C)$ where $A$ is a finite automaton over $\Sigma$ with $d$
  transitions, and $C \subseteq \bbn^d$ is a semilinear set.  Its language is
  $L(A, C) = \{\mu(\pi) \st \pi \in \apaths(A) \land \pkh(\pi) \in C\}$.
  The CA is said to be \emph{deterministic (DetCA)} if $A$ is deterministic.
\end{definition}



\subsection{On the expressiveness of Parikh automata}

The constrained
automaton characterization of PA helps deriving pumping-style
necessary conditions for membership in $\LPA$ and in $\LDPA$:
\squeezetopsep
\begin{lemma}\label{lem:tdbpa}%
  Let $L \in \LPA$.  There exist $p, \ell \in \bbn^+$ such that any $w \in L$
  with $|w| > \ell$ can be written as $w = uvxvz$ where:
  \begin{compactenum}
  \item $0 < |v| \leq p$, $|x| > p$, and $|uvxv| \leq \ell$,
  \item $uv^2xz \in L$ and $uxv^2z \in L$.
  \end{compactenum}
\end{lemma}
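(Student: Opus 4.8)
The strategy is to move to the constrained-automaton view of PA (legitimate by the equivalence established just before the statement) and then run an ordinary pumping argument on accepting \emph{paths} of the underlying finite automaton $A$, being careful that the pumped pieces do not change the Parikh image of the path, so that the constraint $C$ continues to be satisfied. Concretely, let $(A,C)$ be a CA for $L$ with $A=(Q,\Sigma,\delta,q_0,F)$, put $N=|Q|$, and choose $\ell$ large enough that every $w\in L$ with $|w|>\ell$ has an accepting path $\pi$ long enough to contain, within its first $\ell$ transitions, two disjoint occurrences of some short loop on the \emph{same} state together with a long stretch in between. Set $p$ to be (a bound on) the length of such a short loop, say $p=N$ or a small multiple thereof.

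\textbf{Key steps.} First, fix $w\in L$ with $|w|>\ell$ and an accepting path $\pi\in\apaths(A)$ with $\mu(\pi)=w$ and $\pkh(\pi)\in C$. Second, by a pigeonhole argument on the first $\ell$ transitions of $\pi$, locate a state $q$ that is visited at least three times close together: write $\pi = \alpha\,\lambda_1\,\beta\,\lambda_2\,\gamma$ where $\alpha,\lambda_1,\beta,\lambda_2$ all lie within the prefix of length $\ell$, $\lambda_1$ and $\lambda_2$ are nonempty loops on $q$ with $0<|\lambda_i|\le p$, and $\beta$ is a loop on $q$ with $|\beta|>p$ (the long middle stretch — enforce this by taking $\ell$ large enough relative to $p$ that three equal states must straddle a gap exceeding $p$). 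Third — and this is the crucial point — arrange that the two loops carry the \emph{same label}: since the relevant words are short, after choosing $\ell,p$ appropriately one can demand $\mu(\lambda_1)=\mu(\lambda_2)=:v$ by pigeonholing over pairs (state, short label) rather than just over states, so that the number of short-loop labels we must distinguish is bounded. Set $u=\mu(\alpha)$, $x=\mu(\beta)$, $z=\mu(\gamma)$; then $w=uvxvz$ with $0<|v|\le p$, $|x|>p$, $|uvxv|\le\ell$, giving condition~1. Fourth, for condition~2: the path $\alpha\,\lambda_1\,\lambda_1\,\beta\,\gamma$ is accepting (all the splice points are the state $q$, and its endpoints are untouched), its label is $uv^2xz$, and its Parikh image is $\pkh(\pi)+\pkh(\lambda_1)-\pkh(\lambda_2)$; because $\lambda_1$ and $\lambda_2$ have the \emph{same label} on a word over an alphabet... wait — equal labels do not force equal Parikh images of \emph{paths} — so instead observe directly that $\pkh(\lambda_1\lambda_1\beta)=\pkh(\lambda_1\beta\lambda_2)$ is false in general; the right move is to also pick $\lambda_1,\lambda_2$ to be occurrences of \emph{literally the same loop} (same sequence of transitions), which pigeonhole again permits, since there are only boundedly many transition-sequences of length $\le p$. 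With $\lambda_1=\lambda_2$ as paths, $\alpha\lambda_1\lambda_1\beta\gamma$ and $\alpha\beta\lambda_2\lambda_2\gamma$ both have exactly the Parikh image $\pkh(\pi)\in C$, so $uv^2xz\in L$ and $uxv^2z\in L$, establishing condition~2.

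\textbf{Main obstacle.} The delicate part is the simultaneous bookkeeping in the pigeonhole step: we need the two pumped loops to coincide as transition sequences (so the Parikh image, hence the constraint, is preserved under moving/duplicating them), we need them separated by a genuinely long stretch (to get $|x|>p$), and we need the whole configuration inside the length-$\ell$ prefix (to get $|uvxv|\le\ell$). Making $p$ and $\ell$ depend only on $|Q|$ while guaranteeing all three requires choosing $\ell$ as a sufficiently large function of $|Q|$ (e.g.\ enough that among the first $\ell$ transitions some length-$\le p$ transition-block repeats with a gap $>p$ between two of its occurrences, both within the prefix), and taking $p$ a correspondingly modest bound such as $|Q|$ or $|Q|+1$. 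Once the combinatorial setup is fixed, the verification that the spliced paths are accepting and carry the claimed labels and Parikh images is routine.
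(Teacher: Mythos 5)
Your proposal is correct and follows essentially the same route as the paper: pass to the constrained-automaton view, pigeonhole over the first $\ell$ transitions of an accepting path to find two occurrences of \emph{the same} short cycle (identical as a transition sequence, which is exactly what makes the Parikh image, and hence membership in $C$, invariant under duplicating and relocating it), separated by a stretch longer than $p$, and then splice. The paper merely organizes the counting via elementary cycles (taking $\ell = p(2m+1)$ where $m$ is their number) rather than via all transition sequences of length at most $p$, but the argument is the same.
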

\begin{proof}
  Let $(A, C)$ be a CA of language $L$.  Let $p$ be the number of states in
  $A$ and $m$ be the number of elementary cycles (i.e., cycles in which no
  state except the start state occurs twice) in the underlying multigraph of
  $A$.  Finally, let $\ell = p \times (2m+1)$.  Now, let $w \in L$ such that
  $|w| \geq \ell$ and $\pi \in \apaths(A)$ such that $\mu(\pi) = w$ and
  $\pkh(\pi) \in C$.  Write $\pi$ as $\pi_1\cdots\pi_{2m+1}\rho$ where
  $|\pi_i| = p$.  By the pigeonhole principle, each $\pi_i$ contains an
  elementary cycle, and thus, there exist $1 \leq i , j \leq m+1$ with $i+1 <
  j$ such that $\pi_i$ and $\pi_j$ share the same cycle $\eta_v$ labeled with
  a word $v$.  Write:
  \begin{compactitem}
  \item $\pi_i$ as $\pi_{i,1}\eta_v\pi_{i,2}$, and $\pi_j$ as $\pi_{j,1}\eta_v\pi_{j,2}$, 
  \item $\eta_u$ for $\pi_1\cdots \pi_{i-1}\pi_{i,1}$ and $u$ for $\mu(\eta_u)$,
  \item $\eta_x$ for $\pi_{i,2}\pi_{i+1}\cdots \pi_{j-1}\pi_{j,1}$ and $x$
    for $\mu(\eta_x)$,
  \item $\eta_z$ for $\pi_{j,2}\pi_{j+1}\cdots\pi_{\ell+1}\rho$ and $z$ for $\mu(\eta_z)$.
  \end{compactitem}
  \vspace{-\baselineskip}Then $\pi = \eta_u\eta_v\eta_x\eta_v\eta_z$ and $w =
  uvxvz$.  Moreover, both $\pi' = \eta_u\eta_v^2\eta_x\eta_z$ and $\pi'' =
  \eta_u\eta_x\eta_v^2\eta_z$ are accepting paths with the same Parikh image
  as $\pi$.  Thus, $\mu(\pi')=uv^2xz \in L$ and $\mu(\pi'')=uxv^2z \in L$.
  Moreover, $0 <|v|\leq p, |x|>p$ and $|uvxv|\leq \ell$.
\end{proof}

A similar argument leads to a stronger property for the languages
  belonging to $\LDPA$:
\begin{lemma}\label{lem:tdbdpa}%
  Let $L \in \LDPA$.  There exist $p, \ell \in \bbn^+$ such that any $w$ over the
  alphabet of $L$ with $|w| > \ell$ can be written as $w = uvxvz$ where:
  \begin{compactenum}
  \item $0 < |v| \leq p$, $|x| > p$ and $|uvxv| \leq \ell$,
  \item $uv^2x, uvxv$ and $uxv^2$ are equivalent w.r.t.\ the Nerode
    relation of $L$.
  \end{compactenum}
\end{lemma}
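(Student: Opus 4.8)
The plan is to mimic the structure of the proof of Lemma~\ref{lem:tdbpa}, but to track the \emph{state} reached in the underlying automaton rather than the acceptance status, so that the conclusion becomes Nerode-equivalence rather than mere membership. First I would take a DetCA $(A,C)$ recognizing $L$ (which exists by the announced equivalence of DetCA and DetPA), and I would set $p$ to be the number of states of $A$ and $\ell = p\times(2m+1)$ where $m$ is the number of elementary cycles of the underlying multigraph, exactly as before. Given $w$ over the alphabet of $L$ with $|w|>\ell$, the key observation is that since $A$ is deterministic (and, as an automaton, complete in the sense needed — or completable by adding a sink that is irrelevant to the analysis of accepting runs), there is a \emph{unique} run $\pi$ of $A$ on $w$; note we do \emph{not} require $w\in L$ here, only that $A$ processes $w$. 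I would then apply the same pigeonhole decomposition to $\pi$: write $\pi=\pi_1\cdots\pi_{2m+1}\rho$ with $|\pi_i|=p$, extract two occurrences $\pi_i,\pi_j$ of the same elementary cycle $\eta_v$ with $i+1<j$, and factor $\pi=\eta_u\eta_v\eta_x\eta_v\eta_z$, hence $w=uvxvz$ with $0<|v|\le p$, $|x|>p$, $|uvxv|\le\ell$.

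The heart of the argument is then the Nerode-equivalence of $uv^2x$, $uvxv$, and $uxv^2$. The crucial structural fact is that $\eta_v$ is a cycle, so $\orig(\eta_v)=\dest(\eta_v)$; consequently all three paths $\eta_u\eta_v\eta_v\eta_x$, $\eta_u\eta_v\eta_x\eta_v$, and $\eta_u\eta_x\eta_v\eta_v$ are genuine paths of $A$, they all start at $q_0$, and — since $\dest(\eta_x)$ does not depend on where the two copies of $\eta_v$ are inserted around it, the start and end states of $\eta_x$ coinciding in all three arrangements — they all end in the \emph{same} state $q$. Moreover they all have the \emph{same Parikh image} over $\delta$ (namely $\pkh(\eta_u)+2\pkh(\eta_v)+\pkh(\eta_x)$), because Parikh image is commutative and blind to the order of the blocks. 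I would record these as the three ingredients: (i) each of $uv^2x$, $uvxv$, $uxv^2$ drives $A$ from $q_0$ to the same state $q$, and (ii) the unique run on each has the same transition-Parikh-image, call it $\bar c$.

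Now I would finish by a direct Nerode computation. Fix any continuation $w'\in\Sigma^*$ and let $\sigma$ be the unique run of $A$ on $w'$ starting from $q$ (again using determinism). Then for each of the three words $y\in\{uv^2x,\,uvxv,\,uxv^2\}$, the unique run of $A$ on $yw'$ is (the appropriate block arrangement)$\cdot\sigma$; it is accepting iff its endpoint $\dest(\sigma)$ is final, which is the same condition for all three $y$, and its transition-Parikh-image is $\bar c+\pkh(\sigma)$, again the same for all three. Hence $yw'\in L$ iff $\dest(\sigma)\in F$ and $\bar c+\pkh(\sigma)\in C$ — a condition independent of which of the three $y$ we picked. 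Therefore $uv^2x\equiv_L uvxv\equiv_L uxv^2$, completing the proof.

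The main obstacle I anticipate is the mild technical point about \emph{determinism versus completeness}: the uniqueness of the run on an arbitrary $w$ (not assumed to be in $L$) requires that $A$ be a complete DFA, whereas the DetCA definition only forbids conflicting transitions. I would dispatch this by observing that one may complete $A$ with a non-final sink state without altering $L(A,C)$ nor the relevant cycle structure used to fix $p$ and $\ell$ (an added sink contributes self-loops but these do not affect the pigeonhole count on accepting-or-not runs once $|w|>\ell$, or one simply enlarges $m,\ell$ accordingly); alternatively one notes that if the unique partial run on $w$ gets stuck, then $w$ and all three pumped variants fail in $L$ together, so the decomposition can be chosen along any sufficiently long prefix that is read successfully, or the statement holds vacuously. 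Everything else is a faithful re-run of the combinatorics of Lemma~\ref{lem:tdbpa} with "same endpoint state and same Parikh image" replacing "accepting with the same Parikh image''.
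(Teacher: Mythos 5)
Your proof is correct and is precisely the ``similar argument'' that the paper alludes to but omits: determinism makes the run of $A$ on any word unique, so the three rearrangements $\eta_u\eta_v\eta_v\eta_x$, $\eta_u\eta_v\eta_x\eta_v$, $\eta_u\eta_x\eta_v\eta_v$ are the actual runs on $uv^2x$, $uvxv$, $uxv^2$, reaching the same state with the same transition Parikh image, whence any continuation behaves identically with respect to both the final-state and the constraint-set conditions. Your handling of the completeness issue (adding a non-final sink, with $p$ and $\ell$ recomputed for the completed automaton) is the right way to cover arbitrary words $w$ not assumed to lie in $L$.
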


\pagebreak
We apply Lemma~\ref{lem:tdbpa}
to the language $\COPY$, defined as $\{w\#w
\st w \in \{a, b\}^*\}$, as follows:\squeezetopsep
\begin{proposition}\label{prop:copy}%
  $\COPY \not\in \LPA$.
\end{proposition}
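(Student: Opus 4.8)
The plan is to contradict Lemma~\ref{lem:tdbpa}. Suppose $\COPY \in \LPA$ and let $p, \ell$ be the constants it supplies. I would choose a \emph{cube-free} word $w' \in \{a,b\}^{\ell}$, that is, one with no factor of the form $yyy$ for $y$ nonempty; such words exist at every length (e.g.\ the length-$\ell$ prefixes of the Thue--Morse word, which are even overlap-free). Set $w = w'\#w' \in \COPY$. Since $|w| = 2\ell+1 > \ell$, Lemma~\ref{lem:tdbpa} produces a factorization $w = uvxvz$ obeying items~1 and~2, and I will derive a contradiction from its mere existence.

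First I would pin down where the pieces sit. As $|uvxv| \le \ell$ and the first $\ell$ symbols of $w$ form exactly $w'$ (the single $\#$ occupying position $\ell+1$), the prefix $uvxv$ lies entirely within the first copy of $w'$; so $u,v,x$ contain no $\#$, and writing $z = z_1\#w'$ we get $w' = u\,v\,x\,v\,z_1$. Then $uv^2xz = u\,v\,v\,x\,z_1\,\#\,w'$ still has a single $\#$, hence $uv^2xz \in \COPY$ if and only if $u\,v\,v\,x\,z_1 = w' = u\,v\,x\,v\,z_1$, which by cancelling $uv$ on the left and $z_1$ on the right is equivalent to $vx = xv$. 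So item~2 forces $vx = xv$ (the symmetric piece $uxv^2z\in\COPY$ yields the same equation).

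The last step is pure combinatorics on words. From $vx = xv$ we know $v$ and $x$ are powers of a common primitive word $y$, say $v = y^{a}$ and $x = y^{b}$ with $a \ge 1$; since $|y| \le |v| \le p < |x| = b|y|$ we have $b \ge 2$, so $vxv = y^{2a+b}$ with $2a+b \ge 4$. But $vxv$ is a factor of $w' = u\,(vxv)\,z_1$, so $w'$ contains $y^{3}$, contradicting its cube-freeness. Hence no factorization can meet items~1 and~2, contradicting Lemma~\ref{lem:tdbpa}; therefore $\COPY \notin \LPA$. The one ingredient beyond bookkeeping is the existence of arbitrarily long cube-free binary words; the only point to watch is that the unique $\#$ of $w$ stays strictly between the two halves after pumping, which holds because $uv^2xz$ has the same length as $w$ and the same prefix-length $\ell$ before the $\#$ (the multiset of pieces $u,v,x,v,z_1$ is untouched).
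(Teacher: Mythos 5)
Your proof is correct, and while it invokes the same key tool as the paper --- Lemma~\ref{lem:tdbpa} applied to a single well-chosen word of $\COPY$ --- the combinatorial core is genuinely different. The paper takes $w=(a^pb)^\ell\#(a^pb)^\ell$ and argues by cases on the shape of $v$ (a block of $a$'s versus a factor containing a $b$), showing in each case that $uv^2xz$ has a run of $a$'s in its first half unmatched in the second. You instead take $w'\#w'$ for a cube-free $w'$, observe that $uv^2xz\in\COPY$ forces $uvvxz_1=w'=uvxvz_1$ and hence $vx=xv$, and then invoke the commutation theorem (two nonempty words commute iff they are powers of a common primitive word $y$) to exhibit the cube $y^{2a+b}$, $2a+b\ge 3$, inside $w'$ --- contradiction. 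Your route trades the paper's elementary block-counting for two classical facts (existence of cube-free binary words via Thue--Morse, and Lyndon--Sch\"utzenberger commutation), and in exchange eliminates all case analysis and reduces the whole argument to one word equation; note that you do not even need $|x|>p$ for the cube, since $a\ge 1$ and $b\ge 1$ already give $2a+b\ge 3$. All the bookkeeping checks out: $|uvxv|\le\ell=|w'|$ does place $uvxv$ inside the first copy, the cancellation to $vx=xv$ is valid, and the symmetric condition $uxv^2z\in\COPY$ indeed yields the same equation.
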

\begin{proof}
  Suppose $\COPY \in \LPA$.  Let $\ell, p$ be given by Lemma~\ref{lem:tdbpa},
  and consider $w = (a^pb)^\ell\#(a^pb)^\ell \in \COPY$.  Lemma~\ref{lem:tdbpa}
  states that $w=uvxvz$ where $uvxv$ lays in the first half of $w$, and $s =
  uv^2xz \in \COPY$.  Note that $x$ contains at least one $b$.  Suppose $v =
  a^i$ for $1 \leq i \leq p$, then there is a sequence of $a$'s in the first
  half of $s$ unmatched in the second half.  Likewise, if $v$ contains a $b$,
  then $s$ has a sequence of $a$'s between two $b$'s unmatched in the second
  half.  Thus $s \not\in \COPY$, a contradiction.  Hence $\COPY \not\in \LPA$.
\end{proof}

As \KaR show using closure properties, DetPA are strictly weaker than PA.
The thinner grain of Lemma~\ref{lem:tdbdpa} suggests explicit languages that
witness the separation of $\LDPA$ from $\LPA$.  Indeed, let $\EQUAL \subseteq
\{a,b,\#\}^*$ be the language $\{a,b\}^*\cdot \{a^n\#a^n \st n \in \bbn\}$,
we have:\squeezetopsep
\begin{proposition}\label{prop:equal}
  $\EQUAL \in \LPA \setminus \LDPA$.
\end{proposition}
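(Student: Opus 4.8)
The plan is to prove the two claims separately: $\EQUAL \in \LPA$ and $\EQUAL \notin \LDPA$.

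For membership in $\LPA$, I would build a PA (or, using the characterization, a CA) that first reads an arbitrary prefix over $\{a,b\}^*$ (say, by looping in an initial ``free'' phase that contributes nothing to the constraint), then nondeterministically guesses that the relevant suffix $a^n\#a^n$ has begun, reading a block of $a$'s while incrementing a first register, reading the $\#$, then reading a second block of $a$'s while incrementing a second register, and finally requiring via a semilinear constraint that the two registers are equal. Care must be taken that the automaton part genuinely enforces the shape $\{a,b\}^*a^*\#a^*$; the only ``counting'' burden left to $C$ is the equality of the two exponents, which is clearly semilinear. This direction is routine.

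The heart of the proposition is $\EQUAL \notin \LDPA$, and here I would use Lemma~\ref{lem:tdbdpa} together with a Nerode-type argument. Suppose for contradiction $\EQUAL \in \LDPA$ and let $p,\ell$ be as in the lemma. The idea is to pick a word $w$ over $\{a,b,\#\}$ with $|w| > \ell$ whose decomposition $w = uvxvz$ is forced to have $v$ lying inside a long block of $a$'s, so that the lemma yields that $uv^2x$, $uvxv$, and $uxv^2$ are all Nerode-equivalent, and then exhibit a continuation separating them. A natural candidate is something like $w = b^\ell a^{N}$ for a suitably large $N$ (chosen in terms of $p$, $\ell$), arranged so that the constraints $0 < |v| \le p$, $|x| > p$, $|uvxv| \le \ell$ force $uvxv$ to sit within the $b^\ell$ prefix — wait, that makes $v$ a block of $b$'s, which is still fine: we then need a continuation $t$ that distinguishes the three words. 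Since the three words $uv^2x$, $uvxv$, $uxv^2$ all have the same total length and the same number of $a$'s and $b$'s but, crucially, may differ in where the boundary between the ``free'' $\{a,b\}^*$ part and the ``$a^n\#a^n$'' suffix can be placed: one wants a continuation of the form $a^k\#a^m$ that lands in $\EQUAL$ for exactly one (or differing subsets) of the three words. The key leverage is that $\EQUAL = \{a,b\}^* \cdot \{a^n\#a^n\}$, so whether $ut \in \EQUAL$ depends only on the maximal $a$-suffix of the part of $u$ before the continuation's own structure kicks in.

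Concretely, I would choose $w$ so that $v = a^i$ sits inside a trailing block of $a$'s that is itself preceded by a $b$, making the three words' maximal terminal $a$-runs have lengths that depend on whether the block reads $vvx$, $vxv$, or $xv^2$ relative to that final $b$ — so that $uv^2x$, $uvxv$, $uxv^2$ end in $a$-runs of lengths, say, $n$, $n$, $n$ in the bad case but with the ``active'' portion differing; I then append $\#a^m$ and tune $m$ so that the equality $a^?\#a^m$ holds for exactly one arrangement. The main obstacle — and where I expect to spend the real effort — is engineering $w$ so that Lemma~\ref{lem:tdbdpa}'s length constraints genuinely pin down $v$ to lie in a controlled location (a single, predictable block), since the lemma only guarantees \emph{some} decomposition of the required shape, and one must ensure that \emph{every} such decomposition is exploitable. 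The standard trick is to make $w$ consist of one long block of $a$'s padded so that $|uvxv| \le \ell$ and $|x| > p$ together with $|v| \le p$ force $v$ and $x$ into that block, after which $uv^2x$, $uvxv$, $uxv^2$ differ precisely in a way that changes membership under a common continuation like $\# a^m$, contradicting Nerode-equivalence and hence the assumption $\EQUAL \in \LDPA$.
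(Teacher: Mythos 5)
Your overall strategy is the paper's: assume $\EQUAL\in\LDPA$, invoke Lemma~\ref{lem:tdbdpa}, and refute the Nerode-equivalence of $uvxv$ and $uv^2x$ by appending a continuation of the form $\#a^m$. The membership half ($\EQUAL\in\LPA$) is also fine and is omitted in the paper. But there is a genuine gap in the non-membership half: both concrete witness words you float make the argument collapse. If the length-$\ell$ prefix of $w$ is a single unary block (your $b^\ell a^N$, where $|uvxv|\le\ell$ forces $uvxv$ into $b^\ell$; or your later ``one long block of $a$'s''), then $uvxv$, $uv^2x$ and $uxv^2$ are \emph{literally the same word} --- they are permutations of the same multiset of factors drawn from a unary string --- so their Nerode-equivalence is vacuous and no continuation can separate them. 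The whole point of the witness is to force the three rearrangements to be \emph{different} words whose relevant feature for $\EQUAL$ (the length of the maximal terminal run of $a$'s before the appended $\#$) actually changes when $v$ migrates past $x$.

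The paper achieves this with $w=(a^pb)^\ell$: since $|x|>p$, every window of that length in $(a^pb)^\ell$ contains a $b$, so $x$ contains a $b$; and since $|v|\le p$, $v$ is either $a^i$ or $a^iba^k$. In the first case $uv^2x$ ends with $i$ fewer trailing $a$'s than $uvxv$ (the $b$ inside $x$ blocks the new copy of $v$ from contributing to the terminal run); in the second case it ends with $p-i>k$ trailing $a$'s instead of $k$. Either way, appending $\#a^k$ (with $k$ the terminal $a$-run of $uvxv$) separates the two words, contradicting the lemma. You correctly identified that the real work is engineering $w$ so that \emph{every} decomposition the lemma might hand you is exploitable, but the case analysis on the shape of $v$ and the need for interspersed $b$'s --- which is exactly what makes the construction work --- is missing, and your candidate words cannot be repaired without introducing that alternation.
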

\begin{proof}
  We omit the proof that $\EQUAL \in \LPA$.  Now, suppose $\EQUAL \in \LDPA$,
  and let $\ell, p$ be given by Lemma~\ref{lem:tdbdpa}.  Consider $w =
  (a^pb)^\ell$. Lemma~\ref{lem:tdbdpa} then asserts that a prefix of $w$ can
  be written as $w_1 = uvxv$, and that $w_2 = uv^2x$ verifies $w_1 \equiv
  w_2$.  As $|x| > p$, $x$ contains a $b$.  Let $k$ be the number of $a$'s at
  the end of $w_1$.  Suppose $v = a^i$ for $1 \leq i \leq p$, then $w_2$ ends
  with $k - i < k$ letters $a$.  Thus $w_1\#a^k \in \EQUAL$ and $w_2\#a^k
  \not\in \EQUAL$, a contradiction.  Suppose then that $v = a^iba^k$, with $0
  \leq i+k <p$.  Then $w_2$ ends with $p-i>k$ letters $a$, and similarly,
  $w_1 \not\equiv w_2$, a contradiction.  Thus $\EQUAL \not\in \LDPA$.
\end{proof}

For comparison, we mention another line of attack for the study of $\LDPA$.
The proof is omitted, but is based on the number of possible configurations
of a PA, which is polynomial in the length of the input word.  \KaR used a
similar argument to show that $\PAL = \{w\#w^R \st w\in \{a,b\}^+\}$, where
$w^R$ is the reversal of $w$, is not in $\LPA$.\squeezetopsep
\begin{lemma}\label{lem:combi}
  Let $L \in \LDPA$.  Then there exists $c > 0$ such that $|\{[w]_L \st w \in
  \Sigma^n\}| \in O(n^c)$.
\end{lemma}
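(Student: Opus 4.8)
The plan is to bound the number of Nerode classes among length-$n$ words by the number of distinct "configuration behaviours" that a DetPA can exhibit after reading a word of length $n$, and to argue that this count is polynomial in $n$ because the register values after reading a word of length $n$ are bounded by a fixed constant times $n$ (the entries of $D$ are bounded, and exactly $n$ vectors get added), so there are only polynomially many reachable configurations, hence polynomially many residual acceptance behaviours.

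First I would fix a DetPA $(A,C)$ of dimension $d$ recognizing $L$, with $D\subseteq\bbn^d$ finite; let $M=\max\{v_i\st \bar v\in D,\ 1\le i\le d\}$ and $N=|Q|$ the number of states. After reading a word $w\in\Sigma^n$, because the automaton is deterministic there is at most one run, ending in some state $q(w)$ and having accumulated some vector $\bar r(w)\in\bbn^d$ with each coordinate at most $Mn$ (and if no run exists the word is in a single "dead" residual class, which contributes only $1$ to the count and can be set aside). Thus the number of reachable pairs $(q(w),\bar r(w))$ over all $w\in\Sigma^n$ is at most $N\cdot(Mn+1)^d$, which is $O(n^d)$.

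Next I would observe that the residual language of $w$ — hence its Nerode class $[w]_L$ — depends only on this pair $(q(w),\bar r(w))$: for a continuation $y\in\Sigma^*$, whether $wy\in L$ is determined by running $A$ deterministically from state $q(w)$ on $y$, obtaining a state $q'$ and an added vector $\bar s$, and checking whether $q'\in F$ and $\bar r(w)+\bar s\in C$; this depends on $w$ only through $q(w)$ and $\bar r(w)$. Therefore $w\equiv_L w'$ whenever $(q(w),\bar r(w))=(q(w'),\bar r(w'))$ (and the dead-run words all share one class), so $|\{[w]_L\st w\in\Sigma^n\}|\le N\cdot(Mn+1)^d+1\in O(n^d)$, giving the claim with $c=d$.

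The main obstacle, and the point that needs care, is handling the possibility that $A$ has $\eps$-transitions or that the same length-$n$ input can correspond to runs of differing "transition length" — but since the standard PA model here has no $\eps$-transitions and reads one input letter per transition, each word of length $n$ induces a run using exactly $n$ transitions, so the bound $\bar r(w)\in\{0,\dots,Mn\}^d$ is immediate; the only genuinely delicate bookkeeping is the reduction (sketched in the excerpt) from PA to CA-style transition-counting and back, ensuring the register bound survives it, and confirming that determinism of the PA gives uniqueness of the run so that $(q(w),\bar r(w))$ is well-defined rather than a set. Once that is in place the argument is purely a counting argument and requires no pumping.
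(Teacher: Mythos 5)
Your argument is correct and is precisely the one the paper alludes to (the proof is omitted there, with the hint that it rests on the polynomially many configurations of a DetPA on length-$n$ inputs): determinism gives a unique configuration $(q(w),\bar r(w))$ with $\bar r(w)\in\{0,\dots,Mn\}^d$, the Nerode class of $w$ is determined by that configuration (plus one class for stuck runs), and so at most $N(Mn+1)^d+1\in O(n^d)$ classes arise. No gaps; the worry about $\eps$-transitions is indeed moot for this model.
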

\squeezeafterenv
\begin{proposition}
  Let $L = \{w \in \{a, b\}^* \st w_{|w|_a} = b\}$, where $w_i$ is the $i$-th
  letter of $w$.  Then $L \in \LPA \setminus \LDPA$.
\end{proposition}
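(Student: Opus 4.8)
The plan is to prove $L\in\LPA$ by exhibiting a constrained automaton for it (which suffices, since $\LCA=\LPA$), and $L\notin\LDPA$ by showing that, restricted to words of one fixed length, $L$ already has exponentially many Nerode classes, which is incompatible with Lemma~\ref{lem:combi}.

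For the membership part, I would build a CA $(A,C)$ that nondeterministically guesses the index $p=|w|_a$ and checks that $w_p=b$. Let $A$ have two states, $q_<$ (initial) and $q_>$ (the unique final state); put self-loops on $a$ and on $b$ at $q_<$, self-loops on $a$ and on $b$ at $q_>$, and one transition $q_<\xrightarrow{\,b\,}q_>$, understood as the assertion ``the current letter is position $p$, and it is a $b$''. In a run that takes this last transition at position $q$ of the input, the $q_<$-loops are used $q-1$ times altogether; writing $c_1,c_2$ for the numbers of traversals of the two $q_<$-loops (on $a$, resp.\ on $b$) and $c_4,c_5$ for the two $q_>$-loops, we get $c_1+c_2=q-1$ and $|w|_a=c_1+c_4$ (the committed letter, being a $b$, does not count towards $|w|_a$). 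Taking $C=\{\text{transition counts}\;:\;c_2+1=c_4\}$ then forces $q=c_1+c_4=|w|_a$, so $(A,C)$ has an accepting run on $w$ iff $w$ contains at least one $a$ and $w_{|w|_a}=b$, i.e.\ $L(A,C)=L$. As $c_2+1=c_4$ is plainly semilinear, this step is routine.

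For the lower bound I would argue as follows. Fix integers $m,k$ with $1\le k\le m$, and for each $s\in\{a,b\}^m$ having exactly $k$ letters $a$, set $w_s=b^{k-1}s$. Every such $w_s$ has length $n=m+k-1$, satisfies $|w_s|_a=k$, and has $s_1$ as its letter at position $k$. I claim distinct such $s,s'$ give $w_s\not\equiv_L w_{s'}$: pick $p\in\{1,\dots,m\}$ with $s_p\neq s'_p$ and append $z=a^{p-1}$; since $w_sz=b^{k-1}\,s\,a^{p-1}$ we have $|w_sz|_a=k+p-1$, an index that lies inside the $s$-block of $w_sz$ and points exactly at $s_p$, whence $w_sz\in L\iff s_p=b$, and symmetrically $w_{s'}z\in L\iff s'_p=b$; since $s_p\neq s'_p$, exactly one of $w_sz,w_{s'}z$ is in $L$. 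Hence $|\{[w]_L\st w\in\{a,b\}^n\}|\ge\binom{m}{k}$. Choosing $k=\lfloor m/2\rfloor$ yields $\binom{m}{k}\ge 2^{m}/(m+1)$, which is exponential in $n$ (as $m>\tfrac{2}{3}n$), contradicting Lemma~\ref{lem:combi} were $L\in\LDPA$. Combining the two parts gives $L\in\LPA\setminus\LDPA$.

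The delicate point is the index arithmetic in the lower bound: one must verify that appending $a^{p-1}$ shifts $|w_sz|_a$ so that it lands within the embedded copy of $s$ at precisely offset $p$, and that the family $\{w_s\}$ really consists of equal-length words counted by $\binom{m}{k}$. The constrained-automaton construction and the semilinearity of $C$ are straightforward, and once the Nerode separation is in hand the conclusion is immediate from Lemma~\ref{lem:combi} and the earlier equivalence $\LCA=\LPA$.
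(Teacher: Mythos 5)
Your proof is correct and follows essentially the same route as the paper's: guess the position $|w|_a$ nondeterministically for membership, and for the lower bound exhibit $2^{\Omega(n)}$ pairwise Nerode-inequivalent words of a common length $n$ (by padding to normalize the $a$-count and appending a power of $a$ to point the index at a differing position), contradicting Lemma~\ref{lem:combi}. The only differences are cosmetic: you make the CA construction explicit where the paper omits it, and you pad with $b^{k-1}$ in front rather than the paper's $a^{n/2-i}b^i$ prefixes.
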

\begin{proof}
  \def\halfn{{n \over 2}}%
  We omit the proof that $L \in \LPA$; the main point is simply to guess the
  position of the $b$ referenced by $|w|_a$.  On the other hand, let $n > 0$
  and $u, v \in \{a, b\}^n$ such that $|u|_a = |v|_a = \halfn$ and there
  exists $p \in \{\halfn, \ldots, n\}$ with $u_p \neq v_p$.  Let $w =
  a^{p-\halfn}$, then $(uw)_{|uw|_a} = (uw)_{|u|_a+|w|_a} = (uw)_p = u_p$,
  and similarly, $(vw)_{|vw|_a} = v_p$.  This implies $uw \not\in L
  \leftrightarrow vw \in L$, thus $u \not\equiv v$.  Then for $0 \leq i \leq
  \halfn$, define $E_i = \{a^{\halfn - i}b^iz \st z \in \{a, b\}^\halfn \land
  |z|_a = i\}$.  For any $u,v \in \bigcup E_i$ with $u \neq v$, the previous
  discussion shows that $u \not\equiv v$.  Thus $|\{[w]_L \st w \in
  \{a,b\}^n\}| \geq |\bigcup_{i=0}^\halfn E_i| = \sum_{i=0}^\halfn
  |E_i| = \sum_{i = 0}^\halfn {\halfn \choose i} = 2^\halfn \not\in
  O(n^{O(1)})$.  Lemma~\ref{lem:combi} then implies that $L \not\in \LDPA$.
\end{proof}


\subsection{On decidability and closure properties of Parikh automata}

The following table summarizes decidability results for PA and DetPA.  The
results in bold are new, while the others are from \cite{klaedtke-ruess03}
and \cite{ibarra78}:\squeezetopsep
\begin{center}
  \setlength{\tabcolsep}{10pt}
  \addtolength{\extrarowheight}{2pt}
  \begin{tabular}{c|c|c|c|c|c|}
    \cline{2-6}
    & $= \emptyset$   & $= \Sigma^*$   & is finite  & $\subseteq$ & is regular\\
    \hline
    \multicolumn{1}{|c|}{DetPA}   & D & D & {\bf D}  & {\bf D} & ?\\
    \hline
    \multicolumn{1}{|c|}{PA}  & D & U & {\bf D} & {\bf U} & {\bf U}\\
    \hline
  \end{tabular}
\end{center}

\begin{proposition}
  (1) Finiteness is decidable for PA.  (2) Inclusion is decidable for DetPA and
  undecidable for PA. (3) Regularity is undecidable for PA.
\end{proposition}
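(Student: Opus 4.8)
The plan is to handle the three claims separately. For \emph{(1)} I would pass to the equivalent constrained‑automaton presentation, say $L = L(A,C)$ with transition set $\delta$ and semilinear $C \subseteq \bbn^{|\delta|}$. Since the alphabet is finite, $L$ is finite iff its set of word lengths is finite, and because every transition reads exactly one letter this set equals $\{\sum_i v_i \st \bar v \in \pkh(\apaths(A)) \cap C\}$, where $\pkh(\pi)$ counts the transition occurrences of $\pi$. Now $\apaths(A)$ is a regular language over $\delta$, so $\pkh(\apaths(A))$ is effectively semilinear by Parikh's theorem, hence so is its intersection with $C$ and, under the coordinate‑sum map, the resulting set $S \subseteq \bbn$; and a semilinear subset of $\bbn$ is finite iff every period in a representation of it as a finite union of arithmetic progressions is $0$, which is decidable. (Incidentally, the same observation re‑proves that emptiness of PA is decidable, since $L = \emptyset$ iff $\pkh(\apaths(A)) \cap C = \emptyset$.)

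For \emph{(2)}, undecidability for PA is immediate: $\Sigma^*$ is a (regular, hence) PA language and $L \subseteq \Sigma^*$ always holds, so $\Sigma^* \subseteq L$ iff $L = \Sigma^*$, and universality is undecidable for PA. For DetPA, let $L_1 \in \LPA$ be arbitrary and $L_2 = L(A_2,C_2) \in \LDPA$ with $A_2$ deterministic and complete, which we may assume. Then $L_1 \not\subseteq L_2$ iff $L_1 \cap \overline{L_2} \neq \emptyset$, and since each word has a unique run in $A_2$ we get $\overline{L_2} = \overline{L(A_2)} \cup L(A_2, \overline{C_2})$, where $\overline{L(A_2)}$ is regular and $\overline{C_2}$ is an effectively computable semilinear set (semilinear sets being closed under complement). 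So $L_1 \subseteq L_2$ iff both $L_1 \cap \overline{L(A_2)}$ and $L_1 \cap L(A_2, \overline{C_2})$ are empty; each is the intersection of a PA language with a regular, resp.\ a PA, language, hence an effectively computable PA language (by the product construction on constrained automata), and PA emptiness is decidable.

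For \emph{(3)} I would reduce from universality of PA. Fix letters $a,b,c$ not in $\Sigma$, and from a PA for $L \subseteq \Sigma^*$ build a PA for
\[
 M = \bigl(\{a^nb^n \st n \in \bbn\} \cdot c \cdot L\bigr) \ \cup\ \bigl(\{a^ib^j \st i \neq j\} \cdot c \cdot \Sigma^*\bigr).
\]
This is effective, since $\{a^nb^n \st n\in\bbn\}$ and $\{a^ib^j \st i \neq j\}$ are PA languages (an $a^*b^*$‑automaton with two counters and the semilinear constraints $i=j$, resp.\ $i\neq j$) and $M$ is obtained from them by concatenation and union — or one assembles $M$ directly as a single PA that, after reading $a^ib^jc$, nondeterministically either simulates the PA for $L$ (imposing $i=j$) or loops over $\Sigma^*$ (imposing $i\neq j$). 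If $L = \Sigma^*$ then $M = a^*b^*c\Sigma^*$ is regular. If $L \neq \Sigma^*$, fix $x_0 \in \Sigma^* \setminus L$; then $M \cap a^*b^*cx_0 = \{a^ib^jcx_0 \st i \neq j\}$, whose right quotient by $cx_0$ is $\{a^ib^j \st i \neq j\} = a^*b^* \setminus \{a^nb^n \st n\in\bbn\}$, which is not regular — so, as regular languages are closed under intersection with regular sets and under quotient, $M$ is not regular. Hence $M$ is regular iff $L = \Sigma^*$, and regularity is undecidable for PA.

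I expect the main obstacle to be the gadget in part \emph{(3)}: parts \emph{(1)} and the DetPA half of \emph{(2)} are routine uses of effective semilinearity, Boolean closure of semilinear sets, and decidable emptiness, whereas \emph{(3)} requires a construction that stays inside $\LPA$ and converts the undecidable question whether $L = \Sigma^*$ into the question whether $M$ is regular; the delicate point is that the inherently non‑regular part $\{a^ib^j \st i \neq j\}$ must be absorbed into $a^*b^*$ exactly when $L$ is universal, which is what the union with $\{a^nb^n\}\cdot c\cdot L$ achieves.
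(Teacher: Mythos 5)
Your proofs of \emph{(1)} and of the DetPA half of \emph{(2)} follow essentially the paper's route: pass to the constrained-automaton view, use Parikh's theorem to get $\pkh(\apaths(A))$ effectively semilinear, and decide finiteness of $\pkh(\apaths(A)) \cap C$ (your detour through word lengths is a harmless refinement); for inclusion, combine effective closure of $\LDPA$ under complement with closure of $\LPA$ under intersection and decidable emptiness --- your explicit decomposition $\overline{L_2} = \overline{L(A_2)} \cup L(A_2, \overline{C_2})$ for a complete deterministic $A_2$ is exactly what underlies the closure property the paper cites, and both you and the paper note the stronger statement that inclusion of a PA language in a DetPA language is decidable. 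Undecidability of inclusion for PA is reduced to universality in both cases. Where you genuinely diverge is \emph{(3)}: the paper invokes Greibach's meta-theorem (regularity is a predicate true of all regular languages, false of some language, and preserved under the relevant operations, inside a class closed under union and concatenation with regular sets), whereas you give a direct reduction from PA-universality via the gadget $M = (\{a^nb^n \st n\in\bbn\}\cdot c\cdot L) \cup (\{a^ib^j \st i\neq j\}\cdot c\cdot\Sigma^*)$. Your reduction is correct: $M$ is effectively in $\LPA$ by closure under concatenation and union; $M = a^*b^*c\Sigma^*$ is regular when $L=\Sigma^*$; and when $x_0\notin L$, intersecting with $a^*b^*cx_0$ and taking the right quotient by $cx_0$ (both of which preserve regularity, and the decomposition at the unique $c$ is unambiguous since $a,b,c\notin\Sigma$) yields the nonregular $\{a^ib^j \st i\neq j\}$. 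In effect you have unwound Greibach's proof for the one predicate at hand; this buys a self-contained elementary argument at the cost of some gadgetry, while the paper's citation buys brevity and makes visible that the same hypotheses defeat decidability of many predicates at once.
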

\begin{proof}
  \emph{(1).} Let $(A, C)$ be a CA.  Then $\apaths(A)$ is a regular language,
  and thus, its Parikh image is effectively semilinear (this is a special
  case of Parikh's theorem~\cite{parikh66}).  It follows that the language
  described by $A$ and $C$ is finite if and only if $\pkh(\apaths(A)) \cap C$
  is finite, which is decidable. ~~\emph{(2).}  Decidability of inclusion
  for DetPA follows from the fact that $\LDPA$ is closed under complement and
  intersection, and that the emptiness problem is decidable for DetPA.  (In
  fact, it is decidable whether the language of a PA is included in the
  language of a DetPA.)  Undecidability of inclusion for PA follows
  immediately from the undecidability of the universe problem for PA.
  ~~\emph{(3).} This follows from a theorem of~\cite{greibach68}, which
  states the following: Let $\calC$ be a class of languages closed under
  union and under concatenation with regular languages.  Let $P$ be a
  predicate on languages true of every regular language, false of some
  languages, preserved by inverse rational transduction, union with
  $\{\eps\}$ and intersection with regular languages.  Then $P$ is
  undecidable in $\calC$.  Obviously, $\LPA$ satisfies the hypothesis for
  $\calC$.  Moreover, ``being regular in $\LPA$'' is a predicate satisfying
  the hypothesis for $P$.  Thus, regularity is undecidable for PA.
\end{proof}

We now further the study of closure properties of PA and DetPA started
in~\cite{klaedtke-ruess03}.  The following table collects the closure
properties of PA and DetPA, where $h$ is a morphism, $c$ is the commutative
closure.
In bold are the results of the present paper, while the other results can be
found in~\cite{klaedtke-ruess03} (detailed proofs by Karianto can be found
in~\cite{karianto04}):
\begin{center}
  \setlength{\tabcolsep}{10pt}
  \addtolength{\extrarowheight}{2pt}
  \begin{tabular}{c|c|c|c|c|c|c|c|c|}
    \cline{2-9}
    & $\cup$  & $\cap$ & $\cdot$ & $\bar{\phantom{\cdot\;}}$ & $h$   & $h^{-1}$ & $c$ & ${}^*$\\
    \hline
    \multicolumn{1}{|c|}{DetPA}   & Y & Y & {\bf N} & Y & {\bf N} & Y & {\bf Y} &  {\bf N}  \\
    \hline
    \multicolumn{1}{|c|}{PA}    & Y & Y & Y & N & Y & Y  & {\bf Y}   & N  \\
    \hline
  \end{tabular}
\end{center}

As the language $\EQUAL$ separating $\LDPA$ from $\LPA$ is the concatenation
of a regular language and a language of $\LDPA$, we have:\squeezetopsep
\begin{proposition}%
  $\LDPA$ is not closed under concatenation.
\end{proposition}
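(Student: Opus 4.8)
The plan is to use the decomposition already flagged in the lead-in: realize $\EQUAL$ as a concatenation $K\cdot M$ with both $K$ and $M$ in $\LDPA$, and then derive a contradiction from Proposition~\ref{prop:equal}, which asserts $\EQUAL\notin\LDPA$.

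First I would set $K=\{a,b\}^*$ and $M=\{a^n\#a^n\st n\in\bbn\}$, so that $\EQUAL=K\cdot M$ by definition. The language $K$ is regular and hence in $\LDPA$: a one-state DetPA that is final, with self-loops reading $a$ and $b$ both carrying the vector $\bar0$ (dimension $1$) and constraint $C=\{\bar0\}$, recognizes $K$ (the extended Parikh image is always $\bar0\in C$). For $M$ I would exhibit an explicit DetPA of dimension $2$: states $q_0,q_1$ with $q_1$ final; a self-loop on $q_0$ reading $a$ with vector $\bar{e_1}$; a transition from $q_0$ to $q_1$ reading $\#$ with vector $\bar0$; a self-loop on $q_1$ reading $a$ with vector $\bar{e_2}$; and constraint $C=\{(n,n)\st n\in\bbn\}$, which is semilinear. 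This automaton is deterministic in the required sense, since from each state every letter labels at most one transition, and an accepting run on $a^m\#a^k$ has extended Parikh image $(m,k)$; thus its language is exactly $\{a^n\#a^n\st n\in\bbn\}=M$, so $M\in\LDPA$.

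Now suppose, for contradiction, that $\LDPA$ is closed under concatenation. Then from $K,M\in\LDPA$ we would get $\EQUAL=K\cdot M\in\LDPA$, contradicting Proposition~\ref{prop:equal} (which gives $\EQUAL\in\LPA\setminus\LDPA$). Hence $\LDPA$ is not closed under concatenation.

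I do not expect a genuine obstacle here: the argument is essentially a short reduction to the already-proved Proposition~\ref{prop:equal}. The only point needing care is checking that the little automaton for $M$ literally meets the determinism condition of the DetPA definition (at most one outgoing transition per state--letter pair, vector component included), which holds by inspection; alternatively one may phrase the two memberships through the CA/DetCA characterization, using the trivial constraint $C=\bbn^d$ for the regular part.
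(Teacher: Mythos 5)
Your proof is correct and follows exactly the route the paper intends: the paper's lead-in sentence notes that $\EQUAL$ is the concatenation of a regular language with a language in $\LDPA$, and the proposition is then immediate from Proposition~\ref{prop:equal}. You have merely made explicit the two DetPA witnessing $\{a,b\}^*\in\LDPA$ and $\{a^n\#a^n \st n\in\bbn\}\in\LDPA$, which the paper leaves implicit; both constructions are fine.
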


\begin{proposition}\label{prop:clotpa}
  (1) The commutative closure of any semilinear language is in $\LDPA$.
  (2)
  $\LDPA$ is not closed under morphisms.
\end{proposition}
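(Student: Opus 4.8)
The plan for part~(1) is to route everything through the equivalence between deterministic constrained automata and DetPA. Let $L \subseteq \Sigma^*$ be a semilinear language with $\Sigma = \{a_1,\ldots,a_n\}$; by definition of a semilinear language, $\pkh(L) \subseteq \bbn^n$ is a semilinear set, and $c(L) = \{w \st \pkh(w) \in \pkh(L)\}$. I would take the automaton $A$ consisting of a single state $q_0$, which is both initial and final, carrying exactly one self-loop transition $t_i$ labelled $a_i$ for each $i$; this automaton is deterministic. Every word $w \in \Sigma^*$ has a unique accepting path $\pi_w$ in $A$, and identifying the transition $t_i$ with the letter $a_i$ we get $\pkh(\pi_w) = \pkh(w)$, where the left-hand side denotes the transition-count vector of the run. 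Consequently the deterministic constrained automaton $(A, \pkh(L))$ has language exactly $c(L)$, so $c(L) \in \LDCA = \LDPA$.

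For part~(2) the plan is to exhibit a DetPA language whose image under an (erasing) morphism is $\EQUAL$, which is not in $\LDPA$ by Proposition~\ref{prop:equal}. Fix a fresh separator $\$$ and put $L' = \{a,b\}^* \cdot \$ \cdot \{a^n \# a^n \st n \in \bbn\}$ over the alphabet $\{a,b,\#,\$\}$. First I would check $L' \in \LDPA$ by giving a deterministic constrained automaton: an initial state $q_0$ with self-loops on $a$ and $b$ and a $\$$-transition to a state $q_1$; at $q_1$, a self-loop on $a$ --- call this transition $t$ --- and a $\#$-transition to the unique final state $q_2$; at $q_2$, a self-loop on $a$, call it $t'$; and the semilinear constraint demanding that the counts of $t$ and $t'$ be equal. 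This automaton is deterministic, and since from $q_0$ one can only leave via $\$$ and from $q_1$ one can only reach $q_2$ via $\#$, every accepting run uses the $\$$- and $\#$-transitions exactly once, so its language is exactly $L'$. Finally let $h$ be the morphism that erases $\$$ and fixes $a,b,\#$ (the closure table's $h$ ranges over all morphisms, so such an $h$ is admissible); then $h(L') = \{a,b\}^* \cdot \{a^n\#a^n \st n\in\bbn\} = \EQUAL \notin \LDPA$, hence $\LDPA$ is not closed under morphisms.

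Both arguments are short once the witnesses are in hand, so the substance lies in choosing them. For~(1) the only point to get right is that on the single-state automaton the Parikh image of the run coincides with the Parikh image of the input word. For~(2) the step I expect to be the crux is the modelling idea itself: inserting an explicit marker $\$$ between the unconstrained prefix $\{a,b\}^*$ and the balanced block $a^n\#a^n$ is exactly what makes the language deterministic, since a deterministic device then knows where the counting phase begins, whereas erasing $\$$ reinstates the nondeterministic guess of where that block starts --- the very obstruction that Proposition~\ref{prop:equal} shows cannot be removed deterministically. After that decision, verifying determinism of the displayed automaton, that its language is $L'$, and that $h(L') = \EQUAL$, is entirely routine.
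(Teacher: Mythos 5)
Both parts of your proposal are correct. For part~(1) your construction is exactly the paper's: the one-state automaton with one self-loop per letter and constraint set $\pkh(L)$, read through the (deterministic) constrained-automaton equivalence. For part~(2) you take a genuinely different route. The paper's argument is structural rather than example-driven: given any CA $(A,C)$ with transition set $\delta$, relabel each transition $t$ by the fresh letter $t$ itself to obtain an automaton $B$ over the alphabet $\delta$; $B$ is automatically deterministic, and $L(A,C)=\mu_A(L(B,C))$, so \emph{every} PA language is the image of a DetPA language under a letter-to-letter morphism, and nonclosure follows from $\LPA\neq\LDPA$ (Proposition~\ref{prop:equal}). Your argument instead exhibits an explicit witness: the DetPA language $L'=\{a,b\}^*\cdot\$\cdot\{a^n\#a^n \st n\in\bbn\}$, whose image under the morphism erasing $\$$ is $\EQUAL\notin\LDPA$; your verification of the witness is sound. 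The difference is not merely stylistic. The paper's version proves the stronger fact that $\LDPA$ is not closed even under length-preserving (hence nonerasing) morphisms --- a distinction the paper cares about elsewhere, e.g.\ Proposition~\ref{prop:clotapa} separates nonerasing from general morphisms for APA --- and it yields the reusable observation that $\LPA$ is precisely the class of letter-to-letter morphic images of $\LDPA$ languages. Your version only refutes closure under erasing morphisms, which still suffices for the statement as written, and buys concreteness: an explicit automaton and morphism one can check by inspection, together with a clear intuition for why erasing the marker $\$$ destroys determinism.
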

\begin{proof}
  \emph{(1).} Let $\Sigma = \{a_1, \ldots, a_n\}$, $L \subseteq \Sigma^*$ a
  semilinear language, and $C = \pkh(L)$.  Define $A$ to be an automaton with
  one state, initial and final, with $n$ loops, the $i$-th labeled $(a_i,
  \bar{e_i}) \in \Sigma \times \{\bar{e_i}\}_{1 \leq i \leq n}$.  Then $c(L)
  = L(A, C)$.  \emph{(2)} is straightforward
  as any language of $\LPA$ is the image by a morphism of a language in
  $\LDPA$.  Indeed, say $(A, C)$ is a CA and let $B$ be the copy of $A$ in
  which the transition $t$ is relabeled $t$; then $B$ is deterministic and
  $L(A, C) = \mu_A(L(B, C))$.  This implies the nonclosure of $\LDPA$ under
  morphisms.
  %
\end{proof}

Note that \emph{(1)} from Proposition~\ref{prop:clotpa} implies that
both $\LPA$ and $\LDPA$ are closed under commutative closure, as both are
classes of semilinear languages~\cite{klaedtke-ruess03}.

\begin{proposition}
  Neither $\LPA$ nor $\LDPA$ is closed under starring.
\end{proposition}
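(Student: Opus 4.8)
The plan is to exhibit a single witness language $L$ that is recognized by a deterministic PA but whose Kleene star is not recognized by any (nondeterministic) PA. Since $\LDPA\subseteq\LPA$, one such $L$ settles both parts at once: for $\LPA$ we get $L\in\LPA$ with $L^*\notin\LPA$, and for $\LDPA$ we get $L\in\LDPA$ with $L^*\notin\LPA\supseteq\LDPA$, hence $L^*\notin\LDPA$. I would take $L=\{a^nb^n\st n\ge 1\}$.

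First I would check $L\in\LDPA$. The deterministic automaton with transitions $q_0\xrightarrow{a}q_1$, $q_1\xrightarrow{a}q_1$, $q_1\xrightarrow{b}q_2$, $q_2\xrightarrow{b}q_2$ and $q_2$ final recognizes $a^+b^+$; equipping it with the semilinear constraint that the $q_1$-loop on $a$ and the $q_2$-loop on $b$ are used the same number of times turns it into a deterministic constrained automaton for $L$, and DetCA capture $\LDPA$.

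The crux is showing $L^*\notin\LPA$. One first observes that every nonempty word of $L$ begins with $a$ and ends with $b$, so no concatenation of $L$-words merges any runs; hence $L^*=\{\eps\}\cup\{a^{n_1}b^{n_1}\cdots a^{n_k}b^{n_k}\st k\ge 1,\ n_i\ge 1\}$. Suppose $L^*\in\LPA$ and let $p,\ell$ be the constants of Lemma~\ref{lem:tdbpa}. Apply the lemma to $w=(a^pb^p)^N$ for $N$ chosen large enough in terms of $p,\ell$ (so that $|w|>\ell$ and a prefix of length $\ell$ still leaves a block to its right; $N=\ell$ suffices). The lemma yields $w=uvxvz$ with $0<|v|\le p$, $|x|>p$, $|uvxv|\le\ell$ and $uv^2xz\in L^*$; I would reach a contradiction by showing $uv^2xz\notin L^*$ for every such factorization — so I use only one of the lemma's two conclusions. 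Since $|v|\le p$ while $w$ consists of $a$-runs and $b$-runs of length exactly $p$, the factor $v$ must have one of the four shapes $a^i$, $b^i$, $a^ib^j$, $b^ia^j$ with $i,j\ge 1$, and since $|x|>p$ its two occurrences lie in distinct blocks $a^pb^p$. A short run-length bookkeeping in each case shows that passing from $uvxvz$ to $uv^2xz$ creates a maximal $a$-run or $b$-run whose length no longer matches the adjacent run of the other letter: e.g.\ for $v=a^i$ the block containing the first $v$ becomes $a^{p+i}b^p$, which cannot be a factor of any word of $L^*$; for $v=a^ib^j$ it produces a factor $a^pb^ja^i$, forcing a block $a^pb^j$ with $j<p$; and for $v=b^ia^j$ it forces either a block $a^jb^i$ with $i\ne j$, or, when $i=j$, a later block $a^pb^{p-i}$ with $i\ge 1$. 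In every case $uv^2xz\notin L^*$, a contradiction; therefore $L^*\notin\LPA$, and by the remark above neither $\LPA$ nor $\LDPA$ is closed under starring.

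The only place where real work is needed is the run-length case analysis on the shape of $v$ in the last step — in particular the $v=b^ia^j$ case, where inserting the extra copy of $v$ perturbs two spots at once, so one must track both the $a^jb^i$ block it creates and the consequences when $i=j$. The rest (the DetCA construction, the "no merging" observation about $L^*$, and the reduction of the $\LDPA$ case to the $\LPA$ case) is entirely routine.
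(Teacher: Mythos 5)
Your proof is correct and follows essentially the same route as the paper's: the witness is $L=\{a^nb^n\}$, the word is $(a^pb^p)^{O(\ell)}$, and the contradiction comes from Lemma~\ref{lem:tdbpa} via the same four-way case analysis on the shape of $v$ (you merely spell out the $v=b^ia^j$ case and the $\LDPA$ reduction that the paper leaves implicit).
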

\begin{proof}
  We show that the starring of $L = \{a^nb^n \st n \in \bbn\}$ is not in
  $\LPA$.  Suppose $L^* \in \LPA$, and let $w = (a^pb^p)^\ell$, where
  $\ell,p$ are given by Lemma~\ref{lem:tdbpa}.  The same lemma asserts that
  $w = uvxvz$, such that, in particular, $uv^2xz$ and $uxv^2z$ are in $L^*$.
  Now suppose $v = a^i$ for some $i \leq p$.  Then $uv^2x$ contains
  $a^{p+i}b^p$ with no more $b$'s on the right.  Thus $uv^2xz \not\in L^*$.
  The case for $v = b^i$ is similar.  Now suppose $v = a^ib^j$ with $i,j >
  0$.  Then $uv^2x$ contains $\cdots a^pb^ja^ib^p \cdots$, but $i < p$, thus
  $uv^2xz \not\in L^*$.  The case $v = b^ia^j$ is similar.  Thus $L^* \not\in
  \LPA$.
\end{proof}
\squeezetopsep%
\Remark Baker and Book~\cite{baker-book74} already note, in different terms,
that if $\LPA$ were closed under starring, it would be an intersection closed
full AFL containing $\{a^nb^n \st n \geq 0\}$, and so would be equal to the
class of Turing-recognizable languages.  Thus $\LPA$ is not closed under
starring.


\subsection{Parikh automata and reversal-bounded counter machines}

\KaR noticed in~\cite{klaedtke-ruess02} that Parikh automata recognize the
same languages as reversal-bounded counter machines, a model introduced by
Ibarra~\cite{ibarra78}:
\begin{definition}[Reversal-bounded counter machine~\cite{ibarra78}]
  A \emph{one-way, $k$-counter machine} $M$ is a 5-uple $(Q,
  \Sigma, \delta, q_0, F)$ where $Q$ is a finite set of states, $\Sigma$ is
  an alphabet, $\delta \subseteq Q\times (\Sigma \cup \{\sharp\}) \times \{0,
  1\}^k \times Q \times \{S, R\} \times \{-1, 0, +1\}^k$ is the transition
  function, $q_0 \in Q$ is the initial state and $F \subseteq Q$ is the set
  of final states.  Moreover, we suppose $\sharp \not\in \Sigma$.  The
  machine is \emph{deterministic} if for any $(p, \ell, \bar{x})$, there
  exists at most one $(q, h, \bar{v})$ such that
  $(p,\ell,\bar{x},q,h,\bar{v}) \in \delta$.
  On input $w$, the machine starts with a read-only tape containing
  $w\sharp$, and its head on the first character of~$w$.  Write $c_i$ for the
  $i$-th counter, then a transition $(p,\ell,\bar{x},q,h,\bar{v}) \in \delta$
  is taken if the machine is in state $p$, reading character $\ell$ and $c_i
  = 0$ if $x_i = 0$ and $c_i > 0$ if $x_i = 1$, for all $i$.  The machine
  then enters state $q$, its head is moved to the right iff $h = R$, and
  $\bar{v}$ is added to the counters.  If the head falls off the tape, or if
  a counter turns negative, the machine rejects.  A word is accepted if an
  execution leads to a final state.
  The machine is \emph{reversal-bounded (RBCM)} if there exists an integer
  $r$ such that any accepting run changes between increments and decrements
  of the counters a (bounded) number of times less than $r$.  We write
  \emph{DetRBCM} for deterministic RBCM.  We write $\LRBCM$ (resp. $\LDRBCM$)
  for the class of languages recognized by RBCM (resp. DetRBCM).
\end{definition}\squeezeafterenv

In~\cite[Section A.3]{klaedtke-ruess02}, it is shown that PA
have the same expressive power as (nondeterministic) RBCM.  Although Fact~30
of~\cite{klaedtke-ruess02}, on which the authors rely to prove this result,
is technically false as stated,\footnote{Fact~30 of~\cite{klaedtke-ruess02}
  states the following.  Consider a RBCM $M$ which, for any counter, changes
  between increment and decrement only once.  Let $M'$ be $M$ in which
  negative counter values are allowed and the zero-tests are ignored.  Then a
  word is claimed to be accepted by $M$ iff the run of $M'$ on the same word
  reaches a final state with all its counters nonnegative.  A counter-example
  is the following.  Take $A$ to be the minimal automaton for $a^*b$, and add
  a counter for the number of $a$'s that blocks the transition labeled $b$
  unless the counter is nonzero.  This machine recognizes $a^+b$.  Then by
  removing this test, the machine now accepts $b$.} the small gap there can
be fixed so that:\squeezetopsep
\begin{proposition}[\cite{klaedtke-ruess02}]\label{prop:lparbcm}%
  $\LPA = \LRBCM$.
\end{proposition}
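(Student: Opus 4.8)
The plan is to prove the equality $\LPA = \LRBCM$ by two inclusions, proving each by a direct simulation. Throughout I will use the constrained automaton (CA) characterization of $\LPA$ established above, since counting transitions is precisely the kind of bookkeeping that RBCM counters do naturally.

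For $\LPA \subseteq \LRBCM$, I would start from a CA $(A,C)$ with $A$ having transitions $t_1,\dots,t_n$ and $C \subseteq \bbn^n$ semilinear. First I construct an RBCM $M$ that simulates $A$ on its input while maintaining $n$ counters, incrementing counter $i$ each time it simulates transition $t_i$; during this phase no counter is ever decremented, so the reversal count stays at $0$. When $M$ reaches the endmarker $\sharp$ in a final state of $A$, it must verify that the vector of counts lies in $C$. Since $C$ is a finite union of linear sets $\{\bar a_0 + \sum_j k_j \bar a_j\}$, it suffices (by nondeterministically guessing the component of the union and a canonical way to "count down") to check membership in one linear set. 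The key subroutine is: given counters holding a vector $\bar c$, decide whether $\bar c = \bar a_0 + \sum_j k_j \bar a_j$ for some $k_j \in \bbn$; this can be done by repeatedly choosing some generator $\bar a_j$ and subtracting it from the counters (which requires only one reversal per counter, from increment to decrement), then checking all counters equal $\bar a_0$ at the end via zero-tests. Care must be taken that the total number of reversals is bounded — it is, since each counter reverses at most once — and that only finitely many counters are used. This yields a (nondeterministic) RBCM for $L(A,C)$.

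For $\LRBCM \subseteq \LPA$, I would take an RBCM $M$ with $k$ counters and reversal bound $r$. The standard idea is to split each counter's lifetime into at most $r{+}1$ maximal "phases," each of which is monotone (purely nondecreasing or purely nonincreasing). Within a monotone phase a counter behaves like a Parikh register: the net change is the sum of the $\pm 1$ increments applied, and a zero-test at the boundary between phases translates into a linear equation. Concretely, I would build a finite automaton $A'$ over the input alphabet that guesses, for each counter, the sequence of phase-switch points and the current phase polarity, simulates $M$'s control and head movement (the head is one-way, so this is just following $M$'s state), and records along each transition a vector tracking how much each counter would be incremented/decremented in its current phase. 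The constraint set $C'$ then encodes: every simulated zero-test succeeded (the relevant partial sums vanish), the reversal structure is consistent, and the run ends in an accepting state. Since all of these are linear/Presburger conditions on the transition counts, $C'$ is semilinear, and $(A',C')$ is a CA — hence in $\LPA$ — recognizing $L(M)$.

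The main obstacle — and the reason the referenced Fact~30 of~\cite{klaedtke-ruess02} is delicate — is the faithful handling of zero-tests across phase boundaries. One cannot simply "ignore zero-tests and check nonnegativity at the end," as the counter-example in the footnote shows: a zero-test in the \emph{middle} of the run can block a transition whose blocking is invisible to any end-of-run check, and conversely removing the test enables spurious runs. The fix is to make the semilinear constraint (resp.\ the RBCM's countdown subroutine) certify the value of each counter at \emph{every} moment a zero-test is performed, not just at the end — i.e., to track the running partial sums within each monotone phase and impose, for each guessed zero-test position, that the corresponding partial sum is exactly $0$ (and for a "nonzero" test, strictly positive). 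Because the number of reversals is bounded by $r$, there are only boundedly many phases, so only boundedly many such partial-sum expressions are needed, and each is a linear function of the transition-count vector; thus the whole condition remains expressible as a semilinear $C'$. Verifying that this bookkeeping is both sound and complete, and that the reversal bound is genuinely respected in the forward direction, is the technical heart of the argument; the rest is routine automata construction.
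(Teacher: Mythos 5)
The paper does not actually prove this proposition: it attributes the result to \KaR~\cite{klaedtke-ruess02}, points out that Fact~30 there (``drop the zero-tests and check nonnegativity at the end'') is false as stated, and merely asserts that the gap can be repaired. Your proposal therefore does more than the paper does, and it is on the right track. The direction $\LPA \subseteq \LRBCM$ via transition-counting followed by guessing a linear component of $C$ and repeatedly subtracting generators is sound (each counter goes up once and then only down, so reversals are bounded), and it is essentially the nondeterministic analogue of the construction the paper does spell out for $\LDPA \subseteq \LDRBCM$ in Proposition~\ref{prop:ldpadrbcm}, which instead evaluates a quantifier-free Presburger formula clause by clause. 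For the converse you correctly identify the delicate point --- zero-tests in mid-run are exactly what breaks Fact~30 --- and the correct repair, namely that reversal-boundedness bounds the number of times each counter's zero/nonzero \emph{status} can change, so the automaton can guess the status-change positions and the semilinear set can certify each guess by a linear condition on phase-restricted transition counts.

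Two places need tightening. First, ``certify the value of each counter at every moment a zero-test is performed'' is not literally implementable: the transition relation consults the status vector at \emph{every} step, so there are unboundedly many such moments; what is bounded is the number of status changes, and the certification must be phrased in those terms (between consecutive guessed change points the status is constant and is enforced by the finite control; only the boundedly many change points generate linear constraints). You gesture at this but should say it explicitly. Second, your simulation of the head ignores stationary ($S$) moves: an RBCM can loop on a single input cell while updating counters an unbounded number of times, which a CA cannot mimic at one transition per input letter. This is handled by observing that such a loop's net effect is $k$ copies of a fixed increment vector for an existentially quantified $k \geq 0$, and that semilinear sets are closed under projection, so the loop multiplicities can be folded into the constraint set rather than into the automaton; without this remark the construction of $A'$ is incomplete.
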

\pagebreak
\def\scirc{\hbox{\footnotesize $\spadesuit$}}\def\sbullet{\hbox{\footnotesize
    $\clubsuit$}}\def\sdiamond{\hbox{\footnotesize $\heartsuit$}}%
Further, we study how the notion of determinism compares in
the two models.  Let $\NSUM = \{a^n\scirc b^{m_1}\#b^{m_2}\#\cdots
\#b^{m_k}\sbullet c^{m_1+\cdots+m_n}\st k \geq n \geq 0 \land (\forall i)[m_i
\in \bbn]\}$: the number of $a$'s is the number of $m_i$'s to add to get the
number of $c$'s.  Note that $\NSUM$ is not context-free.  Then:\squeezetopsep
\begin{proposition}\label{prop:ldpadrbcm}%
  $\LDPA \subsetneq \LDRBCM$ and $\NSUM \in \LDRBCM \setminus \LDPA$.
\end{proposition}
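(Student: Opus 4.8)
The plan is to establish the three facts $\LDPA\subseteq\LDRBCM$, $\NSUM\in\LDRBCM$, and $\NSUM\notin\LDPA$; the strict inclusion then follows from the last two, with $\NSUM$ as an explicit witness. For $\LDPA\subseteq\LDRBCM$ I would simulate a given DetPA $(A,C)$ of dimension $d$ over $\Sigma\times D$ directly by a DetRBCM: the machine keeps $A$'s unique current state in its finite control, and on each letter $a$ applies the at-most-one transition $(q,(a,\bar v),q')$ leaving the current state $q$ (rejecting if there is none), moving to $q'$ and adding $\bar v$ to $d$ dedicated counters; since $D\subseteq\bbn^d$ these counters only grow during the scan, so no reversal occurs there. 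At the endmarker it must test whether the counter vector lies in $C$, i.e., satisfies a fixed Presburger formula $\psi$: each divisibility-by-a-constant atom of $\psi$ can be maintained as a residue in the finite control (updated as the $\bar v$'s are added), and each linear-inequality atom can be evaluated with a bounded number of reversals by accumulating its positive and negative parts into fresh counters and comparing; keeping, during the scan, one extra monotone copy of the $d$-counter vector per atom of $\psi$ provides enough working registers, and the Boolean combination is finished in the finite control.

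For $\NSUM\in\LDRBCM$ I would build a two-counter, one-reversal DetRBCM. It reads $a^n$ into $c_1$; it then processes the blocks, treating the $\scirc$ (which enters block $1$) and each $\#$ (which enters the next block) uniformly --- on entering a block, if $c_1>0$ it decrements $c_1$ and stays in an \emph{adding} phase, otherwise it switches to a \emph{skipping} phase --- and in the adding phase every $b$ increments $c_2$, while in the skipping phase $b$'s are ignored. This makes exactly the first $n$ blocks feed $c_2$, so $c_2=m_1+\cdots+m_n$, and just before $\sbullet$ one has $c_1=0$ iff at least $n$ blocks occurred, so the machine rejects there when $c_1>0$, enforcing $k\ge n$. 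Finally it decrements $c_2$ on every $c$ and accepts iff $c_2$ reaches $0$ precisely at the endmarker. Each counter reverses at most once, and every move is forced by the current letter together with the zero-tests, so the machine is deterministic and reversal-bounded.

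The crux is $\NSUM\notin\LDPA$, and here I would argue directly rather than through the pumping lemmas. Suppose $(A,C)$ were a DetPA for $\NSUM$ with state set $Q$, and look at inputs $u\sbullet c^{t}$ with $u=a\scirc b^{m}$, so that $u\sbullet c^{t}\in\NSUM$ iff $t=m$. By determinism there is a unique run: reading $u\sbullet$ reaches a single state and a single accumulated vector, after which reading $c$'s follows a single path that cannot both revisit a state and later die, hence is eventually periodic with tail and cycle length at most $|Q|$; along each residue class modulo the cycle length the accumulated vector grows by a fixed period-vector, while ``the current state is final'' is periodic. Using that a semilinear set is ultimately periodic along any fixed direction with period and threshold bounded independently of the base point, I would extract constants $T_{0},P_{0}$ depending only on $(A,C)$ such that for every such $u$ and all $t\ge T_{0}$, membership of $u\sbullet c^{t}$ in $L(A,C)$ is periodic in $t$ with period $P_{0}$. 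Taking $m\ge T_{0}$ then forces $u\sbullet c^{m+P_{0}}$ to be accepted together with $u\sbullet c^{m}$, contradicting $m+P_{0}\neq m$; hence no DetPA recognizes $\NSUM$.

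The step I expect to be the main obstacle is the uniformity of $T_{0}$ in that last argument: after $a\scirc b^{m}\sbullet$ the accumulated vector can be of size $\Theta(m)$, and I must show that the point past which the behaviour on $c^{t}$ becomes $P_{0}$-periodic does not drift with that vector. This is precisely where the structure theory of semilinear (Presburger) sets is needed --- local periodicity with parameters bounded independently of the starting point --- and I would isolate it as a self-contained lemma. By comparison, the DetRBCM for $\NSUM$ and the simulation $\LDPA\subseteq\LDRBCM$ are routine, the only mild care being to carry enough monotone counter copies so that the concluding Presburger test stays within a bounded number of reversals.
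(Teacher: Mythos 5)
Your first two steps are sound: the simulation $\LDPA\subseteq\LDRBCM$ is essentially the paper's argument (the paper counts transition occurrences via the constrained-automaton view where you add the $D$-vectors directly, but both then evaluate the Presburger constraint at the endmarker with a bounded number of auxiliary counters and reversals), and your two-counter, one-reversal DetRBCM for $\NSUM$ correctly fills in a construction the paper omits.

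The crux, $\NSUM\notin\LDPA$, does not go through, for two reasons. First, the lemma you propose to isolate is false: for a semilinear $C\subseteq\bbn^d$ and a fixed direction $\bar p$, the set $\{j : \bar c + j\bar p\in C\}$ is indeed ultimately periodic with a period that can be chosen uniformly in $\bar c$, but the \emph{threshold} cannot be bounded independently of the base point. Take $C=\{(x,y): x=y\}$ and $\bar p=(1,0)$: for $\bar c=(0,m)$ the set is the singleton $\{m\}$, which becomes (vacuously) periodic only past $m+1$. This is exactly the situation your run analysis produces, since the accumulated vector after reading $a\scirc b^{m}\sbullet$ carries $m$ in some coordinate. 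Second, and more decisively, your entire test family $a\scirc b^{m}\sbullet c^{t}$ lies in the slice of $\NSUM$ with $n=1$, and that slice \emph{is} a DetPA language (count the $b$'s of the first block and the $c$'s, and compare); so no argument confined to these words can yield a contradiction --- if your periodicity claim were true it would equally prove $\{a^mb^m\st m\in\bbn\}\notin\LDPA$, which is false. The difficulty of $\NSUM$ comes from $n$ being unbounded, i.e., from the machine having to remember how many of the $k$ blocks to sum. The paper handles this by rewriting the putative DetPA into finitely many DetPA for the rotated language $\SUMN$ (pushing $a^n$ to the end of the word, so that the information carried by the $a$-prefix must survive the whole block structure) and then applying the Nerode-based pumping lemma (Lemma~\ref{lem:tdbdpa}) to a word of the form $(a^pb)^\ell$, exactly as in the proof of Proposition~\ref{prop:equal}. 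You would need to replace your periodicity argument by something of that kind.
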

\begin{proof}
  We first show that $\LDPA \subseteq \LDRBCM$.  Let $(A, C)$ be a CA, where
  $A = (Q, \Sigma, \delta, q_0, F)$ is deterministic and let $\delta = \{t_1,
  \ldots, t_k\}$.  We define a DetRBCM of the same language in two steps.
  (1) First, let $M$ be the $k$-counter machine $(Q \cup \{q_f\}, \Sigma,
  \zeta, q_0, q_f)$, where $q_f \not\in Q$ and $\zeta$ is defined
  by:
  $$\zeta = \bigcup_{\bar{x} \in \{0,1\}^k} \bigg(\big\{(q, a, \bar{x}, q', R,
  \bar{e_i}) \st t_i = (q, a, q')\big\} \cup \big\{(q, \sharp, \bar{x}, q_f,
  S, \bar{0}) \st q \in F\big\}\bigg).$$%
  This machine (trivially a DetRBCM) does not make any test, and accepts (in
  $q_f$) precisely the words accepted by $A$.  Moreover, the state of the
  counters in $q_f$ is the Parikh image of the path taken (in $A$) to
  recognize the input word.  (2) We then refine $M$ to check that the counter
  values belong to $C$.
  We note that we can do that as a direct consequence of the proof
  of~\cite[Theorem 3.5]{ibarra-su99}, but this proof relied on nontrivial
  algebraic properties of systems $A\bar{y} = \bar{b}$, where $A$ is a
  matrix, $\bar{y}$ are unknowns and $\bar{b}$ is a vector; we present here
  an elementary proof.
  Recall that $C$ can be expressed as a quantifier-free first-order formula
  which uses the function symbol $+$, the congruence relations $\equiv_i$,
  for $i \geq 2$, and the order relation $<$ (see, e.g.,~\cite{enderton72}).
  So let $C$ be given as such formula $\phi_C$ with $k$ free variables.  Let
  $\phi_C$ be put in disjunctive normal form.  The machine $M$ then tries
  each and every clause of $\phi_C$ for acceptance.  First, note that a term
  can be computed with a number of counters and reversals which depends only
  on its size: for instance, computing $c_i + c_j$ requires two new counters
  $x, y$; $c_i$ is decremented until it reaches $0$, while $x$ and $y$ are
  incremented, so that their value is $c_i$; now decrement $y$ until it
  reaches $0$ while incrementing $c_i$ back to its original value; then do
  the same process with $c_j$: as a result, $x$ is now $c_i+c_j$.  Second,
  note that any atomic formula ($t_1 < t_2$ or $t_1 \equiv_i t_2$) can be
  checked by a DetRBCM: for $t_1 < t_2$, compute $x_1 = t_1$ and $x_2 = t_2$,
  then decrement $x_1$ and $x_2$ until one of them reaches $0$, if the first
  one is $x_1$, then the atomic formula is true, and false otherwise; for
  $t_1 \equiv_i t_2$, a simple automaton-based construction depending on $i$
  can decide if the atomic formula is true.  Thus, a DetRBCM can decide, for
  each clause, if all of its atomic formulas (or negation) are true, and in
  this case, accept the word.  This process does not use the read-only head,
  and uses a number of counters and a number of reversals bounded by the
  length of $\phi_C$.

  We now show that $\NSUM \in \LDRBCM \setminus \LDPA$.  We omit the fact
  that $\NSUM \in \LDRBCM$.
  %
  Now suppose $(A, C)$ is a DetPA such that $L(A, C) = \NSUM$, with $A = (Q,
  \Sigma \times D, \delta, q_0, F)$ also deterministic.  We may suppose that
  the projection on $\Sigma$ of $L(A)$ is a subset of
  $a^*\scirc(b^*\#)^*b^*\sbullet c^*$, so that there exist $k \geq 0$, $q_1,
  \ldots, q_k \in Q$, and $j \in \{0, \ldots, k\}$ such that $(q_i, (a,
  \bar{v_i}), q_{i+1}) \in \delta$, for $0 \leq i < k$ and some
  $\bar{v_i}$'s, and $(q_k, (a, \bar{v_k}), q_j) \in \delta$.  Moreover, we
  may suppose that no other transition points to one of the $q_i$'s, and that
  all transitions $t = (q_i, (\ell, \bar{v}), q) \in \delta$ such that $q
  \not\in \{q_0, \ldots, q_k\}$ are with $\ell = \scirc$; let $T$ be the set
  of all such transitions~$t$.  We define $|T|$ DetPA such that the union of
  their languages is $\SUMN = \{\scirc w\sdiamond a^n \st a^n\scirc w \in
  \NSUM\}$, that is, the strings of $\NSUM$ with $a^n$ pushed at the end.
  For $t \in T$, define $A_t$ as the automaton similar to $A$ but which
  starts with the transition $t$ and delay the first part of the computation
  until the very end.  Formally, $A_t = (Q \cup \{q_0'\}, \Sigma \times D,
  \delta_t, q_0', \{\orig(t)\})$ where $\delta_t = (\delta \setminus T) \cup
  \{(q_0', \mu(t), \dest(t)\} \cup \{(q_f, (\sdiamond, \bar{0}), q_0) \st q_f
  \in F\}$ with $q_0'$ a fresh state.  Now for $\omega \in L(A)$, let $t$ be
  the transition labeled $\scirc$ taken when $A$ reads $\omega$, and let
  $\omega = \omega_1\mu(t)\omega_2$.  Then $\mu(t)\omega_2(\sdiamond,
  \bar{0})\omega_1 \in L(A_t)$, and this word has the same extended Parikh
  image as $\omega$.  Thus we have that $\bigcup_{t \in T} L(A_t, C) =
  \SUMN$, and if $\NSUM \in \LDPA$, then $\SUMN \in \LDPA$.  A proof similar
  to Proposition~\ref{prop:equal} then shows that $\SUMN \not\in \LDPA$, a
  contradiction; thus $\NSUM \not\in \LDPA$.
\end{proof}

The parallel drawn between (Det)PA and (Det)RBCM allows transferring some
RBCM and DetRBCM results to PA and DetPA.  An example is a consequence of the
following lemma proved in 2011 by Chiniforooshan \emph{et al}.\
\cite{chiniforooshan-daley-ibarra-kari-seki11} for the purpose of showing
incomparability results between different models of reversal-bounded counter
machines:\squeezetopsep
\begin{lemma}[\cite{chiniforooshan-daley-ibarra-kari-seki11}]\label{lem:chini}
  Let a DetRBCM express $L\subseteq\Sigma^*$.  Then there exists
  $w\in\Sigma^*$ such that $L \cap w\Sigma^*$ is a nontrivial regular
  language.
\end{lemma}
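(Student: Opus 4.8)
The plan is to locate a prefix $w$ after which the running configuration of the machine is so constrained that the residual language $\{x\in\Sigma^* : wx\in L\}$ becomes regular; then $L\cap w\Sigma^*=w\cdot\{x:wx\in L\}$ is regular, and by taking $w$ to be a prefix of some word of $L$ this language is also nonempty, hence nontrivial. First I would normalise the given DetRBCM $M$: equip it, for each of its $k$ counters, with a bounded register recording that counter's reversal count and current direction, and divert $M$ to a rejecting sink as soon as a counter would reverse an $(r+1)$st time. Accepting runs already reverse fewer than $r$ times per counter, so $L$ is unchanged, while now \emph{every} run of $M$ — not merely the accepting ones — makes at most $r$ reversals per counter, hence at most $R:=rk$ reversals in total.

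For the choice of $w$: for $v\in\Sigma^*$ write $\nu(v)$ for the number of reversals in $M$'s run on $v$, and call $v$ \emph{live} when $v\Sigma^*\cap L\ne\emptyset$. Assuming $L\ne\emptyset$, the prefix $\varepsilon$ is live and $\nu(v)\le R$ for all $v$, so $m^*:=\max\{\nu(v):v\text{ live}\}$ exists; fix a live $w$ attaining it. Along any chain of live prefixes extending $w$ the value $\nu$ is non-decreasing and bounded above by $m^*$, hence constant; therefore $M$ performs no reversal while reading any $x\in\Sigma^*$ for which $wx$ is live — in particular for every $x$ with $wx\in L$.

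Finally I would recognise the residual with a ``reject on reversal'' finite automaton $D$: it simulates $M$ from the configuration $C_w$ reached after reading $w$, representing each counter by its exact value capped at a fixed threshold (one more than its value in $C_w$, a constant); while no counter has reversed relative to its direction in $C_w$ this representation is faithful, stationary-move loops are detected through state repetition and treated as rejection, and acceptance is read off the state at end of input; should the simulated $M$ ever reverse a counter, $D$ passes to a rejecting sink. Then $L(D)=\{x:wx\in L\}$: if $wx\in L$ no reversal ever occurs, so $D$ simulates $M$ faithfully and accepts; if $wx\notin L$ then either a reversal occurs and $D$ rejects, correctly, or none does and $D$ again simulates $M$ faithfully and rejects. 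Hence $L\cap w\Sigma^*=w\cdot L(D)$ is regular, and it is nonempty because $w$ is live, which is what was wanted. The real obstacle is getting a \emph{single} prefix to do both jobs — deep enough that no surviving reversal can still affect acceptance, yet lying on a path into $L$ so that the cone meets $L$ — and this is exactly what maximising $\nu$ over \emph{live} prefixes buys; the delicate points are the verification that no reversal occurs on live continuations, the correctness of $D$ on continuations that leave the live part of the prefix tree, and a suitable normalisation of the end-of-input processing so that it never introduces a spurious reversal on an accepting run. The remaining ingredients — the finite-state simulation, the threshold capping, and the loop detection — are routine bookkeeping.
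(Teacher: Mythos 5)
The paper only cites this lemma from Chiniforooshan \emph{et al.} and gives no proof, so your argument must stand on its own; it has a genuine gap, and it sits exactly where you flag a ``delicate point'': the end-of-input processing. Your maximality argument controls only the reversals performed while the head is still inside $wx$ --- that is the only reading of $\nu$ under which $\nu$ is monotone along chains of prefixes (if $\nu(v)$ counted the reversals of the whole computation on $v\sharp$, monotonicity would fail, since the computations on $v\sharp$ and on $vu\sharp$ diverge once the head passes position $|v|$). Nothing therefore prevents all reversals from happening during the $\sharp$-phase. Concretely, take the DetRBCM for $\{a^nb^n \mid n\ge 1\}$ that increments $c_1$ on each $a$, increments $c_2$ on each $b$, and upon reading $\sharp$ repeatedly decrements both counters with stationary moves, accepting iff they reach $0$ simultaneously. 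Every run of this machine performs $0$ reversals while the head is inside the input proper, so your $m^*=0$ and $w=\varepsilon$ is a legitimate choice of maximizer; your automaton $D$ would then have to recognize $\{a^nb^n\}$, which is impossible (and since $D$ sinks on any detected reversal, it would in fact reject every nonempty word). Worse, for this machine \emph{no} live prefix $w$ has the property that accepting continuations perform no reversals beyond position $|w|$ --- the two reversals always occur at the end-marker --- so the strategy ``find a prefix after which accepting runs are reversal-free'' cannot be rescued by a cleverer choice of $w$.

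The missing ``normalisation of the end-of-input processing'' is consequently not routine bookkeeping: it would amount to proving that every DetRBCM can be converted into an equivalent one whose $\sharp$-phase performs no reversals on accepting runs, i.e.\ that all counter comparisons can be pushed back into the input-reading phase while preserving determinism. That is essentially the hard content of the lemma: the set of configurations from which the $\sharp$-phase accepts is in general only a semilinear set of counter values, and testing membership in a semilinear set itself costs reversals (this is exactly what the construction in Proposition~\ref{prop:ldpadrbcm} spends extra counters and reversals on), so the residual language your $D$ must decide is a priori a DetPA-style language, not a regular one. The remainder of your construction --- capping the decreasing counters at their value in $C_w$, tracking only the zero/nonzero status of the increasing ones, rejecting on any detected reversal, detecting stationary loops by mode repetition, and choosing $w$ live to get nonemptiness --- would be fine \emph{given} such a normalisation, but as written the proof does not establish the lemma for the model defined in this paper.
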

\squeezeafterenv Variants of the language $\EQUAL$ from
Proposition~\ref{prop:equal} can be shown outside $\LDPA$ in this way.  For
instance, for $\Sigma=\{a,b\}$, $\SANBN = \Sigma^* \cdot \{a^nb^n \st n \in
\bbn\}$ is such that any $w \in \Sigma^*$ makes $\SANBN \cap w\Sigma^*$
nonregular.  Although Lemma~\ref{lem:chini} thus gives languages in $\LPA
\setminus \LDPA$, Lemma~\ref{lem:chini} seemingly does not apply to $\EQUAL$
itself since $\EQUAL \cap \#\{a,b,\#\}^* = \{\#\}$ is regular.


\section{Affine Parikh automata}\label{sec:apa}

A PA of dimension $d$ can be viewed as an automaton in which each transition
updates a vector $\bar{x}$ of $\bbn^d$ using a function $\bar{x} \leftarrow
\bar{x} + \bar{v}$ where $\bar{v}$ depends only on the transition.  At the
end of an accepting computation, the word is accepted if $\bar{x}$ belongs to
some semilinear set.  We propose to generalize the updating function to an
affine function.  We start by defining the model, and show that defining it
over $\bbn$ is at least as general as defining it on $\bbq$.  We study the
expressiveness of this model, and show it is strictly more powerful than PA.
We then note that deterministic such automata can be normalized so as to
essentially trivialize their automaton component.  We then study nonclosure
properties and decidability problems associated with APA, leading to the
observation that APA lack some desirable properties --- e.g., properties
usually needed for any real-world application.

In the following, we consider the vectors in $\bbk^d$ to be \emph{column}
vectors.  Let $d, d' > 0$.  A function $f\colon \bbk^d \to \bbk^{d'}$ is a
(total) \emph{affine function} if there exist a matrix $M \in \bbk^{d'\times
  d}$ and $\bar{v} \in \bbk^{d'}$ such that for any $\bar{x} \in \bbk^d$,
$f(\bar{x}) = M.\bar{x} + \bar{v}$; it is \emph{linear} if $\bar{v} =
\bar{0}$.  We note such a function $f = (M, \bar{v})$.  We write
$\func_d^\bbk$ for the set of affine functions from $\bbk^d$ to $\bbk^d$ and
view $\func_d^\bbk$ as the monoid $(\func_d^\bbk, \diamond)$ with $(f
\diamond g) (\bar{x}) = g(f(\bar{x}))$.

\begin{definition}[Affine Parikh automaton]
  A $\bbk$-\emph{affine Parikh automaton ($\bbk$-APA)} of dimension $d$ is a
  triple $(A, U, C)$ where $A$ is an automaton with transition set $\delta$,
  $U$ is a morphism from $\delta^*$ to $\func_d^\bbk$ and $C \subseteq
  \bbk^d$ is a $\bbk$-definable set; recall that $U$ need only be defined on
  $\delta$.  The language of the APA is $L(A, U, C) = \{\mu(\pi) \st \pi \in
  \apaths(A) \land (U(\pi))(\bar{0}) \in C\}$.  The \KAPA is said to be
  \emph{deterministic (\KDAPA)} if $A$ is.  We write \emph{$\LKAPA$}
  (resp. $\LKDAPA$) for the class of languages recognized by \KAPA
  (resp. \KDAPA).
\end{definition}
\squeezeafterenv
\Remark It is easily seen that \NAPA (resp. \NDAPA) are a generalization of
CA (resp. DetCA).  Indeed, let $(A, C)$ be a CA, and let $\pkh$ be the Parikh
image over the set $\delta$ of transitions of $A$.  Define, for $t\in\delta$,
$U(t) = (\Id, \pkh(t))$ where $\Id$ is the identity matrix of dimension
$|\delta| \times |\delta|$.  Then $L(A, C) = L(A, U, C)$; we will later see
that this containment is strict.

The arguments used by \KaR~\cite{klaedtke-ruess02} apply equally well to
\KAPA and \KDAPA, showing:\squeezetopsep
\begin{proposition}\label{prop:clotapa}%
  $\LKAPA$ and $\LKDAPA$ are effectively closed under union, intersection and
  inverse morphisms.  Moreover, $\LKAPA$ is closed under concatenation and
  nonerasing morphisms, and $\LKDAPA$ is closed under complement.
\end{proposition}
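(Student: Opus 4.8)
The plan is to transcribe, essentially verbatim, the constructions of \KaR{} for (Det)PA and (Det)CA, leaning on two observations. (i) The class of $\bbk$-definable sets (ranging over all dimensions) is an \emph{effective Boolean algebra} that is, moreover, effectively closed under the product $C.D$, under adjoining dummy coordinates, and under permuting coordinates; for $\bbk=\bbn$ this is Presburger arithmetic, and in general it is immediate from the first-order definition. (ii) $(\func_d^\bbk,\diamond)$ is a monoid, and for $f_i\in\func_{d_i}^\bbk$ the block-diagonal map $f_1\oplus f_2\colon(\bar x_1,\bar x_2)\mapsto(f_1(\bar x_1),f_2(\bar x_2))$ lies in $\func_{d_1+d_2}^\bbk$ and satisfies $(f_1\oplus f_2)\diamond(g_1\oplus g_2)=(f_1\diamond g_1)\oplus(f_2\diamond g_2)$; hence if $U_1,U_2$ are morphisms, so is $t\mapsto U_1(t)\oplus U_2(t)$. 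Given (i)--(ii), each closure property becomes an automaton surgery identical to the finite-automaton case, the only extra work being to rewrite the constraint.

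\emph{Intersection, union, inverse morphism.} For the intersection of $(A_1,U_1,C_1)$ and $(A_2,U_2,C_2)$ take the product automaton, put $U((t_1,t_2))=U_1(t_1)\oplus U_2(t_2)$, keep $F=F_1\times F_2$, and replace the constraint by $C_1.C_2$; determinism is preserved. For union, first complete $A_1,A_2$ with sink states, then run them in parallel on the product automaton, now with $F=(F_1\times Q_2)\cup(Q_1\times F_2)$ and two extra coordinates $f_1,f_2$ that every transition resets to record whether the current $A_i$-state is final; the new constraint $\{(\bar x_1,\bar x_2,f_1,f_2)\st(f_1{=}1\wedge\bar x_1\in C_1)\vee(f_2{=}1\wedge\bar x_2\in C_2)\}$ is $\bbk$-definable, and determinism is again preserved (for the nondeterministic model a plain disjoint union with a fresh initial state suffices). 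For $h^{-1}(L)$ with $(A,U,C)$ over $\Sigma$ and $h\colon\Gamma^*\to\Sigma^*$: since $A$ has no $\eps$-transitions there are, for each $b\in\Gamma$, only finitely many paths $\pi$ of $A$ labeled $h(b)$; add, for each such $\pi$ from $q$ to $q'$, a $b$-transition realizing the move $q\to q'$ with affine effect $U(\pi)\in\func_d^\bbk$ (routing through fresh copies of $q'$ if several paths share endpoints), leaving $C$ and $F$ untouched. An accepting path of the new automaton projects to one of $A$ labeled $h(w)$ with the same value $(U(\pi))(\bar 0)$, and conversely; and if $A$ is deterministic there is at most one path per $(q,b)$, so determinism survives.

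\emph{Concatenation and nonerasing morphisms (\KAPA only).} For $L_1L_2$ work in dimension $d_1+d_2$ and glue $A_1$ to $A_2$: for each $p\in F_1$ and each transition $(q_{0,2},a,q')$ of $A_2$, add a bridge transition $(p,a,q')$ whose affine map is the identity on the first $d_1$ coordinates and $U_2(q_{0,2},a,q')$ on the last $d_2$; transitions internal to $A_i$ act by $U_i$ on their own block and by the identity on the other. The initial state is $q_{0,1}$, the final states $F_2$ (plus $F_1$ when $\eps\in L_2$), and the constraint $C_1.C_2$; an accepting run splits uniquely at the bridge, so membership in $C_1.C_2$ says exactly that the guessed prefix lies in $L_1$ (it ends in $F_1$ by construction) and the suffix in $L_2$. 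For a nonerasing morphism $h\colon\Sigma^*\to\Gamma^*$, replace each transition $(q,a,q')$ of $A$ by a chain of $|h(a)|\ge 1$ transitions spelling $h(a)$ through fresh states, carrying $U(q,a,q')$ on the first edge and the identity on the rest, with $C$ unchanged. Nonerasingness is precisely what avoids $\eps$-transitions, and both constructions create nondeterminism, which is why they are claimed only for \KAPA.

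\emph{Complement (\KDAPA), and the main obstacle.} Let $(A,U,C)$ be a \KDAPA. Complete $A$ with a sink $\bot\notin F$ absorbing every missing transition (the affine maps there are irrelevant, say the identity); $A$ stays deterministic and every input now has exactly one run. Add a coordinate $f$ that every transition $(q,a,q')$ resets to $[q'\in F]$, make all states final, and take the constraint to be $\{(\bar x,f)\st f{=}0\}\cup\{(\bar x,1)\st\bar x\notin C\}$, which is $\bbk$-definable because $\bbk$-definable sets are closed under complement. Then the unique run on $w$ is accepted iff it ended outside $F$ or ended in $F$ with value outside $C$, i.e.\ iff $w\notin L$. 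The one thing to watch --- and, I expect, the only genuine obstacle in this proposition --- is the empty word: affine updates start from $\bar 0$, so $f$ reads $0$ on the empty run even when $q_0\in F$, and one must then add a fresh initial state (or decide the membership of $\eps$ by hand). Apart from that, every step is a routine transcription of the corresponding finite-automaton argument once facts (i)--(ii) are in place, and all the constructions are visibly effective.
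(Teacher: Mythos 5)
Your proof is correct and matches the paper's approach: the paper merely asserts that the Klaedtke--Rue{\ss} constructions for PA transfer to \KAPA and \KDAPA, and your constructions are exactly that transfer, with the right extra bookkeeping (block-diagonal affine maps, flag coordinates reset affinely, and products of $\bbk$-definable sets). The one detail to double-check is that the empty-word issue you flag for complement also affects your flag-based union construction (the flags read $0$ on the empty run even when an initial state is final), but the same fix applies.
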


We now show these models over $\bbn$ are at least as powerful as over $\bbq$.
First, we need the following technical lemma:
\begin{lemma}\label{lem:lin}
  For any \KAPA (resp. \KDAPA) there exists a \KAPA (resp. \KDAPA) where the
  functions associated with the transitions are linear, except for some
  transitions which can be taken only as the first transition of a nonempty
  run.
\end{lemma}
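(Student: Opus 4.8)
The plan is to embed an affine transformation on $\bbk^d$ as a \emph{linear} transformation on $\bbk^{d+1}$ by the standard homogenization trick: represent the point $\bar{x}\in\bbk^d$ by the column vector $(\bar{x};1)\in\bbk^{d+1}$, and encode the affine map $f=(M,\bar{v})$ by the $(d{+}1)\times(d{+}1)$ block matrix
$$
\widehat{M} \;=\;
\begin{pmatrix}
 M & \bar{v}\\
 \bar 0^{\,T} & 1
\end{pmatrix},
$$
which is linear and satisfies $\widehat{M}\cdot(\bar{x};1) = (f(\bar{x});1)$. A routine check shows $\widehat{M_1}\widehat{M_2}$ corresponds to $f_1\diamond f_2$, so the assignment $t\mapsto\widehat{U(t)}$ extends to a morphism $\widehat U$ from $\delta^*$ to the linear maps on $\func_{d+1}^\bbk$. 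The snag is that all these linear maps fix the hyperplane $x_{d+1}=0$, whereas an APA run is required to start from $\bar 0\in\bbk^{d+1}$; so a run would never enter the affine subspace $x_{d+1}=1$ where the simulation lives.

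To fix this I would add a fresh initial state $q_0'$ and, for every original transition $t=(q_0,a,q')$ out of the old initial state, a new transition $t'=(q_0',a,q')$ labelled the same letter $a$, with $\widehat U(t')$ defined to be the \emph{affine} map $\bar x\mapsto \widehat{U(t)}\cdot\bar x + (\underbrace{0,\dots,0}_{d},1)$ on $\bbk^{d+1}$ — equivalently, the map sending $\bar 0$ to $(U(t)(\bar 0_d);1)$ followed by the linear part. These new transitions can only be taken as the very first transition of a nonempty run (since nothing points to $q_0'$ and $q_0'$ is no longer reachable afterwards), which is exactly the escape clause the lemma allows. Every other transition — including the old transitions out of $q_0$, which are now only reachable on runs of length $\ge 2$ and hence always act on a vector already in the hyperplane $x_{d+1}=1$ — gets the linear map $\widehat{U(t)}$. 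The new constraint set is $C' = C.\{1\} \subseteq \bbk^{d+1}$, which is still $\bbk$-definable. The empty run maps $\bar 0$ to $\bar 0\notin C'$ (and is accepting iff $q_0\in F$), so to handle the borderline case $\eps\in L$ I would additionally make $q_0'$ final iff $q_0\in F$ and add, e.g., a dimension or a trivial clause to $C'$ accepting the all-zero vector in that one case; alternatively one absorbs $\{\eps\}$ separately using closure under union (Proposition \ref{prop:clotapa}). Determinism is preserved: if $A$ is deterministic then for each $a$ there was at most one transition out of $q_0$, hence at most one new transition $t'$ out of $q_0'$, and the rest of $A$ is untouched.

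The remaining step is the correctness argument: for a nonempty accepting path $\pi = t_1t_2\cdots t_n$ in the new automaton, $t_1$ must be one of the new transitions $t'$ (the only ones leaving $q_0'$), $t_2\cdots t_n$ is a path in the old automaton, and one shows by induction that $(\widehat U(t')\diamond\widehat{U(t_2)}\diamond\cdots)( \bar 0_{d+1}) = (g(\bar 0_d);1)$ where $g = U(t_1)\diamond U(t_2)\diamond\cdots\diamond U(t_n)$ is the affine map read along the corresponding path $t_1t_2\cdots t_n$ of the original APA; hence $(\widehat U(\pi))(\bar 0)\in C'=C.\{1\}$ iff $(U(t_1\cdots t_n))(\bar 0)\in C$. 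Matching up accepting paths of the two automata through the obvious bijection $t'\leftrightarrow t$ then gives $L(A,U,C)=L(A',\widehat U,C')$. The only genuinely delicate point is bookkeeping the $\eps$ case and confirming that the old transitions out of $q_0$ never need to fire on a vector outside $x_{d+1}=1$; everything else is the homogenization identity.
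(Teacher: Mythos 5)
Your proposal is correct and follows the same overall skeleton as the paper's proof --- add a fresh initial state with no incoming transitions, let its outgoing transitions carry the only affine (non-linear) maps, make every other transition linear, and pad the constraint set over the extra coordinates --- but the mechanism by which you linearize the remaining transitions is genuinely different. The paper enlarges the dimension to $d(1+|\delta|)$: the first transition loads all the translation vectors $\bar{v_1},\ldots,\bar{v_{|\delta|}}$ into dedicated blocks of registers, and each later transition $t_i$ then performs $(\bar{x},\bar{y_1},\ldots,\bar{y_{|\delta|}})\mapsto(M_i\bar{x}+\bar{y_i},\bar{y_1},\ldots,\bar{y_{|\delta|}})$, which is linear on the enlarged state because the offset now lives in the state rather than in the map. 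You instead use the classical homogenization into dimension $d+1$, pinning the last coordinate to $1$ after the first step so that the block matrix $\widehat{M}$ realizes the affine map linearly. Your version is more economical ($d+1$ registers versus $d(1+|\delta|)$) and arguably cleaner; the paper's avoids the auxiliary ``always $1$'' coordinate at the cost of many more registers. Both treat determinism and the reachability of the old initial state identically, and you are right that the $\eps$ case needs the separate patch you describe (the paper's sketch glosses over it, though the analogous construction in Proposition~\ref{prop:dapa} handles it the same way you do).

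One small slip: the explicit formula you give for the first transition, $\bar{x}\mapsto \widehat{U(t)}\cdot\bar{x}+(0,\ldots,0,1)$, sends $\bar{0}_{d+1}$ to $(\bar{0}_d;1)$ rather than to $(U(t)(\bar{0}_d);1)=(\bar{v}_t;1)$ as required, because $\widehat{U(t)}$ only contributes its translation part when the last input coordinate is already $1$. The correct map is $\bar{x}\mapsto \widehat{U(t)}\cdot\bigl(\bar{x}+(0,\ldots,0,1)\bigr)$, equivalently $\bar{x}\mapsto\widehat{U(t)}\cdot\bar{x}+(\bar{v}_t;1)$. Since you state the intended semantics (``the map sending $\bar{0}$ to $(U(t)(\bar{0}_d);1)$'') immediately afterwards, this is a typo-level fix, not a gap in the argument.
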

\begin{proof}[Proof (sketch)]
  Let $(A, U, C)$ be a \KAPA of dimension~$d$, where the transition set of
  $A$ is $\delta = \{t_1, \ldots, t_{|\delta|}\}$, and write $U(t_i) = (M_i,
  \bar{v_i})$.  Let $A'$ be a copy of $A$ in which a fresh state $q$ is
  added, set to be the initial state, with the same outgoing transitions as
  the initial state of $A$ and no incoming transition.  Let $t'_1, \ldots,
  t'_k$ be the new transitions in $A'$, and order $\delta$ such that $t_1,
  \ldots, t_k$ are the corresponding transitions leaving the initial state of
  $A$.  Now define $U'$, for $\bar{x}, \bar{y_1}, \ldots, \bar{y_{|\delta|}}
  \in \bbk^d$, by $U'(t'_i)\colon (\bar{x}, \bar{y_1}, \ldots,
  \bar{y_{|\delta|}}) \mapsto (\bar{v_i}, \bar{v_1}, \ldots,
  \bar{v_{|\delta|}})$, and $(U'(t_i)\colon (\bar{x}, \bar{y_1}, \ldots,
  \bar{y_{|\delta|}}) \mapsto (M_i.\bar{x} + \bar{y_i}, \bar{y_1}, \ldots,
  \bar{y_{|\delta|}})$.  Finally define $C' = C.\bbk^{d \times |\delta|}$.
  Then $L(A', U', C') = L(A, U, C)$, and $A'$ is deterministic if $A$ is.
  Moreover, the only nonlinear functions given by $U'$ are for the outgoing
  transitions of the initial state of $A'$, a state no run can return to.
\end{proof}

\begin{proposition}\label{prop:qnapa}
  $\LQDAPA \subseteq \LNDAPA$ and $\LQAPA \subseteq \LNAPA$.
\end{proposition}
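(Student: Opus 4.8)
The plan is to simulate a $\bbq$-APA by an $\bbn$-APA by tracking, for each rational register, a pair of $\bbn$-registers representing a numerator and a denominator, while keeping the arithmetic on all registers synchronized to a common denominator. By Lemma~\ref{lem:lin} I may assume all transition functions of the given $\bbq$-APA $(A,U,C)$ are linear except possibly the very first transition of a run; this is crucial because a linear map $\bar x\mapsto M\bar x$ preserves the property ``all coordinates share the denominator $N$'' up to multiplying $N$ by the common denominator of the entries of $M$, whereas an affine offset is harder to fold into a single global denominator. So first I would apply Lemma~\ref{lem:lin}, reducing to the linear-plus-initial-transition case, and moreover clear denominators in the matrices: by scaling I may assume every $M_i$ has entries in $\bbz$ (absorbing the scaling into a separate bookkeeping register, or more simply into the common denominator; see below), and similarly the initial offsets are in $\bbz^d$.

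Next I would set up the $\bbn$-APA $(A',U',C')$ of dimension roughly $2d+1$: registers $\bar p=(p_1,\dots,p_d)$ holding numerators, a single shared register $N$ holding the current common denominator, and — because $\bbn$ forbids negative numerators — a parallel copy $\bar p^-$ holding the ``negative part,'' so that the true rational value of coordinate $j$ is $(p_j-p^-_j)/N$. Each linear transition $t_i$ with integer matrix $M_i$ acts by $\bar p\mapsto M_i^+\bar p + M_i^-\bar p^-$, $\bar p^-\mapsto M_i^-\bar p + M_i^+\bar p^-$ (splitting $M_i=M_i^+-M_i^-$ into nonnegative parts), leaving $N$ unchanged — this keeps everything in $\bbn$ and preserves the invariant. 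The initial transition, which carries the only nonlinear (affine) behaviour and the only place a genuine rational constant can enter, sets $N$ to the least common denominator $D_0$ of the finitely many constants appearing in that transition's offset, and sets $\bar p,\bar p^-$ to $D_0$ times the positive/negative parts of that offset. All of $U'$ is a morphism into $\func^\bbn_{2d+1}$, and $A'$ is deterministic iff $A$ is, since we have not touched the automaton structure beyond what Lemma~\ref{lem:lin} already did.

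Then I would define the acceptance set $C'\subseteq\bbn^{2d+1}$ as the $\bbn$-definable set $\{(\bar p,\bar p^-,N) : \exists \bar q\in\bbq^d\,[\,N\bar q = \bar p-\bar p^-\ \wedge\ \bar q\in C\,]\}$; unwinding this, $C'$ is obtained from the $\bbq$-definable formula $\phi_C$ by the substitution $q_j\rightsquigarrow (p_j-p^-_j)/N$ and then clearing the denominator $N$ throughout the (affine) atoms of $\phi_C$, which yields a genuine $\bbn$-formula in $\bar p,\bar p^-,N$ (one must be slightly careful with the strict inequalities and the sign of $N$, but $N>0$ always, so an atom $\sum c_j q_j < e$ becomes $\sum c_j(p_j-p^-_j) < eN$, an $\bbn$-definable — indeed semilinear — condition after clearing the finitely many rational $c_j,e$). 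By the invariant, a run of $A'$ on $w$ reaches $(\bar p,\bar p^-,N)$ with $(\bar p-\bar p^-)/N$ equal to the rational vector the corresponding run of $A$ computes, so $(U'(\pi))(\bar 0)\in C'$ iff $(U(\pi))(\bar 0)\in C$, giving $L(A',U',C')=L(A,U,C)$ and handling both the deterministic and nondeterministic cases at once.

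**The main obstacle** I expect is bookkeeping the common denominator correctly: a priori different coordinates could develop different denominators under $M_i$, and the fix — premultiplying all matrices to be integral (Lemma~\ref{lem:lin} does not by itself do this, so this is an extra normalization step) and letting a single register $N$ carry only the ``global scale'' introduced at the start — must be checked to be consistent with the morphism requirement on $U'$ (the integral matrices compose to integral matrices, so $N$ genuinely never needs to change after the first step). The secondary nuisance is verifying that substituting $q_j=(p_j-p^-_j)/N$ into an arbitrary $\bbq$-definable (semialgebraic-affine) formula and clearing $N$ lands inside the $\bbn$-definable (= semilinear) sets; this is routine since $\phi_C$ can be taken quantifier-free with only $+$, rational scalar multiplications, and $<$, and $N>0$ lets one clear it from every atom without flipping inequalities — but it is the one place where the precise definition of $\bbk$-definability from the preliminaries has to be invoked carefully.
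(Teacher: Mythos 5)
Your overall strategy --- track numerators over $\bbn$ together with an explicit denominator register $N$, split positive and negative parts, and push the substitution $q_j \rightsquigarrow (p_j - p^-_j)/N$ into the atoms of $\phi_C$ --- is viable, but the argument as written has one concrete false step: the claim that ``$N$ genuinely never needs to change after the first step.'' Replacing each rational matrix $M_i$ by the integral matrix $c_i M_i$ does keep all arithmetic in $\bbn$, but after a run taking transitions $t_{i_1},\dots,t_{i_m}$ the stored vector equals $D_0\, c_{i_1}\cdots c_{i_m}$ times the true rational value, not $D_0$ times it: the clearing factors accumulate multiplicatively along the path, and the amount of accumulation depends on which (and how many) transitions were taken. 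So with $N$ frozen at $D_0$ your invariant $(\bar{p}-\bar{p}^-)/N = (U(\pi))(\bar{0})$ already fails after the second transition, and $C'$ then tests the wrong vector. The fix is cheap --- have transition $t_i$ also apply $N \mapsto c_i N$, which is a linear map with a nonnegative integer coefficient, preserves determinism, and restores the invariant --- but it is needed; ``integral matrices compose to integral matrices'' addresses well-definedness over $\bbn$, not the drift of the represented value. (The rest of your construction is sound: the $M_i^{\pm}$ splitting correctly preserves $\bar{p}-\bar{p}^- = M_i(\bar{p}_{\text{old}}-\bar{p}^-_{\text{old}})$, and clearing the positive variable $N$ and the rational coefficients from each affine atom of a quantifier-free $\phi_C$ does land in Presburger arithmetic, hence in the semilinear sets.)

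It is worth contrasting this with how the paper dodges the accumulation problem entirely. After Lemma~\ref{lem:lin}, the paper first \emph{internalizes} the constraint: it adds $d$ registers computing the values $f_1(\bar{x}),\dots,f_p(\bar{x}),g_1(\bar{x}),\dots,g_q(\bar{x})$ of the affine forms defining $C$, so that the new constraint set is just the sign conditions $\{0\}^p.(\bbq^+)^q$ on those registers. Sign conditions are invariant under multiplication by any positive scalar, so when each $U'(t)$ is then scaled by a denominator-clearing constant, the fact that the total scale factor $k_\pi$ depends on the path is harmless --- no denominator register and no rewriting of $\phi_C$ are needed. Your route keeps $\phi_C$ and pays for it with the register $N$ (which must be maintained); the paper's route trivializes $\phi_C$ first and gets scale-invariance for free. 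Either works once your $N$-update is repaired.
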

\begin{proof}
  We first recall that a set $C \subseteq \bbq^d$ is $\bbq$-definable iff it
  is a finite union of sets of the form:
  $$\{\bar{x} \st f_1(\bar{x}) = \cdots = f_p(\bar{x}) = 0
  \land g_1(\bar{x}) > 0 \land \cdots \land g_q(\bar{x}) > 0\},$$%
  where $f_1, \ldots, f_p, g_1, \ldots, g_q\colon \bbq^d \to \bbq$ are affine
  functions (see, e.g., \cite{vandendries98}).  Let $(A, U, C)$ be a \QAPA of
  dimension $d$; by Lemma~\ref{lem:lin}, we may suppose that the functions
  associated with the transitions are linear, except for the transitions that
  may begin a run.  We suppose $C$ is a single set of the kind previously
  described; this is no loss of generality as $\LKAPA$ and $\LKDAPA$ are
  closed under union.  So let $C$ be described by functions $f_i$ and $g_i$
  as above, and suppose $d = p + q$ (we add constant $0$ functions to the
  $f_i$'s or $0$'s to the vectors of $C$ in order to do that).  Define
  $f:\bbq^d \to \bbq^d$ by $f(\bar{x}) = (f_1(\bar{x}), \ldots, g_1(\bar{x}),
  \ldots)$; clearly, $f \in \func_d(\bbq)$.  Now let $(A, U', C')$ be the
  \QAPA of dimension $2d$, defined by $(U'(t))(\bar{x}, \bar{y}) =
  ((U(t))(\bar{x}), f(\bar{x}))$, with $t$ a transition of $A$ and $\bar{x},
  \bar{y} \in \bbq^d$; and $C' = \bbq^d . \{0\}^p . (\bbq^+)^q$.  Clearly,
  $L(A, U', C') = L(A, U, C)$.  We then define $U''$ by $U''(t) = c \times
  U'(t)$ where $c$ is the maximum denominator in the reduced fractions
  appearing in the matrix and vector of $U'(t)$.  Thus, the functions given by
  $U''$ are from $\bbz^{2d}$ to $\bbz^{2d}$.  Moreover, for any $\pi \in
  \apaths(A)$, $(U''(\pi))(\bar{0}) = k \times (U'(\pi))(\bar{0})$, for some
  $k \in \bbn^+$ depending only on $\pi$.  Thus, defining $C'' =
  \bbz^d.\{0\}^p.(\bbz^+)^q$, we have $L(A, U'', C'') = L(A, U', C')$.
  Finally, the negative numbers can be circumvented by doubling the dimension
  of the matrices and keeping track of the negative and the positive
  contributions separately until the final tests for zero, which become tests
  that negative contribution equals (or is strictly lesser than) the positive
  contribution of a number (a similar technique is used by
  \KaR~\cite{klaedtke-ruess02}).
\end{proof}

\Remark The previous proof shows that the constraint set of \QAPA can be
simulated \emph{within} the automaton, and is thus of a lesser use.

We now give a large class of languages belonging to $\LQAPA$.  Define
$\Mi(L)$ as the smallest semiAFL containing $L$ and closed under
intersection; that is, $\Mi(L)$ is the smallest class of languages containing
$L$ and closed under nonerasing and inverse morphism, intersection with a
regular set, union, intersection, and concatenation.  With $\PAL = \{w\#w^R
\st w\in \{a,b\}^+\}$:
\begin{proposition}\label{prop:aflpal}
  $\Mi(\PAL) \subseteq \LQAPA$.
\end{proposition}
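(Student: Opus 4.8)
The plan is to show that $\LQAPA$ contains $\PAL$, contains every regular language, and is closed under all the operations defining $\Mi$, so that it must contain $\Mi(\PAL)$. By Proposition~\ref{prop:clotapa} the class $\LQAPA$ is already closed under union, intersection, concatenation, inverse morphism and (nonerasing) morphism; and it trivially contains the regular languages (take $U$ trivial, $C=\bbq^d$ for any $d$) and is closed under intersection with a regular set (intersect with the product automaton, again leaving $U$ and $C$ alone, or invoke closure under intersection directly). So the only content is to exhibit a \QAPA for $\PAL$ itself.

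For $\PAL=\{w\#w^R\st w\in\{a,b\}^+\}$ the idea is to read $w$ before the $\#$ and $w^R$ after it, and use an affine register to ``build up'' a numeric encoding of $w$ and simultaneously ``tear down'' a numeric encoding of the suffix read after $\#$, accepting exactly when the two agree. Concretely, over $\bbq$ one keeps two coordinates $x,y$. While scanning the prefix $w=c_1c_2\cdots c_m$ one updates $x\leftarrow 3x + \operatorname{val}(c_i)$ (with $\operatorname{val}(a)=1$, $\operatorname{val}(b)=2$), so that after the prefix $x$ is the base-$3$ value of $w$ read most-significant-digit first. On the suffix $c'_1\cdots c'_m$ (which we want to be $w^R=c_m\cdots c_1$) one instead updates $y\leftarrow 3y+\operatorname{val}(c'_i)$; but since we are reading $w^R$ left to right, i.e. $c_m,c_{m-1},\dots$, this produces the base-$3$ value of $w^R$ read MSB-first, which is the base-$3$ value of $w$ read LSB-first. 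To compare these we instead accumulate the suffix in a way that produces the same orientation: use a third coordinate $z$ initialised to $1$ that multiplies by $3$ each suffix step, and update $y\leftarrow y + \operatorname{val}(c'_i)\cdot z$, giving $y=\sum_i \operatorname{val}(c'_i)3^{i-1}$ — which, for $c'_i=c_{m-i+1}$, equals exactly the MSB-first value of $w$, i.e. equals $x$. These updates $x\mapsto 3x+d$, $(y,z)\mapsto(y+dz,3z)$ are affine over $\bbq$, so $U$ is a well-defined morphism into $\func_d^\bbq$ with $d$ small (three or four coordinates plus a flag coordinate toggled by the unique $\#$-transition to separate the two phases); the automaton $A$ is the obvious one recognising $\{a,b\}^+\#\{a,b\}^+$, with the phase distinguished by which transitions the morphism acts on. Finally set $C=\{\bar x : x=y\}$, a $\bbq$-definable (indeed linear) set. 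An accepting run maps $\bar 0$ into $C$ iff the suffix is $w^R$ and has the same length as $w$, so $L(A,U,C)=\PAL$, giving $\PAL\in\LQAPA$.

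Putting the pieces together: $\LQAPA$ contains $\PAL$ and every regular language, and is closed under nonerasing morphism, inverse morphism, intersection with a regular set, union, intersection and concatenation — precisely the closure operators defining $\Mi$ — and $\Mi(\PAL)$ is by definition the least such class containing $\PAL$, so $\Mi(\PAL)\subseteq\LQAPA$.

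I expect the only genuinely delicate point to be the affine encoding of the palindrome condition: one must make sure the prefix accumulator and the suffix accumulator end up in the \emph{same} digit orientation so that equality of the two registers captures ``suffix $=$ reverse of prefix'' rather than some twisted variant, and one must check that length equality is automatically enforced (it is, since a mismatch in length forces $x\ne y$ for generic inputs — but one should double-check the short/degenerate cases, e.g. that $w\in\{a,b\}^+$ is nonempty). Everything else (closure under the semiAFL operations, regularity, intersection with regular sets) is either immediate or already recorded in Proposition~\ref{prop:clotapa}.
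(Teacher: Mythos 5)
Your proposal is correct and takes essentially the same route as the paper: exhibit a \QAPA for $\PAL$ via an affine positional (numeric) encoding of the word, then invoke the closure properties of Proposition~\ref{prop:clotapa} together with the fact that $\Mi(\PAL)$ is the least class containing $\PAL$ closed under exactly those operations. The only difference is in the encoding detail---the paper uses a base-$2$ scheme whose position register is halved after the $\#$ (hence genuinely uses $\bbq$), whereas your base-$3$ two-accumulator scheme avoids division; both are valid, and the length-mismatch worry you flag is resolved by noting that words of different lengths occupy disjoint ranges of encoded values.
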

\begin{proof}
  We sketch a \QDAPA for $\PAL$.  The automaton starts by reading a single
  letter, if it is an $a$ it initializes its counters to $(2, 1)$, otherwise,
  it initializes them to $(2, 0)$.  Now for each letter read, if it is an
  $a$, it applies the function $(p, v) \mapsto (2p, v + p)$, and $(p, v)
  \mapsto (2p, v)$ if it is a $b$.  Upon reaching the $\#$ sign, functions
  associated to $a$ and $b$ change: when reading an $a$, the automaton
  applies $(p, v) \mapsto (p/2,v - p/2)$, otherwise it applies $(p, v)
  \mapsto (p/2, v)$.  Clearly, a word is in $\PAL$ iff it is of the form
  $\{a, b\}^+\#\{a, b\}^+$ and the final state of the counters is $(1, 0)$.
  The closure properties are implied by those of $\LQAPA$
  (Proposition~\ref{prop:clotapa}).
\end{proof}
The class $\Mi(\PAL)$ contains a wide range of languages.  First, the closure
of $\PAL$ under nonerasing and inverse morphism and intersection with
regular sets is the class of \emph{linear languages}
(e.g.,~\cite{brandenburg81}).  In turn, adding closure under intersection
permits to express the languages of nondeterministic multipushdown automata
where in every computation, each pushdown store makes a bounded number of
reversals (that is, going from pushing to
popping)~\cite{book-nivat-paterson74}; in particular, if there is only one
such pushdown store, this corresponds to the \emph{ultralinear
  languages}~\cite{ginsburg-spanier66}.  Further, as $\Mi(\COPY) \subsetneq
\Mi(\PAL)$ (e.g.,~\cite{brandenburg81}) this implies that $\COPY \in \LQAPA$.

Next, we note that \KAPA express only context-sensitive languages ($\CSL$):
\squeezetopsep
\begin{proposition}
  $\LNAPA \subseteq \CSL$.
\end{proposition}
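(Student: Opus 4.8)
The statement to prove is $\LNAPA \subseteq \CSL$, i.e., every language recognized by an $\bbn$-APA is context-sensitive. The standard route is to exhibit a linear-bounded automaton (equivalently, a nondeterministic Turing machine running in space $O(|w|)$) recognizing $L(A,U,C)$ for a given $\bbn$-APA $(A,U,C)$ of dimension $d$. The key obstacle — and the one thing the proof must take care of — is bounding the magnitude of the register vector $\bar{x} \in \bbn^d$ along an accepting run: since each transition may apply $\bar{x} \mapsto M.\bar{x} + \bar{v}$ with $M$ a nonnegative integer matrix, the entries of $\bar{x}$ could a priori grow exponentially in $|w|$, which would need $\Theta(|w|)$ bits per component and hence only $O(|w|)$ total space. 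So in fact the naive bound already suffices, and the argument is mostly bookkeeping; I will spell out why the space stays linear.

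First I would set up the simulation. On input $w$ of length $n$, the LBA guesses an accepting path $\pi = t_1 \cdots t_n$ of $A$ with $\mu(\pi) = w$ one transition at a time (this needs only $O(\log|Q|)$ space for the current state plus a pointer into the input tape, both $O(n)$), and maintains the current register vector $\bar{x}_k = (U(t_1 \cdots t_k))(\bar{0})$ on the work tape in binary. The update $\bar{x}_{k+1} = M_{k+1}.\bar{x}_k + \bar{v}_{k+1}$, where $U(t_{k+1}) = (M_{k+1}, \bar{v}_{k+1})$, is a fixed finite set of integer matrix–vector products and additions; each can be carried out in space polynomial in the current bit-length of $\bar{x}_k$. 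At the end, the LBA checks $\bar{x}_n \in C$; since $C$ is semilinear (Presburger-definable) and given by a fixed formula, membership of a concretely given vector can be checked in space polynomial in its bit-length — e.g. by the usual quantifier-by-quantifier evaluation, where the existential witnesses can be taken of bit-length polynomial in that of $\bar{x}_n$ by standard bounds on solutions of Presburger formulas.

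The crux is therefore the bit-length bound. Let $c$ be the maximum absolute value of any entry appearing in any $M_i$ or $\bar{v}_i$ over the finitely many transitions, and let $s = \max_i \|\bar{x}_i\|_\infty$ along the run be replaced by the running estimate: from $\bar{x}_{k+1} = M_{k+1}.\bar{x}_k + \bar{v}_{k+1}$ one gets $\|\bar{x}_{k+1}\|_\infty \le d\,c\,\|\bar{x}_k\|_\infty + c$, so $\|\bar{x}_n\|_\infty \le (d c + 1)^n \cdot c$ (with $\bar{x}_0 = \bar{0}$, and adjusting for the first transition, which by Lemma~\ref{lem:lin} may be treated separately as it is taken at most once). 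Hence each component of $\bar{x}_k$ is represented in $O(n)$ bits, and all $d$ components together in $O(n)$ bits. Intermediate products during a single matrix–vector multiplication blow this up only by a constant factor (a product of two $O(n)$-bit numbers has $O(n)$ bits, and we sum $d$ of them), and the Presburger check on $\bar{x}_n$ uses space polynomial in $O(n)$, i.e. still — wait, that is only polynomial, not linear. To stay within linear space I would instead invoke $\CSL = \NSPACE(n)$ together with the fact that $\NSPACE$ is closed under polynomial padding is irrelevant; the clean fix is: the whole computation runs in $\NSPACE(n^{O(1)})$, hence in $\NSPACE(n)$ after recalling $\NSPACE(n^{O(1)}) \subseteq \NSPACE(n)$ is \emph{false} — so the honest statement is that we get a bound of $\NSPACE(n^{O(1)})$.

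Let me correct the plan: the right reduction keeps genuine linear space by being slightly more careful. Reserve exactly $O(n)$ cells for $\bar{x}$, and note that a single update $\bar{x} \mapsto M.\bar{x}+\bar{v}$ with constant-size $M$ can be computed in-place in $O(n)$ additional scratch cells, since it is a constant number of schoolbook multiplications of $O(n)$-bit integers by \emph{constants} $c_{ij} \le c$ (multiplication by a constant is $O(n)$ work and $O(1)$ extra space beyond the output), followed by a constant number of $O(n)$-bit additions. For the final test $\bar{x}_n \in C$: write $C$ as a finite union of linear sets $\{\bar{a}_0 + \sum_j k_j \bar{a}_j\}$; then $\bar{x}_n \in C$ iff for some branch there exist $k_1,\dots,k_m \in \bbn$ with $\bar{x}_n = \bar{a}_0 + \sum_j k_j \bar{a}_j$, and each $k_j$ is bounded by $\|\bar{x}_n\|_\infty$, hence has $O(n)$ bits, so the LBA guesses the $k_j$ on $O(n)$ cells and verifies the linear equation in $O(n)$ space. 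Altogether the machine uses $O(n)$ space, so $L(A,U,C) \in \NSPACE(n) = \CSL$. The only genuinely delicate point, worth stating explicitly in the writeup, is the exponential-in-$n$ but single-exponential growth bound $\|\bar{x}_n\|_\infty \le (dc+1)^n c$ guaranteeing the $O(n)$ bit-length; everything else is routine simulation.
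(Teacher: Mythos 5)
Your proof is correct and follows essentially the same route as the paper: a nondeterministic linear-space simulation that maintains the register vector in binary, with the key point being the single-exponential growth bound $\|\bar{x}_n\|_\infty \le (dc+1)^n c$ (the paper's $(c(d+1))^{n-1}c$), which keeps the bit-length in $O(n)$. The only divergence is the final test $\bar{x}_n \in C$, where the paper invokes the regularity of the binary encodings of a semilinear set while you guess the coefficients of a linear-set decomposition in $O(n)$ bits; both work, and your self-correction from the polynomial-space detour lands on a valid linear-space argument.
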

\begin{proof}
  Let $(A, U, C)$ be an \NAPA of dimension $d$, we show that $L(A, U, C) \in
  \NSPACE[n]$ (which is equal to $\CSL$ \cite{kuroda64}).  Let $A = (Q,
  \Sigma, \delta, q_0, F)$, and $w = w_1\cdots w_n \in \Sigma^*$.  First,
  initialize $\bar{v} \leftarrow \bar{0}$ and $q \leftarrow q_0$.  Iterate
  through the letters of $w$: on the $i$-th letter, choose
  nondeterministically a transition $t$ from $q$ labeled with $w_i$.  Update
  $\bar{v}$ by setting $\bar{v} \leftarrow (U(t))(\bar{v})$ and $q$ with $q
  \leftarrow \dest(t)$.  Upon reaching the last letter of $w$, accept $w$ iff
  $q \in F$ and $\bar{v} \in C$.

  We now bound the value of $\bar{v}$.  Let $c$ be the greatest value
  appearing in any of the matrices or vectors in $U(t)$, for any $t$.  For a
  given $\bar{v}$, let $\max \bar{v}$ be $\max \{v_1, \ldots, v_d\}$.  Then
  for any~$t$, $((U(t))(\bar{v}))_i \leq d \times (c \times \max \bar{v}) +
  c$.  Let $\pi$ be a path, we then have that $((U(\pi))(\bar{0}))_i \leq
  (c(d+1))^{n-1}c$, thus the size of $\bar{v}$ at the end of the algorithm is
  in $O(n)$.  Now note that, as $C$ is semilinear, the language of the binary
  encoding of its elements is regular~\cite{wolper-boigelot95}, and thus,
  checking $\bar{v} \in C$ can be done efficiently.  Hence the given
  algorithm is indeed in $\NSPACE[n]$.
\end{proof}

We now note that the power of \KDAPA does not owe to their capabilities as
automata:\squeezetopsep%
\begin{proposition}\label{prop:dapa}%
  Let $\Sigma$ be an alphabet.  There exists a two-state automaton $A_\Sigma$
  such that for any \KDAPA over $\Sigma$, there exists a \KDAPA accepting the
  same language whose underlying automaton is $A_\Sigma$.
\end{proposition}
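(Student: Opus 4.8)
The plan is to show that an arbitrary \KDAPA can be normalized so that its automaton carries essentially no information: all branching is deferred into the affine update functions and the final constraint set. The target automaton $A_\Sigma$ will be a two-state automaton over $\Sigma$ whose first transition (from the initial state to a second, looping, accepting state) reads one letter, and which thereafter loops on the second state reading any letter of $\Sigma$. The second state is the only accepting state; the single initial transition is needed merely to distinguish the empty word from nonempty words (and can be dropped if one allows the initial state to be accepting too). Concretely, $A_\Sigma$ has states $\{q_0,q_1\}$, transitions $(q_0,a,q_1)$ and $(q_1,a,q_1)$ for every $a\in\Sigma$, with $q_1$ final.

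First I would invoke Lemma~\ref{lem:lin} to assume the given \KDAPA $(A,U,C)$ of dimension $d$ already has linear transition functions, except possibly for a transition used only as the first transition of a nonempty run. Let $A=(Q,\Sigma,\delta,q_{\mathit{in}},F)$ be deterministic with $|Q|=s$. The key idea is to encode the state of $A$ inside the register vector by a one-hot block of dimension $s$: the new dimension is $d'=d\cdot s$ (or $d\cdot s + s$, carrying an explicit indicator block), and a register configuration representing ``state $q$, values $\bar x$'' is the vector having $\bar x$ in the $q$-th block and $\bar 0$ elsewhere. For each letter $a\in\Sigma$, define a single linear map $N_a\colon \bbk^{d'}\to\bbk^{d'}$ that acts blockwise: on the block indexed by a state $q$ that has an outgoing $a$-transition $t=(q,a,q')$ with $U(t)=(M_t,\bar v_t)$ (linear, so $\bar v_t=\bar 0$), it sends the $q$-block contents $\bar x$ to $M_t\bar x$ placed into the $q'$-block; on the block indexed by a state $q$ with no outgoing $a$-transition, it sends everything to $\bar 0$ (a ``sink'', which will make the run fail the final test). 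Because $A$ is deterministic, the images land coherently and $N_a$ is well-defined and linear; set $U'$ on the unique $a$-labeled transition of $A_\Sigma$ to be $N_a$. The first transition of $A_\Sigma$ on letter $a$ performs instead the composite ``initialize the $q_{\mathit{in}}$-block to $\bar 0$ (or to the value produced by the exceptional first transition of $A$ on $a$, using the nonlinear first-transition allowance of Lemma~\ref{lem:lin}), then apply $N_a$.'' Finally, let $C'\subseteq\bbk^{d'}$ be the set of vectors whose $q$-block lies in $C$ for some final $q\in F$ while all other blocks are $\bar 0$; since $C$ is $\bbk$-definable and there are finitely many states, $C'$ is $\bbk$-definable (a finite union of products of $C$ with $\{\bar 0\}$ blocks).

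Then I would verify the simulation invariant: after reading a nonempty prefix $w_1\cdots w_i$ along the unique run of $A_\Sigma$, the register vector equals the one-hot encoding of ``(state reached by $A$ on $w_1\cdots w_i$ if such a run exists and has not yet hit a dead end), ($U$-value accumulated so far)'', and is a vector all of whose blocks are $\bar 0$ if $A$ has no run on that prefix. This is a routine induction on $i$ using the blockwise definition of $N_a$ and determinism of $A$. Consequently $(U'(\pi))(\bar 0)\in C'$ for the accepting run $\pi$ of $A_\Sigma$ on $w$ exactly when $A$ has an accepting run on $w$ whose $U$-value lies in $C$, i.e.\ $L(A_\Sigma,U',C')=L(A,U,C)$; and $A_\Sigma$ is deterministic and depends only on $\Sigma$. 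The main obstacle I anticipate is bookkeeping rather than conceptual: making the ``sink'' behavior genuinely irreversible so that once a prefix leaves the language of $A$ the run can never re-enter $C'$ (handled by the all-zero absorbing block together with requiring nonfinal blocks to be $\bar 0$ in $C'$), and cleanly reconciling the exceptional nonlinear first transition from Lemma~\ref{lem:lin} with the fixed first transition of $A_\Sigma$ (handled by composing that exceptional map into the definition of $U'$ on $A_\Sigma$'s initial transition, which is allowed to be nonlinear). The empty word is handled separately by the choice of whether $q_0$ is accepting, matching whether $\eps\in L(A,U,C)$.
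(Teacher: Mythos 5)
Your construction is essentially the paper's: both encode the configuration (current state, register vector) of the given \KDAPA into a block-structured vector of $\bbk^N$, replace the automaton by a fixed two-state automaton whose initial transition injects the encoded initial configuration and whose self-loops apply, for each letter $a$, a single matrix that simultaneously simulates the deterministic transition function and applies the correct affine update, and push the final-state test into the constraint set. Two details of your writeup need repair, though neither is conceptual. First, you handle the empty word by deciding whether the initial state of $A_\Sigma$ is accepting according to whether $\eps$ belongs to the language; this makes the automaton depend on the given \KDAPA, which contradicts the statement (a single $A_\Sigma$ must be fixed before the \KDAPA is given). The paper makes both states accepting once and for all and instead toggles the constraint set, taking $C'=G$ or $C'=G\cup\{\bar 0\}$ according to whether $\eps$ is in the language. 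Second, your pure one-hot encoding of dimension $d\cdot s$ with no indicator coordinates does not make the dead-run sink safe: if $\bar 0\in C$ and $F\neq\emptyset$, the all-zero vector satisfies your description of $C'$ (``some final block in $C$, all other blocks $\bar 0$''), so a nonempty word on which $A$ has no run could be wrongly accepted. Your parenthetical $d\cdot s+s$ variant with an explicit indicator coordinate per block --- which is exactly the paper's choice of blocks of size $d+1$ --- cures this, since every legitimate encoded configuration then carries a $1$ and the constraint set can insist on it; that same indicator coordinate also homogenizes the affine offsets, so the appeal to Lemma~\ref{lem:lin} becomes unnecessary.
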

\begin{proof}
  Let $(A, U, C)$ be a \KDAPA of dimension $d$ where $A = (Q, \Sigma, \delta,
  q_0, F)$, with $Q = \{1, \ldots, k\}$ and $\Sigma = \{a_1, \ldots, a_m\}$.
  Let $N = k(d+1)$, we show that there exist $f_{a_1}, \ldots, f_{a_m} \in
  \func^\bbk_N$, a $\bbk$-definable set $G \subseteq \bbk^N$ and $\bar{o} \in
  \bbk^N$ such that:
  \begin{align}\label{eqn:dapa}
    w = \ell_1\cdots \ell_{|w|} \in L(A, U, C) \quad \Leftrightarrow \quad
    f_{\ell_{|w|}} \circ \cdots \circ f_{\ell_1}(\bar{o}) \in G.
  \end{align}

  Our goal is to represent the state in which the \KDAPA is with a vector of
  size $N$.  This vector is composed of $k$ smaller vectors of size $(d+1)$.
  On taking a path $\pi$ in $A$, let $q = \dest(\pi)$ and $\bar{v} =
  (U(\pi))(0^d)$; then $q$ and $\bar{v}$ describe the current configuration
  of the \KDAPA.  Thus we define, for any $q \in Q$ and $\bar{v} \in \bbk^d$:
  \def\vecc{\mathsf{Vec}}
  $\vecc(q, \bar{v}) = (
  0^{d+1}
  \,\,\cdots\,\,\,
  0^{d+1}
  \underbrace{1 \quad \bar{v}}_{q\text{-th subvector}}
  0^{d+1}
  \,\,\cdots\,\,\,
  0^{d+1}
  )$.%

  Now, for $t \in \delta$, let $M_t$ and $\bar{b}_t$ be such that $U(t) =
  (M_t, \bar{b}_t)$.  For the purpose of describing the matrix $U_a$ below,
  when $t \not\in \delta$ we let $M_t$ stand for the all-zero matrix of
  dimension $d \times d$ and $\bar{b}_t$ be the all-zero vector of dimension
  $d$.  Let $\chi$ be the characteristic function of $\delta$.  For $a \in
  \Sigma$, define:%
  \def\tallcell{\rule{0pt}{15mm}} \def\smallcell{\rule{0pt}{5mm}} \def\zzz{0
    \cdots 0}
  $$\begin{disarray}{rl}
    U_a = & \left(\begin{array}{m{0pt}@{}c|c|c|c|c}
        \smallcell
        &\chi((1,a,1))& \zzz & \cdots & \chi((k,a,1)) & \zzz \\\hline
        \tallcell
        & \bar{b}_{(1,a,1)} &M_{(1,a,1)} & \cdots &  \bar{b}_{(k,a,1)} & M_{(k,a,1)}\\\hline
        &\vdots & \vdots & \ddots & \vdots & \vdots\\\hline
        \smallcell
        & \chi((1,a,k)) &\zzz& \cdots & \chi((k,a,k))& \zzz \\\hline
        \tallcell
        & \bar{b}_{(1,a,k)} & M_{(1,a,k)} & \cdots & \bar{b}_{(k,a,k)} & M_{(k,a,k)}
      \end{array}\right)
  \end{disarray}$$

  The matrix $U_a$ is such that for $(p, a, q) \in \delta$ and $\bar{v} \in
  \bbk^d$, $U_a.\vecc(p, \bar{v}) = \vecc(q, M_{(p, a, q)}.\bar{v} +
  \bar{b}_{(p,a,q)})$.  In other words, $U_a$ computes the transition
  function and, according to the current state, applies the right affine
  function.  More generally, for a path $\pi$ in $A$ starting at $q_0$ and
  labeled by $w = \ell_1\cdots \ell_{|w|}$, we have $U_{\ell_{|w|}}\cdots
  U_{\ell_1}.\vecc(q_0, 0^d) = \vecc(\dest(\pi), (U(\pi))(0^d))$, where $0^d$
  is the all-zero vector of dimension~$d$.  We then let $G$ be the
  $\bbk$-definable set which contains $\vecc(q, \bar{v})$ iff $q$ is final
  and $\bar{v} \in C$: $G = \bigcup_{i \in F}\bigcup_{\bar{v} \in C} \vecc(i,
  \bar{v})$.

  Now let $f_{a_i} \in \func^\bbk_N$ be defined as $(U_{a_i}, 0^N)$ and let
  $\bar{o} = \vecc(q_0, 0^d)$.  Then we have precisely
  Equation~(\ref{eqn:dapa}).  Now let $A'$ be the automaton $(\{r, s\},
  \Sigma, \delta', r, \{r, s\})$ defined by $\delta' = \{r, s\} \times \Sigma
  \times \{s\}$.  Define $U'\colon \delta'^* \to \func^\bbk_N$ by:\pagebreak
  $$U'((q, a_i, q'))(\bar{x}) = \left\{\begin{disarray}{ll}
      U_{a_i}(\vecc(q_0,0^d)) & \text{if } q = r \land q' = s,\\
      U_{a_i}.\bar{x} & \text{otherwise, i.e., if } q = q' = s.
    \end{disarray}\right.$$ 
  
  Finally, a special case should be added for the empty word: We let $C' = G$
  if $\eps \not\in L(A,U,C)$ and $C' = G \cup \{0^N\}$ otherwise.  We have
  that $(A', U', C')$ is a \KDAPA where $A'$ has only two states, and it is
  of the same language as $(A, U, C)$.  Finally, note that we need two
  states, and not one, because \KAPA use $\bar{0}$ as the starting value for
  their registers but $\bar{o}$ is needed here.
\end{proof}


We now give some negative properties of APA; our main tool is the following
lemma:
\begin{lemma}\label{lem:re}
  Let $L$ be a Turing-recognizable language.  Then there exist effectively
  $L_1, L_2 \in \LQDAPA$, and a morphism $h$ such that $L = h(L_1 \cap L_2)$.
\end{lemma}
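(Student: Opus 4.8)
The plan is to encode an arbitrary Turing machine computation as an intersection of two deterministic $\bbq$-APA languages, using the classical trick of writing a computation as a sequence of configurations and checking consistency of consecutive configurations by two complementary ``staggered'' passes. Concretely, fix a Turing machine $M$ recognizing $L$ over an alphabet $\Gamma$, with a tape alphabet $\Delta$ and state set $P$. A halting computation of $M$ on input $x$ can be written as a word $c_0 \# c_1 \# \cdots \# c_k$ over $\Delta \cup P \cup \{\#\}$, where each $c_i$ is a configuration (the tape contents with the state symbol inserted at the head position), $c_0$ is the initial configuration for some input $x$, $c_k$ is accepting, and each $c_{i+1}$ is the successor of $c_i$ under $M$'s transition function. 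The morphism $h$ will simply erase everything except the symbols of $\Gamma$ appearing in $c_0$ (and the endmarkers), so that $h$ applied to an encoded computation recovers the input $x$; hence $L = h(L_1 \cap L_2)$ once $L_1 \cap L_2$ is made to equal the set of valid encoded computations.

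The two languages $L_1$ and $L_2$ split the verification work. The key point is that checking ``$c_{i+1} = \mathsf{next}(c_i)$'' for every $i$ is a local, position-by-position consistency condition: the symbol at position $j$ in $c_{i+1}$ is determined by the symbols at positions $j-1, j, j+1$ of $c_i$. A single finite automaton cannot compare symbols that are $|c_i|+1$ apart, but an APA can, by using its registers to store a numeric fingerprint of a configuration and then ``replaying'' it against the next one. First I would have $L_1$ be a \QDAPA that, reading the word left to right, checks the regular conditions (correct format, $c_0$ well-formed, $c_k$ accepting) and, for each even-indexed block boundary, uses the affine updates to compute a base-$b$ (for $b = |\Delta \cup P|$) numeric encoding of $c_i$ in one register while simultaneously decoding and verifying the claimed $c_{i+1}$ against the local transition rule --- this is exactly the kind of $(p,v)\mapsto(bp, v + p\cdot s)$ update already used in Proposition~\ref{prop:aflpal} for $\PAL$. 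The twist is that a single APA can only reliably compare a block against its immediate neighbor in a fixed parity pattern (because after decoding one needs to re-encode, and interleaving both directions in one pass causes the register arithmetic to collide), so $L_1$ verifies the transitions $c_0 \to c_1$, $c_2 \to c_3$, $\dots$, while $L_2$ (another \QDAPA, obtained by the same construction shifted by one block) verifies $c_1 \to c_2$, $c_3 \to c_4$, $\dots$. Together $L_1 \cap L_2$ checks every transition, so it is exactly the set of valid computations, and $\LQDAPA$ being closed under intersection (Proposition~\ref{prop:clotapa}) is not even needed here since we take the intersection outside.

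I would carry out the steps in this order: (1) fix the configuration encoding and the morphism $h$; (2) describe the regular ``shape'' checks and fold them into the automaton component; (3) give the affine updates that compute the base-$b$ fingerprint of a block and, on the next block, pop digits off and cross-check them against the local transition relation, accepting only if the residual register value hits the target constant in $C$; (4) observe that one \QDAPA handles the even-parity transitions and a shifted copy handles the odd-parity ones, and that both are deterministic because $M$ is deterministic and the automaton merely tracks position-within-block modulo a bounded counter plus reads the input symbol; (5) conclude $L_1 \cap L_2$ = valid computations and $L = h(L_1\cap L_2)$, with everything effective. The main obstacle --- and the part deserving the most care --- is step (3): making the register arithmetic faithfully implement ``decode the previous configuration one symbol at a time and simultaneously check the three-symbol transition window against the freshly-read current configuration'', while keeping the map affine at each transition and keeping the machine deterministic. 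In particular one must be careful that reading the digits of $c_i$ in the order forced by the left-to-right pass over $c_{i+1}$ matches up correctly (this is why the $\PAL$-style doubling/halving of a register, which reverses the order of contributions, is the right gadget), and one must verify that the ``guess the input'' nondeterminism is confined to $c_0$ and then removed by $h$, leaving $L_1, L_2$ genuinely deterministic.
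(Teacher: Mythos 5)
Your high-level plan (two deterministic affine PA splitting the even- and odd-indexed transition checks, plus an erasing morphism) is the same Baker--Book strategy the paper follows, but your choice of encoding --- writing the computation linearly as $c_0\#c_1\#\cdots\#c_k$ and having $L_1$ check the pairs $(c_0,c_1),(c_2,c_3),\dots$ as unboundedly many \emph{separate} adjacent-block comparisons --- hides the real difficulty, and as described it fails. Each pair comparison produces one numeric discrepancy $d_i$ (the base-$b$ fingerprint of $\mathsf{next}(c_{2i})$ minus that of $c_{2i+1}$) that must be tested for zero, but a \QDAPA has a fixed number of registers and performs its only test at the very end of the word. You cannot allocate a register per pair (there are unboundedly many); you cannot accumulate $\sum_i d_i$ (discrepancies of opposite sign cancel, so an invalid computation can be accepted); you cannot accumulate $\sum_i d_i^2$ or rescale $d_i$ by a register-dependent weight before adding it (neither operation is affine); and you cannot reset-and-test a register mid-run. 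So the step you flag as ``deserving the most care'' --- the digit order within a single comparison --- is actually the easy part; the missing idea is how to collapse unboundedly many independent zero tests into one final $\bbq$-definable constraint check.

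The paper resolves exactly this by not using the linear format: following Baker and Book, the witness word is $\ID_0\#\ID_2\#\cdots\#\ID_{2k}\$(\ID_{2k+1})^R\#\cdots\#(\ID_3)^R\#(\ID_1)^R$, with the odd-indexed configurations reversed and listed in decreasing order after the $\$$. Under this arrangement the pairs to be compared are \emph{nested} rather than adjacent --- $(\ID_0,\ID_1)$ outermost, $(\ID_{2k},\ID_{2k+1})$ innermost --- so the entire first half can be pushed as one long base-$b$ number (pushing the predicted successors) and the entire second half popped against it in exactly the LIFO order that the $\PAL$ doubling/halving gadget of Proposition~\ref{prop:aflpal} implements. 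All the comparisons become a single palindromic comparison, verified by one pair of registers and one zero test in $C$. If you want to rescue the linear format you need a genuinely new accumulation mechanism; switching to the mirrored encoding is the intended and much simpler fix. (Two smaller points: no nondeterministic ``guessing of the input'' is needed, since $c_0$ is read off the word and checked for well-formedness while $h$ extracts the input; and you may take $M$ deterministic w.l.o.g.\ because $L$ is merely assumed Turing-recognizable.)
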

\begin{proof}
  This follows closely~\cite[Theorem 1]{baker-book74}, thus we only sketch
  the proof.  Let $M$ be a one-tape Turing machine, and suppose w.l.o.g.\
  that $M$ makes an odd number of steps on any accepting computation and that
  $M$ only halts on accepting computation.  Let $L_1$ be the set of strings
  \begin{align}
    \ID_0\#\ID_2\#\cdots\#\ID_{2k}\$(\ID_{2k+1})^R\#\cdots\#(\ID_3)^R\#(\ID_1)^R\label{eqn:conf}
  \end{align}
  such that the $\ID_i$'s are instantaneous descriptions of configurations of
  $M$, $\ID_0$ is an initial configuration, $\ID_{2k+1}$ is an accepting
  configuration, and for all $i$, $\ID_{2i+1}$ is the configuration which
  would be reached in one step from configuration $\ID_{2i}$.  Similarly,
  $L_2$ is the same as $L_1$ but checks that $\ID_{2i}$ is the successor of
  $\ID_{2i-1}$.  These languages are in $\LQDAPA$, using a technique similar
  to Proposition~\ref{prop:aflpal}.  Thus $L_1 \cap L_2$ is a language of
  $\LQDAPA$ which encodes the strings of the type of~\ref{eqn:conf} such that
  the $\ID_i$'s encode an accepting computation of $M$.  Now if each string
  $\ID_i$, $i > 0$, is over an alphabet which is disjoint from the alphabet
  which encodes the initial instantaneous description, then the morphism $h$
  which erases all of the symbols in a string of $L_1 \cap L_2$ except those
  representing the input is such that $L(M) = h(L_1 \cap L_2)$.
\end{proof}

\begin{corollary}
  Neither $\LKAPA$ nor $\LKDAPA$ is closed under morphisms.  
\end{corollary}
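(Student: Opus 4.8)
The plan is to derive the statement immediately from Lemma~\ref{lem:re} together with the fact, established just above, that every \KAPA language is context-sensitive. First I would fix a Turing-recognizable language $L$ that is \emph{not} context-sensitive; such an $L$ exists because $\CSL=\NSPACE[n]$ is a strict subclass of the Turing-recognizable languages (by diagonalization, or by the space hierarchy theorem). Lemma~\ref{lem:re} then supplies $L_1,L_2\in\LQDAPA$ and a morphism $h$ with $L=h(L_1\cap L_2)$; moreover its proof shows that $L_1\cap L_2$ itself lies in $\LQDAPA$.

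Next I would chase the inclusions so that one argument handles both $\bbk=\bbn$ and $\bbk=\bbq$. By Proposition~\ref{prop:qnapa} we have $\LQDAPA\subseteq\LNDAPA$, hence $L_1\cap L_2\in\LQDAPA\subseteq\LKDAPA\subseteq\LKAPA$ for either choice of $\bbk$. On the other hand, combining $\LNAPA\subseteq\CSL$ with Proposition~\ref{prop:qnapa} ($\LQAPA\subseteq\LNAPA$) yields $\LKAPA\subseteq\CSL$, and a fortiori $\LKDAPA\subseteq\CSL$, for both $\bbk$.

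Then I would conclude by contradiction: if $\LKAPA$ (resp.\ $\LKDAPA$) were closed under morphisms, then $L=h(L_1\cap L_2)$ would belong to $\LKAPA$ (resp.\ $\LKDAPA$), hence to $\CSL$, contradicting the choice of $L$. Therefore neither $\LKAPA$ nor $\LKDAPA$ is closed under morphisms. Since Lemma~\ref{lem:re} does all the real work, there is no substantive obstacle here; the only step needing a line of justification is the existence of a Turing-recognizable, non-context-sensitive language, which is standard, and one should state explicitly that the same $L_1,L_2$ serve for both $\bbk$ via Proposition~\ref{prop:qnapa}.
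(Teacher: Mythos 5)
Your argument is correct and is precisely the intended one: the paper states this corollary without proof, leaving it to follow from Lemma~\ref{lem:re} combined with $\LKAPA\subseteq\CSL$ and the inclusions of Proposition~\ref{prop:qnapa}, exactly as you spell out. The one detail worth noting explicitly is that the morphism $h$ from Lemma~\ref{lem:re} is erasing, which is consistent with Proposition~\ref{prop:clotapa}'s closure under \emph{nonerasing} morphisms.
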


\begin{corollary}
  The emptiness problem is undecidable for DetAPA.
\end{corollary}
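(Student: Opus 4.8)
The plan is to derive the corollary directly from Lemma~\ref{lem:re}, by reducing the emptiness problem for Turing machines --- which is classically undecidable --- to the emptiness problem for DetAPA.

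First I would feed a Turing machine $M$ to Lemma~\ref{lem:re}, obtaining effectively two languages $L_1, L_2 \in \LQDAPA$ and a morphism $h$ with $L(M) = h(L_1 \cap L_2)$. The point is that emptiness is insensitive to applying a morphism: for any morphism $h$ and any language $K$ one has $h(K) = \emptyset$ iff $K = \emptyset$, so DetAPA being \emph{not} closed under morphisms (the preceding corollary) is irrelevant here. Hence $L(M) = \emptyset$ if and only if $L_1 \cap L_2 = \emptyset$.

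Next I would realise $L_1 \cap L_2$ as a single DetAPA. By Proposition~\ref{prop:clotapa}, $\LKDAPA$ is effectively closed under intersection, so $L_1 \cap L_2 \in \LQDAPA$; and by Proposition~\ref{prop:qnapa}, $\LQDAPA \subseteq \LNDAPA$, so one effectively builds a DetAPA recognizing $L_1 \cap L_2$. Composing these constructions yields a computable map sending $M$ to a DetAPA whose language is empty exactly when $L(M) = \emptyset$; a decision procedure for emptiness of DetAPA would therefore decide emptiness for Turing machines, a contradiction. This proves the corollary.

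The argument has essentially no hard step: it reduces to checking effectiveness (Lemma~\ref{lem:re} and the closure statements of Propositions~\ref{prop:clotapa} and~\ref{prop:qnapa} are all asserted effectively) and to picking a genuinely undecidable source problem. The only point requiring a little care is that the reduction needs $\LKDAPA$ closed under \emph{intersection}, not merely union, so that $L_1 \cap L_2$ can be presented as one automaton --- which is precisely what Proposition~\ref{prop:clotapa} supplies; if only union were available the argument would not go through.
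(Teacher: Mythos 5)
Your proof is correct and follows essentially the same route as the paper: both derive the result from Lemma~\ref{lem:re} together with the effective closure of $\LQDAPA$ under intersection and the (effective) inclusion $\LQDAPA \subseteq \LNDAPA$. The only difference is the choice of undecidable source problem: the paper reduces the acceptance problem, using $x \in L(M)$ iff $L_1 \cap L_2 \cap h^{-1}(x) \neq \emptyset$ (hence also invoking closure under inverse morphism), whereas you reduce Turing-machine emptiness directly via $L(M) = \emptyset$ iff $L_1 \cap L_2 = \emptyset$; both reductions are valid.
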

\begin{proof}
  Let $L \subseteq \Sigma^*$ be a Turing-recognizable language, and $x \in
  \Sigma^*$.  Let $L_1, L_2, h$ be given by Lemma~\ref{lem:re} for $L$.  Then
  $x \in L$ iff $L_1 \cap L_2 \cap h^{-1}(x)$ is nonempty, the latter being
  in $\LQDAPA$.
\end{proof}

Recall that $\LKAPA$ is closed under concatenation.  The previous property and
the fact that a language $L$ is empty iff $L\cdot\Sigma^*$ is finite
implies:\squeezetopsep
\begin{corollary}%
  Finiteness is undecidable for \KAPA.
\end{corollary}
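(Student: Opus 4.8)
The plan is to reduce finiteness for $\KAPA$ to the emptiness problem for $\KAPA$, which the preceding corollary established is undecidable (via the reduction from the halting problem in Lemma~\ref{lem:re}). The reduction hinges on a simple observation about concatenation: for any language $L \subseteq \Sigma^*$, the language $L \cdot \Sigma^*$ is infinite whenever $L$ is nonempty (since $w\Sigma^* $ is already infinite for any single $w \in L$, as $\Sigma$ is nonempty — one should assume $|\Sigma| \geq 1$ throughout), and $L\cdot\Sigma^*$ is empty, hence finite, when $L = \emptyset$. So $L$ is empty if and only if $L\cdot\Sigma^*$ is finite.

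Concretely, I would argue as follows. Suppose, for contradiction, that finiteness were decidable for $\KAPA$. Given an arbitrary $\KAPA$ $(A, U, C)$ recognizing a language $L$, first note that $\Sigma^*$ is recognized by a trivial (one-state, deterministic) $\KAPA$: take the automaton with a single accepting state looping on every letter, the identity update morphism, and constraint set $C = \bbk^d$. Since $\LKAPA$ is effectively closed under concatenation by Proposition~\ref{prop:clotapa}, we can effectively construct a $\KAPA$ recognizing $L \cdot \Sigma^*$. Applying the hypothetical finiteness decision procedure to this $\KAPA$ would then decide whether $L = \emptyset$, contradicting the undecidability of emptiness for $\KAPA$ — which itself follows because $\LKDAPA \subseteq \LKAPA$ and emptiness is already undecidable for $\mathrm{DetAPA}$.

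One subtlety worth flagging: the previous corollary is stated for $\mathrm{DetAPA}$ (equivalently, for all $\bbk$), so to get undecidability of finiteness for $\KAPA$ one only needs emptiness-undecidability for $\KAPA$, which is immediate from the deterministic case by inclusion. A second point is that the closure of $\LKAPA$ under concatenation (as opposed to $\LKDAPA$, which is not closed under concatenation) is exactly what makes the argument go through for the nondeterministic model; the statement is deliberately only about $\KAPA$, not $\KDAPA$. The only genuine obstacle — really a triviality here — is confirming that $L \cdot \Sigma^*$ is infinite as soon as $L \neq \emptyset$, which requires $\Sigma \neq \emptyset$; this is a standing assumption for alphabets in the paper, so no additional work is needed. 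Everything else is bookkeeping on top of Proposition~\ref{prop:clotapa} and the emptiness corollary.
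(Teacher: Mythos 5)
Your proof is correct and follows exactly the paper's argument: the paper likewise reduces emptiness to finiteness via the observation that $L$ is empty iff $L\cdot\Sigma^*$ is finite, using the closure of $\LKAPA$ under concatenation and the undecidability of emptiness inherited from the DetAPA case. Your additional remarks (nonemptiness of $\Sigma$, effectiveness of the concatenation construction) are sensible bookkeeping that the paper leaves implicit.
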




\section{Parikh automata on letters}\label{sec:restr}

The PA \emph{on letters} requires that the ``weight'' of a transition only
depend on the input letter from $\Sigma$ triggering the transition.  In a way
similar to the CA characterization of PA, we characterize PA on letters
solely in terms of automata on $\Sigma$ and semilinear sets.  This model
helps us in proving a standard lemma in language theory, in the context of
PA.

\begin{definition}[Parikh automaton on letters]%
  A \emph{Parikh automaton on letters (LPA)} is a PA $(A, C)$ where whenever
  $(a, \bar{v_1})$ and $(a, \bar{v_2})$ are labels of some transitions in
  $A$, then $\bar{v_1} = \bar{v_2}$.  We write $\LLPA$ (resp. $\LDLPA$) for
  the class of languages recognized by LPA (resp. LPA which are DetPA).
\end{definition}
Now let $(A, C)$ be a LPA.  We may determinize $A$ in the standard way and,
although this is not the case with a PA, the resulting LPA is deterministic,
thus:\squeezetopsep
\begin{proposition}%
  $\LLPA = \LDLPA$.
\end{proposition}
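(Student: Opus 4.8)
The plan is to show the two inclusions $\LLPA \subseteq \LDLPA$ and $\LDLPA \subseteq \LLPA$; the latter is immediate from the definitions, so the content is in the former. Given an LPA $(A, C)$ of dimension $d$ over $\Sigma \times D$, I would first record the defining feature: the vector part of a transition is a function of its $\Sigma$-label only, so there is a well-defined map $\nu\colon \Sigma \to D$ with every transition on letter $a$ carrying the vector $\nu(a)$. The key consequence is that the extended Parikh image of a run labeled by a word $w$ does \emph{not} depend on the run at all --- it is exactly $\sum_{i} \nu(w_i)$, a quantity determined by $w$ (indeed by $\pkh(w)$, since $\sum_i \nu(w_i)$ is a fixed $\bbn$-linear image of the Parikh vector of $w$).

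Next I would carry out the standard subset construction on the underlying $\Sigma$-automaton of $A$ (forgetting the vector annotations), obtaining a deterministic automaton $A'$ over $\Sigma$ with $L(A') = L(A)$ as $\Sigma$-languages. Then I re-annotate $A'$: a transition of $A'$ on letter $a$ is given the vector $\nu(a)$. Because this vector depends only on $a$, the reannotated $A'$ is again an LPA, and it is deterministic as a PA precisely because $A'$ is deterministic as an automaton and each state/letter pair now has a unique outgoing $(q', \bar v)$. I would then argue $L(A', C) = L(A, C)$: a word $w$ is accepted by $(A,C)$ iff some accepting run of $A$ on $w$ has extended Parikh image in $C$; by the observation above that extended Parikh image is $\sum_i \nu(w_i)$ regardless of the run, so the condition reduces to ``$w \in L(A)$ and $\sum_i \nu(w_i) \in C$'', and exactly the same characterization holds for $(A', C)$ since $L(A') = L(A)$ and the vector bookkeeping is identical. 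Hence $L \in \LDLPA$.

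The only subtlety --- and the point the phrase ``although this is not the case with a PA'' is flagging --- is that for a general PA the subset construction fails to preserve the language because different runs on the same word can accumulate different vectors, so collapsing runs loses information; here the LPA restriction is exactly what makes the vector contribution run-independent, so determinization is harmless. I would therefore make sure to state explicitly why the reannotated $A'$ still satisfies the LPA condition (same letter $\Rightarrow$ same vector, trivially true since we defined the annotation via $\nu$) and why determinism of the automaton now yields determinism of the PA. No hard estimates are involved; the proof is essentially the remark already made in the text, and I would keep it to a few lines.

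\begin{proof}
  Since every DetPA is a PA, $\LDLPA \subseteq \LLPA$ is immediate. For the
  converse, let $(A, C)$ be an LPA of dimension $d$ over $\Sigma \times D$.
  By the defining condition there is a map $\nu\colon \Sigma \to D$ such that
  every transition of $A$ on a letter $a$ carries the vector $\nu(a)$.
  Consequently, if $\pi$ is any path of $A$ labeled by $w = w_1\cdots w_n$,
  its extended Parikh image equals $\sum_{i=1}^n \nu(w_i)$, a value depending
  only on $w$ (indeed only on $\pkh(w)$). Hence a word $w$ lies in $L(A,C)$
  iff $w \in L(A)$ (as $\Sigma$-languages) and $\sum_{i=1}^n \nu(w_i) \in C$.

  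Let $B$ be the $\Sigma$-automaton obtained from $A$ by dropping the vector
  annotations, and let $B'$ be the deterministic automaton obtained from $B$
  by the usual subset construction, so $L(B') = L(B) = L(A)$. Reannotate each
  transition of $B'$ on a letter $a$ with the vector $\nu(a) \in D$; call the
  resulting PA $(A', C)$. By construction, transitions of $A'$ on the same
  letter carry the same vector, so $(A', C)$ is an LPA; and since $B'$ is a
  deterministic automaton, for every state $q$ and every $a \in \Sigma$ there
  is at most one pair $(q', \nu(a))$ with $(q,(a,\nu(a)),q')$ a transition, so
  $(A', C)$ is a DetPA. Finally, the extended Parikh image of a path of $A'$
  labeled by $w = w_1\cdots w_n$ is again $\sum_{i=1}^n \nu(w_i)$, so by the
  characterization above, $w \in L(A', C)$ iff $w \in L(B') = L(A)$ and
  $\sum_{i=1}^n \nu(w_i) \in C$, i.e.\ iff $w \in L(A, C)$. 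Thus $L(A,C) \in
  \LDLPA$, and $\LLPA = \LDLPA$.
\end{proof}
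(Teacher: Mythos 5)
Your proof is correct and is exactly the argument the paper intends: the paper gives only the one-line remark that the standard determinization preserves the language because the LPA condition makes the vector contribution depend on the letter alone, and your write-up is a careful elaboration of that same observation (subset construction on the underlying $\Sigma$-automaton, reannotation via $\nu$, and run-independence of the extended Parikh image).
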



For $R \subseteq \Sigma^*$ and $C \subseteq \bbn^{|\Sigma|}$, define $R\restr
C = \{w \in R \st \pkh(w) \in C\}$.  Then:\squeezetopsep
\begin{proposition}%
  Let $L \subseteq \Sigma^*$ be a language.  The following are equivalent:
  \begin{compactenum}[(i)]
  \item $L \in \LLPA$;
  \item There exist a regular language $R \subseteq \Sigma^*$ and a
    semilinear set $C \subseteq \bbn^{|\Sigma|}$ such that $R \restr C = L$.
  \end{compactenum}
\end{proposition}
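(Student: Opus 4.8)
The plan is to establish both implications directly, the common thread being that for an LPA the extended Parikh image of an accepted word is a \emph{fixed linear function} of the ordinary Parikh image of that word.

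For $(ii)\Rightarrow(i)$, I would start from a finite automaton for $R$ over $\Sigma=\{a_1,\dots,a_n\}$ and relabel every transition that reads $a_i$ by $(a_i,\bar e_i)$, taking $D=\{\bar e_1,\dots,\bar e_n\}$. Any two transitions on the same letter then carry the same vector, so this is an LPA; and by the Remark preceding the PA definition the extended Parikh image of an accepted word $w$ is exactly $\pkh(w)$. Hence the recognized language is $\{w\in R:\pkh(w)\in C\}=R\restr C=L$.

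For $(i)\Rightarrow(ii)$, let $(A,C)$ be an LPA of dimension $d$ over $\Sigma\times D$ with $D\subseteq\bbn^d$. By the defining restriction there is a map $g$ sending each letter occurring on a transition of $A$ to the unique vector of $D$ it is paired with; I would extend $g$ to all of $\Sigma$ by $\bar 0$ and let $\phi\colon\bbn^n\to\bbn^d$ be the linear map with $\phi(\bar e_i)=g(a_i)$. For any $\omega\in L(A)$ with projection $w=w_1\cdots w_m$ on $\Sigma$, every pair $(w_i,\bar v_i)$ labels a transition, so $\bar v_i=g(w_i)$, whence the extended Parikh image of $\omega$ equals $\sum_i g(w_i)=\phi(\pkh(w))$ — a value depending on $\omega$ only through $w$. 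Therefore, writing $R$ for the projection on $\Sigma$ of $L(A)$, a regular language since it is the image of the regular language $L(A)$ under a length-preserving morphism, one gets $L(A,C)=\{w\in R:\phi(\pkh(w))\in C\}$. Finally I set $C'=\phi^{-1}(C)$: since $C$ is semilinear, hence $\bbn$-definable, and $\phi$ is given by linear expressions, substituting those expressions into a Presburger formula for $C$ shows $C'$ is (effectively) semilinear, and $L(A,C)=\{w\in R:\pkh(w)\in C'\}=R\restr{C'}$, as required.

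The argument is mostly bookkeeping; the one delicate point — and the reason the characterization holds for LPA but not for PA — is precisely that the ``weight depends only on the letter'' condition forces the extended Parikh image to factor through $\pkh$ via a single linear map $\phi$, so a semilinear constraint on $\bbn^d$ pulls back to a semilinear constraint on $\bbn^{|\Sigma|}$. For a general PA this breaks down, since distinct occurrences of the same letter may carry distinct vectors and the extended Parikh image is then not a function of $\pkh(w)$ alone.
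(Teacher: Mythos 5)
Your argument is correct. The paper in fact omits the proof of this proposition entirely, but what you give is exactly the intended argument: for $(ii)\Rightarrow(i)$, relabel the transitions of an automaton for $R$ with the unit vectors $\bar e_i$ so that the extended Parikh image coincides with $\pkh$; for $(i)\Rightarrow(ii)$, use the ``weight depends only on the letter'' condition to factor the extended Parikh image as $\phi\circ\pkh$ for a fixed linear $\phi$ with coefficients in $\bbn$, take $R$ to be the (regular) projection of $L(A)$, and pull $C$ back to the semilinear set $\phi^{-1}(C)$ via substitution into a Presburger formula. Your closing remark correctly isolates why this fails for general PA.
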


\squeezeafterenv The following property will be our central tool for showing
nonclosure results:\squeezetopsep
\begin{lemma}\label{lem:lpareg}%
  Let $L \in \LLPA$.  For any regular language $E$:%
  $$L \cap E \text{ is not regular} \quad\Rightarrow\quad (\exists w \in E)[c(w) \cap
  L = \emptyset].$$
\end{lemma}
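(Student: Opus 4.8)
The plan is to use the characterization from the preceding proposition: $L \in \LLPA$ means $L = R \restr C$ for some regular $R \subseteq \Sigma^*$ and semilinear $C \subseteq \bbn^{|\Sigma|}$. So I may write $L = \{w \in R \st \pkh(w) \in C\}$. The goal is to show that if $L \cap E$ is not regular (for $E$ regular), then there is some $w \in E$ whose entire commutative class $c(w)$ misses $L$.

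**First I would** argue by contraposition: assume that for every $w \in E$ we have $c(w) \cap L \neq \emptyset$, and show $L \cap E$ is regular. Fix $w \in E$. Since $c(w) \cap L \neq \emptyset$, there is some $w' \in R$ with $\pkh(w') = \pkh(w)$ and $\pkh(w') \in C$; but $\pkh(w') = \pkh(w)$, so in fact $\pkh(w) \in C$. This is the key observation: the assumption forces $\pkh(E) \subseteq C$. Consequently, for any $w \in E$, the condition "$\pkh(w) \in C$" is automatically satisfied, so $L \cap E = R \cap E \cap \{w \st \pkh(w)\in C\} = R \cap E$ — wait, I must be careful: $L \cap E = \{w \in E \st w \in R \land \pkh(w) \in C\}$, and since $\pkh(w) \in C$ holds for all $w \in E$, this equals $R \cap E$, which is regular as the intersection of two regular languages. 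That contradicts the non-regularity of $L \cap E$, completing the contrapositive.

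**The main subtlety** — not really an obstacle, but the point to get right — is the direction of the implication "$c(w) \cap L \neq \emptyset \Rightarrow \pkh(w) \in C$." It works precisely because membership in $L$ for LPA languages factors through the Parikh image on the regular part: any witness $w' \in c(w) \cap L$ has the same Parikh image as $w$, and the semilinear constraint $C$ only sees that Parikh image. This is exactly where the "on letters" restriction is used (it is what makes the characterization $L = R \restr C$ available). I would present the proof in the contrapositive form: assume $(\forall w \in E)[c(w) \cap L \neq \emptyset]$, deduce $\pkh(E) \subseteq C$, conclude $L \cap E = R \cap E$ is regular. A one-paragraph write-up suffices; there are no calculations to grind through.
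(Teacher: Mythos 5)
Your proof is correct and is essentially the paper's argument read in contrapositive form: the paper notes that $L\cap E\subseteq R\cap E$ with the former nonregular and the latter regular, extracts a witness $w\in(R\cap E)\setminus L$ with $\pkh(w)\notin C$, and concludes $c(w)\cap L=\emptyset$, which is exactly the same use of the characterization $L=R\restr C$ as your deduction that $\pkh(E)\subseteq C$ forces $L\cap E=R\cap E$. No gap; nothing further to add.
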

\begin{proof}
  Let $R \subseteq \Sigma^*$ be a regular language and $C \subseteq
  \bbn^{|\Sigma|}$ be a semilinear set.  Define $L = R \restr C$.  Let $E$ be
  a regular language such that $L \cap E$ is not regular.  As $L \subseteq
  R$, we have $(L \cap E) \subseteq (R \cap E)$.  The left hand side being
  non regular, those two sets differ.  Thus, let $w \in (R \cap E)$ such that
  $w \not\in L \cap E$, we have $w \not\in L$.  Hence, $w \in (R \setminus
  L)$, which implies that $\pkh(w) \not\in C$, and in turn, $c(w) \cap L =
  \emptyset$.
\end{proof}

\Remark Lemma~\ref{lem:lpareg} holds with, e.g., ``context-free'' in lieu of
``regular'', but the version given will suffice for our purposes.

\mbox{}
\begin{proposition}
  \begin{compactenum}[(1)]
  \item $\LLPA$ is not closed under union, complement, squaring, nonerasing
    morphisms;
  \item $\LLPA$ is closed under intersection, inverse morphisms, commutative
    closure.
  \end{compactenum}
\end{proposition}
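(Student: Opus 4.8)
The plan is to use the preceding proposition's characterization (``$L\in\LLPA$ iff $L=R\restr C$ for a regular $R$ and a semilinear $C$'') for the positive part~(2), and Lemma~\ref{lem:lpareg} as the only tool for the negative part~(1).

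For~(2), after padding constraint sets with zeros so that all languages in sight live over a single alphabet $\Sigma$, I would observe the following. \emph{Intersection:} $(R_1\restr{C_1})\cap(R_2\restr{C_2})=(R_1\cap R_2)\restr{(C_1\cap C_2)}$, and regular languages and semilinear sets are each closed under intersection. \emph{Inverse morphisms:} for $h\colon\Gamma^*\to\Sigma^*$, letting $M$ be the matrix whose columns are the Parikh images of the $h(b)$, $b\in\Gamma$, one has $\pkh(h(v))=M\pkh(v)$, hence $h^{-1}(R\restr C)=h^{-1}(R)\restr{\{\bar x:M\bar x\in C\}}$; the constraint set is the preimage of a semilinear set under a linear map (equivalently, ``$M\bar x\in C$'' is Presburger in $\bar x$), hence semilinear, and $h^{-1}(R)$ is regular. \emph{Commutative closure:} $\pkh(R\restr C)=\pkh(R)\cap C$, which is semilinear since $\pkh(R)$ is (Parikh's theorem), so $c(R\restr C)=\Sigma^*\restr{(\pkh(R)\cap C)}\in\LLPA$.

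For~(1) the common device is the contrapositive of Lemma~\ref{lem:lpareg}: if some regular $E$ satisfies $E\subseteq c(L')$ (every word of $E$ has a Parikh-twin in $L'$) while $L'\cap E$ is not regular, then $L'\notin\LLPA$. I would instantiate this as follows. \emph{Union:} $L_1=\{a^n\#b^n:n\ge0\}$ and $L_2=\{b^i\#a^j:i\ne j\}$ are in $\LLPA$ (shapes $a^*\#b^*$, $b^*\#a^*$; semilinear constraints ``letter counts equal'' resp.\ ``letter counts unequal'', plus one $\#$); since $\pkh(L_1)\cup\pkh(L_2)$ equals all of $\{(x,y,1)\}$, we have $c(L_1\cup L_2)=\{a,b\}^*\#\{a,b\}^*=:E$, while $(L_1\cup L_2)\cap E=L_1\cup L_2$ is nonregular because its intersection with $a^*\#b^*$ is $\{a^n\#b^n\}$; hence $L_1\cup L_2\notin\LLPA$. \emph{Complement} then follows formally from De~Morgan together with closure under intersection. \emph{Nonerasing morphisms:} the length-preserving morphism $h$ with $h(b)=h(a)=a$ and $h(\#)=\#$ maps $L_1$ to $\{a^n\#a^n:n\ge0\}$, whose commutative closure contains the regular set $E=(aa)^*\#(aa)^*$, while $\{a^n\#a^n\}\cap E=\{a^{2k}\#a^{2k}:k\ge0\}$ is nonregular; hence $\{a^n\#a^n\}\notin\LLPA$. \emph{Squaring:} $L=\{a,b\}^*\cup\{a^n\#b^n:n\ge0\}$ is in $\LLPA$ (the two pieces have distinct $\#$-counts, so their constraints merge into one semilinear set); $L\cdot L$ contains the cross term $\{a,b\}^*\cdot\{a^n\#b^n\}$; one checks $\{a,b\}^*\#b^*\subseteq c(L\cdot L)$ (a word $w$ with one $\#$ is twinned by $a^{|w|_a}b^{|w|_b}\#\in L\cdot L$), whereas $(L\cdot L)\cap a^*ba^*\#b^*=\{a^jba^k\#b^m:m\le k\}$ is nonregular by a standard pumping argument, so $L\cdot L\notin\LLPA$.

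The routine parts are: verifying each exhibited language lies in $\LLPA$, the Parikh-image computations, the De~Morgan step, and the two pumping arguments. The step I expect to need the most care is squaring: the witness $E$ must be a genuinely \emph{non-rigid} regular subset of $c(L\cdot L)$ — this fails when $L$ is too rigid (for $\{a^n\#b^n\}$ by itself every regular subset of the commutative closure is finite, killing any use of Lemma~\ref{lem:lpareg}), and the role of the regular factor $\{a,b\}^*$ is precisely to supply that looseness; locating such an ingredient is the crux of part~(1).
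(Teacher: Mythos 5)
Your proof is correct and follows essentially the same route as the paper: part (2) via the $R\restr C$ characterization (the paper defers to ``the usual proofs for finite automata'' and to Proposition~\ref{prop:clotpa} for commutative closure), and part (1) via the contrapositive of Lemma~\ref{lem:lpareg} applied to explicit witnesses, exactly as you do. The paper merely picks different, slightly more economical witnesses --- $\{w : |w|_a=|w|_b\}\cup b\{a,b\}^*$ for union, $\{a^mb^n : m\neq n\}$ for squaring (with $E=(a^+b^+)^2$ and Parikh-twin $a^{|w|_a}b^0a^0b^{|w|_b}$), and non-closure under nonerasing morphisms obtained as a corollary of the squaring example via a four-letter preimage --- but the mechanism is identical and all of your instances check out.
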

\begin{proof}
  \emph{(1).}\quad (Union.) Let $L_1 = \{w \in \{a,b\}^* \st |w|_a=|w|_b\}$
  and $L_2 = b(a\cup b)^*$ be two languages of LPA.  Suppose $L = L_1 \cup
  L_2 \in \LLPA$.  Let $E$ be the regular language $(a^+b^+)$.
  By the pumping lemma, $L \cap E$ is not regular, thus
  Lemma~\ref{lem:lpareg} states there exists $w \in E$ such that $c(w) \cap L
  = \emptyset$.  But $u = b^{|w|_b}a^{|w|_a} \in c(w)$ and $u \in L$, a
  contradiction.\\
  (Complement.) Note that $L$ is the complement in $\{a,b\}^*$ of $\{a^mb^n
  \st m > 0 \land m \neq n\}$, which is the language of a LPA.\\
  (Squaring.)  Let $L = \{a^mb^n \st m \neq n\} \in \LLPA$.  Suppose $L^2 \in
  \LLPA$, and let $E = (a^+b^+)^2$.  Again, $L \cap E$ is not regular,
  Lemma~\ref{lem:lpareg} implies there exists $w \in E$ such that $c(w) \cap
  E = \emptyset$.  But $a^{|w|_a}b^0a^0b^{|w|_b} \in c(w) \cap L$, a
  contradiction.\\
  (Nonerasing morphisms.)  We simply note that $L$ is the image of the
  language $\{a_1^mb_1^na_2^rb_2^s \st m \neq n \land r \neq s\}$ by the
  morphism $h(a_i) = a, h(b_i) = b$.

  \emph{(2).}\quad The proofs for the first two properties follow the usual
  proofs for finite automata.  Closure under the commutative closure operator
  follows from the proof of Proposition~\ref{prop:clotpa}.
\end{proof}

Finally, we use LPA to show the following property, which has a standard form
known to be true for regular~\cite{latteux78} and context-free
languages~\cite{blattner-latteux81} (the latter recently reworked
in~\cite{ganty-majumdar-monmege10}).  This property is sometimes called
\emph{Parikh-boundedness}:
\begin{proposition}
  For any $L \in \LPA$, there exists a bounded language $L' \in \LPA$ such
  that $L' \subseteq L$ and $\pkh(L) = \pkh(L')$.
\end{proposition}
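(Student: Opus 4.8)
The plan is to exploit the LPA characterization just established: every $L \in \LPA$ is recognized by some CA $(A,C)$, i.e.\ $L = \{\mu(\pi) \st \pi \in \apaths(A) \land \pkh(\pi) \in C\}$ where $\pkh$ is the Parikh image over the transitions $\delta$ of $A$. First I would pass to $\apaths(A)$, a \emph{regular} language over $\delta$. By Parikh's theorem, $\pkh(\apaths(A))$ is semilinear, and since $C$ is semilinear, so is $K := \pkh(\apaths(A)) \cap C \subseteq \bbn^{|\delta|}$. Writing $K$ as a finite union of linear sets $\bar{a}_j + \{k_1\bar{b}_{j,1} + \cdots + k_{r_j}\bar{b}_{j,r_j} \st k_i \in \bbn\}$, I would pick, for each $j$, one accepting path $\pi_{j,0}$ with $\pkh(\pi_{j,0}) = \bar{a}_j$ and, for each period vector $\bar{b}_{j,i}$, a closed walk (cycle) $\gamma_{j,i}$ in $A$ whose transition-Parikh-image is exactly $\bar{b}_{j,i}$ — such a cycle exists because any vector in $\bbn^{|\delta|}$ that is an $\bbn$-combination of cycle vectors and is "cycle-supported" can be realized as a closed walk; more carefully, the period vectors lie in the lattice/cone generated by cycles reachable along accepting runs, and one can choose the $\pi_{j,0}$ and the base points so that each $\gamma_{j,i}$ is attachable to $\pi_{j,0}$.

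The key observation is that if I fix the \emph{order} in which the base path and the cycles are traversed, the set of paths I obtain is a \emph{bounded} subset of $\delta^*$: for each $j$ it is contained in $\pi_{j,0}^{(1)} \gamma_{j,1}^* \pi_{j,0}^{(2)} \gamma_{j,2}^* \cdots$ (splitting $\pi_{j,0}$ around the attachment points of the cycles), hence in a product of finitely many words-over-$\delta$ starred. Taking the union over the finitely many $j$, I get a bounded language $P \subseteq \apaths(A)$ with $\pkh(P) = \pkh(\apaths(A)) \cap C = K$. Then $L' := \mu(P)$ is the image of a bounded language under the letter-to-letter-ish morphism $\mu$, so $L'$ is bounded (the morphic image of $w_1^*\cdots w_n^*$ sits inside $\mu(w_1)^*\cdots \mu(w_n)^*$). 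Moreover $L' \subseteq L$ since every $\pi \in P$ is accepting with $\pkh(\pi) \in C$, and $\pkh(L') = \pkh(L)$ because $\pkh(\mu(\pi))$ depends only on $\pkh(\pi)$ (each transition contributes a fixed letter), and $\{\pkh(\pi) \st \pi \in P\} = K = \{\pkh(\pi) \st \pi \in \apaths(A), \pkh(\pi) \in C\} \supseteq$ the set realizing every word of $L$.

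Finally I must check $L' \in \LPA$. Since $P$ is bounded and regular-image-realizable, one shows $P$ is effectively a CA language (indeed a regular language, being a finite union of products of fixed words and starred fixed words, intersected with $\apaths(A)$, all regular), and $\mu$ is a morphism, and $\LPA$ is closed under morphisms (noted in the closure table); alternatively, build the CA directly on the bounded automaton with the induced constraint. The main obstacle I anticipate is the cycle-realizability step: ensuring that each semilinear period vector $\bar{b}_{j,i}$ of $K$ can genuinely be written as the transition-Parikh-image of a \emph{single closed walk} that attaches at the right vertex of the chosen base path, rather than merely as an $\bbn$-combination of several vertex-disjoint cycles scattered across $A$. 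This requires a careful normal-form argument on semilinear sets arising as Parikh images of regular languages — essentially re-deriving the "bounded sub-language with same Parikh image" result for regular languages (the Parikh-boundedness of regular languages, cf.\ \cite{latteux78}) and then lifting it through the CA constraint, observing that intersecting $\pkh(\apaths(A))$ with the semilinear set $C$ only restricts which base-path/period combinations survive and does not destroy the cycle structure.
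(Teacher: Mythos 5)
Your overall strategy --- work at the level of the constrained automaton, extract a bounded sublanguage of accepting paths with the same Parikh image, and push it through $\mu$ --- is exactly the paper's. But the step you yourself flag as ``the main obstacle'' is a genuine gap, not a deferred technicality. Once you intersect $\pkh(\apaths(A))$ with $C$ and take an arbitrary semilinear representation of $K$, that representation carries no path structure: a period vector of $K$ need not be the Parikh image of any single closed walk, nor even of a family of cycles all attachable to one common base path realizing the corresponding base vector. Already a constraint such as ``loop $\alpha$ at state $p$ is taken as often as loop $\beta$ at state $q$'' merges two cycle-periods based at different states into the single period $\pkh(\alpha)+\pkh(\beta)$, which is realized by no closed walk; and for a representation of $K$ produced abstractly by closure of semilinear sets under intersection, the base vectors and periods have no a priori relation to walks in $A$ at all. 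Repairing this amounts to re-proving Parikh-boundedness of regular languages in the harder, constrained setting, which your sketch does not do.

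The paper sidesteps the realizability problem entirely by reversing the order of the two operations. It first applies the Parikh-boundedness of \emph{regular} languages~\cite{latteux78} to the unconstrained path language $R=\apaths(A)$, obtaining a bounded regular $R'\subseteq R$ with $\pkh(R')=\pkh(R)$; only then does it impose the constraint, using the identity $\pkh(X\restr{C})=\pkh(X)\cap C$ to conclude $\pkh(R'\restr{C})=\pkh(R)\cap C=\pkh(R\restr{C})$. Then $L'=\mu(R'\restr{C})$ is bounded, lies in $\LPA$ by closure under morphisms, is contained in $L$, and has the same Parikh image because $\pkh(\mu(\pi))$ is a function of $\pkh(\pi)$. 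If you restructure your argument this way --- constraint last, not first --- the cycle-realizability issue disappears and the remaining steps of your proposal (boundedness of morphic images, $L'\subseteq L$, equality of Parikh images) go through essentially as you wrote them.
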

\begin{proof}
  Let $(A, C)$ be a constrained automaton, where $\delta$ is the transition
  set of $A$.  Let $R \subseteq \delta^*$ and $D \subseteq \bbn^{|\delta|}$
  be such that $\mu(R \restr D) = L(A, C)$.  As mentioned, we can find a
  bounded regular language $R' \subseteq R$ such that $\pkh(R') = \pkh(R)$.
  In particular, $\pkh(R' \restr D) = \pkh(R \restr D)$.  Closure under
  morphism of $\LPA$ implies that $L = \mu(R' \restr D)$ is a bounded
  language of $\LPA$ included in $L(A, C)$.  Moreover, $\pkh(L(A, C)) =
  \pkh(\mu(R \restr D))$, and thus, equals $\pkh(L)$.
\end{proof}


\section{Conclusion}\label{sec:conclusion}

The following table summarizes the current state of knowledge concerning the
PA and its variants studied here; a class contains the class below it, and a
language witnessing the separation is attached to the top class when we know
this containment to be strict.

\begin{center}
  \begin{tikzpicture}
    \matrix (m) [matrix of nodes, column sep=0mm, inner sep=0pt,
      text centered,nodes={text width=7cm, inner sep=4pt}] {
      |(CSL)|Context-Sensitive Languages\\
      |(CFL)|CFL\\
      |(APA)|$\bbn$-APA\\
      |(PA)|PA = RBCM\\
      |(DRBCM)|DetRBCM\\
      |(DPA)|DetPA\\
      |(LPA)|LPA\\
      |(REG)|REG\\
    };
    \foreach \c in {REG, LPA, DPA, DRBCM, PA, APA}
       \draw (\c.north east) -- (\c.north west);
  
    \draw [rounded corners=20]
      (REG.north west) -- ($(LPA.north west) + (1, 0)$)
      [rounded corners=15] -- ($(CFL.north west) + (2, -0.1)$)  -- (CFL.north)
      -- ($(CFL.north east) + (-2, -0.1)$) [rounded corners=20] -- ($(LPA.north east) - (1, 0)$)
      -- (REG.north east);

    \draw (m.south west) rectangle (m.north east);


    \useasboundingbox (m.north east) rectangle (m.south west);
  
    \draw[node distance=0.4cm]
          node (pal) [xshift=1cm,yshift=-0.4cm,right] at (m.north east) {PAL}
          node (palx) [xshift=1.5cm] at (APA) {$\times$}
          node (copy) [below=of pal.west,right] {COPY}
          node (copyx) [xshift=2.5cm] at (APA) {$\times$}
          node (sanbn) [below=of copy.west,right] {$\Sigma$ANBN}
          node (sanbnx) [xshift=1.8cm] at (PA) {$\times$}
          node (nsum) [below=of sanbn.west,right] {NSUM}
          node (nsumx) [xshift=2.6cm] at (DRBCM) {$\times$}
          node (anbn2) [below=of nsum.west,right] {$(a^nb^n)^2$}
          node (anbn2x) [xshift=2.2cm] at (DPA) {$\times$}
          node (anbncn2) [below=of anbn2.west,right] {$(a^nb^nc^n)^2$}
          node (anbncn2x) [xshift=2.9cm] at (DPA) {$\times$}
          node (anbn) [below=of anbncn2.west,right] {$a^nb^n$}
          node (anbnx) [xshift=2.7cm,yshift=.1cm] at (LPA) {$\times$}
          node (anbncn) [below=of anbn.west,right] {$a^nb^nc^n$}
          node (anbncnx) [xshift=3.4cm,yshift=-.02cm] at (LPA) {$\times$};

    \foreach \n/\nx in {pal/palx,copy/copyx,sanbn/sanbnx,nsum/nsumx,anbn2/anbn2x,anbncn2/anbncn2x,anbn/anbnx,anbncn/anbncnx}
      \draw [to path={-- ++(-0.5, 0) -- ($(\tikztotarget) + (0.5, 0)$) -- (\tikztotarget)}]
        (\n.west) edge (\nx.center);
  \end{tikzpicture}
\end{center}
\squeezeafterenv

An intriguing question is whether there are context-free or context-sensitive
languages outside $\LNAPA$.  How difficult is that question?  How about
$\LNDAPA$?  We have been unable to locate the latter class meaningfully.  In
particular, can $\LNDAPA$ be separated from $\LNAPA$?

The following summarizes the known closure and decidability properties for PA
variants, and proposes open questions:

\def\lead#1{\multicolumn{1}{|c|}{#1}}
\begin{center}
  \let\bf=\relax
  \setlength{\tabcolsep}{5pt}
  \addtolength{\extrarowheight}{2pt}
  \begin{tabular}{c|c|c|c|c|c|c|c|c||c|c|c|c|c|}
    \cline{2-14}
    & $\cup$  & $\cap$ & $\cdot$ & $\bar{\phantom{\cdot\;}}$ & $h$  & $h^{-1}$ & $c$ & ${}^*$ &
    $\emptyset$   & $\Sigma^*$   & fin.  & $\subseteq$ & reg.\\
    \hline
    \lead{LPA}  & N & Y & N & N & N & Y & Y & N & D & D & D & D & ?\\
    \hline
    \lead{DetPA}  & Y & Y & N & Y & N & Y & Y & N & D & D & D & D & ?\\
    \hline
    \lead{PA}   & Y & Y & Y & N & Y & Y & Y & N & D & U & U & U & U\\
    \hline
    \lead{DetAPA} & Y & Y & ? & Y & N & Y & ? & ? & U & U & ? & U & ?\\
    \hline
    \lead{APA}  & Y & Y & Y & ? & N & Y & ? & ? & U & U & U & U & U\\
    \hline
  \end{tabular}
\end{center}

Several questions thus remain open concerning the poorly understood (and
possibly overly powerful) affine PA model.  But surely we expect testing a
LPA or a DetPA for regularity to be decidable.  How can regularity be tested
for these models?  One avenue for future research towards this goal might be
characterizing $\LDPA$ along the lines of algebraic automata theory.

\emph{Acknowledgments.} The first author thanks L.~Beaudou, M.~Kaplan, and
A.~Lema\^itre.

\small
\let\oldthebibliography=\thebibliography
\let\endoldthebibliography=\endthebibliography
\renewenvironment{thebibliography}[1]{%
  \begin{oldthebibliography}{#1}%
    \setlength{\parskip}{0ex}%
    \setlength{\itemsep}{0ex}%
  }%
  {%
  \end{oldthebibliography}%
}

\bibliography{ncma11}


\end{document}